\documentclass[svgnames]{llncs}
\pdfoutput = 1

\usepackage[hidelinks]{hyperref}
\usepackage{amsmath,amssymb}
\usepackage{amsfonts}
\usepackage[english]{babel}
\usepackage{enumerate}

\usepackage{tikz}
\usetikzlibrary{arrows} 
\usepackage{oplotsymbl}

\pagestyle{plain}


\usepackage{ifthen} 
\newcommand{\optionalargumentinbrackets}[1]{\ifthenelse{\equal {#1} {} }{}{~(#1)}}

\newcounter{theoremcnt}[section]
\renewcommand{\thetheoremcnt}{\thesection.\arabic{theoremcnt}}

\renewenvironment{definition}[1][]
{\begin{trivlist}\refstepcounter{theoremcnt}
\item[]\textbf{Definition~\thetheoremcnt \optionalargumentinbrackets{#1}.} }{\end{trivlist}}

{\begin{trivlist}\refstepcounter{theoremcnt}
\item[]\textbf{Example~\thetheoremcnt \optionalargumentinbrackets{#1}.} }{\end{trivlist}}

\renewenvironment{lemma}[1][]%
{\begin{trivlist}\refstepcounter{theoremcnt}
\item[]\textbf{Lemma~\thetheoremcnt \optionalargumentinbrackets{#1}.} }{\end{trivlist}}

\renewenvironment{proof}[1][]%
{\begin{trivlist}\item[]\textbf{Proof\optionalargumentinbrackets{#1}.} }{\qed\end{trivlist}}

\renewenvironment{theorem}[1][]%
{\begin{trivlist}\refstepcounter{theoremcnt}
\item[]\textbf{Theorem~\thetheoremcnt \optionalargumentinbrackets{#1}.} }{\end{trivlist}}

{\begin{trivlist}\refstepcounter{theoremcnt}
\item[]\textbf{Corollary~\thetheoremcnt \optionalargumentinbrackets{#1}.} }{\end{trivlist}}

\makeatletter
\newcommand{\xRightarrow}[2][]{\ext@arrow
  0359\Rightarrowfill@{#1}{#2}}
\makeatother

\newcommand{\Arrow}[1]{\xRightarrow{\, {#1} \; \,}}
\newcommand{\AXd}{\mathit{A \mkern-2mu X}}
\newcommand{\AXdb}{\mathit{A \mkern-2mu X}^b}
\newcommand{\AXp}{\mathit{A \mkern-2mu X}_{\mkern-3mu p}}
\newcommand{\AXpb}{\mathit{A \mkern-2mu X}^b_{\mkern-3mu p}}
\newcommand{\DistrE}{\Distr(\calE)}
\newcommand{\DistrEconc}{\Distr(\calE_{\mathit{cc}})}
\newcommand{\DistrX}{\Distr(X)}

\newcommand{\alphaarrow}{\arrow{\alpha}}
\newcommand{\alphahidearrow}{\arrow{(\alpha)}}
\newcommand{\arrow}[1]{\xrightarrow{\raisebox{0ex}[0.1ex][-0.1ex]{\scriptsize $\,#1\,$}}}
\newcommand{\astop}{a \pref \partial(\bfzero)}
\newcommand{\bfzero}{\normalfont{\textbf{0}}}
\newcommand{\bigoplusiinI}{\textstyle{\bigoplus_{i \mathord{\in} I}}}
\newcommand{\bigoplusjinJ}{\textstyle{\bigoplus_{j \mathord{\in} J}}}

\newcommand{\bigopluskinK}{\textstyle{\bigoplus_{k \mathord{\in} K}}}
\newcommand{\bisim}{\mathrel{\,%
  \raisebox{.3ex}{$\underline{\makebox[.7em]{$\leftrightarrow$}}$}\,}}
\newcommand{\bnfeq}{\mathrel{{:}{:}{=}}}
\newcommand{\brbisim}{\mathrel{\,%
  \raisebox{.3ex}{$\underline{\makebox[.8em]{$\leftrightarrow$}}_{\,
      b}$}\mkern-1mu}}
\newcommand{\bstop}{b \pref \partial(\bfzero)}
\newcommand{\der}{\textsl{der} \mkern1mu}  

\newcommand{\half}{\textstyle{\frac12}}

\newcommand{\lc}{\lbrace \:}
\newcommand{\mopcalR}{\,{\calR}\,}   
\newcommand{\mycdot}{\mathop{\cdot}}
\newcommand{\myfrac}[2]{\textstyle{\frac{#1}{#2}}}

\newcommand{\nbrbisim}{\mkern4mu {\not\mkern-2.3mu\brbisim} \mkern4mu}
\newcommand{\nrootedbrbisim}{\mkern4mu {\not\mkern-1.5mu\rootedbrbisim} \mkern4mu}
\newcommand{\oplusr}{\oplus_r}
\newcommand{\pref}{\mathop{.}}

\newcommand{\rc}{\: \rbrace}
\newcommand{\rootedbrbisim}{\mathrel{\,%
  \raisebox{.3ex}{$\underline{\makebox[.7em]{$\leftrightarrow$}}_{\,
      rb}$}\mkern-1mu}}
\newcommand{\singleton}[1]{\lbrace {#1} \rbrace}
\newcommand{\sosrule}[2]{%
  \def\arraystretch{0.50}
  \begin{array}{c} {#1}\rule[-4pt]{0pt}{11pt} \\ \hline
    \rule{0pt}{13pt}{#2} \end{array}
  \def\arraystretch{1.0}}
\newcommand{\sumiinI}{\textstyle\sum_{i {\in} I}}
\newcommand{\sumjinJ}{\textstyle\sum_{j {\in} J}}
\newcommand{\sumkinK}{\textstyle\sum_{k {\in} K}}

\newcommand{\tauarrow}{\arrow{\tau}}
\newcommand{\tauhidearrow}{\arrow{(\tau)}}

\newcommand{\Cbar}{\bar{C}}
\newcommand{\Ebar}{\bar{E}}

\newcommand{\Fbar}{\bar{F}}
\newcommand{\Gbar}{\bar{G}}
\newcommand{\Pbar}{\bar{P}}
\newcommand{\Qbar}{\bar{Q}}
\newcommand{\etabar}{\bar{\eta}}

\newcommand{\nubar}{\bar{\nu}}

\newcommand{\Distr}{\textsl{Distr}\mkern1mu}

\newcommand{\calA}{\mathcal{A}}

\newcommand{\calE}{\mathcal{E}}
\newcommand{\calEconc}{\mathcal{E}_{\mathit{cc}}}

\newcommand{\calP}{\mathcal{P}}
\newcommand{\calPconc}{\mathcal{P}_{\!\mathit{cc}}}
\newcommand{\calR}{\mathcal{R}}

\newcommand{\blankline}{\vspace*{1.0\baselineskip}}
\newcommand{\halflineup}{\vspace*{-0.5\baselineskip}}

\colorlet{colorA}{LightYellow}
\colorlet{colorB}{Pink}
\colorlet{colorC}{LightBlue}
\colorlet{colorD}{Plum}
\colorlet{colorE}{Gold!50}
\colorlet{colorF}{Orange!50}
\colorlet{colorG}{Red!50}
\colorlet{colorH}{Green!50}
\colorlet{colorI}{NavyBlue!50}
\colorlet{colorJ}{Brown!50}
\colorlet{colorK}{Coral}
\colorlet{colorL}{RoyalBlue!50}
\colorlet{colorM}{OliveDrab!50}
\colorlet{colorX}{Gray!50}

\newcommand{\bisimilarity}{bisimilarity}
\newcommand{\probc}[1]{\mathbin{\mbox{${}_{\scriptscriptstyle #1 \mkern1mu} \hspace{-.8pt}\oplus\,$}}}
\renewcommand{\oplusr}{\probc{r}}
\newcommand{\den}[1]{\mbox{$[\hspace{-1.6pt}[$}#1\mbox{$]\hspace{-1.6pt}]$}}  
\newcommand{\plat}[1]{\raisebox{0pt}[0pt][0pt]{#1}}
\definecolor[named]{darkgreen}{cmyk}{1,0,.7,.5}
\definecolor[named]{darkorange}{cmyk}{0,0.82,1,0.01}
\definecolor[named]{purpleblue}{cmyk}{.42,.85,0,.29}
\newcommand{\dg}[1]{\textcolor{darkgreen}#1}
\newcommand{\dor}[1]{\textcolor{darkorange}#1}
\newcommand{\mg}[1]{\textcolor{purpleblue}#1}
\newcommand{\weg}[1]{}


\title{A Complete Axiomatization of\texorpdfstring{\\}{} Branching
  Bisimilarity for
  a Simple Process Language with Probabilistic Choice}
\author{R.J.~van Glabbeek${}^{\,1,2}$ \and
  J.F.~Groote${}^{\,3}$ \and E.P.~de Vink${}^{\,3}$}
\institute{Data61, CSIRO, Sydney, Australia \\
  \and
  Computer Science and Engineering, University of New South Wales,
  Australia 
  \and
  Department of Mathematics and Computer Science,\\
  Eindhoven University of Technology, Eindhoven, The Netherlands \\
  \email{rvg@cs.stanford.edu}, \email{J.F.Groote@tue.nl},
  \email{evink@win.tue.nl}} 
\date{}

\begin{document}

\titlerunning{A complete axiomatization for branching {\bisimilarity}}
\authorrunning{Van~Glabbeek, Groote \& De~Vink}
\maketitle

\begin{abstract}
  \noindent
  This paper proposes a notion of branching {\bisimilarity} for
  non-deterministic probabilistic processes. In order to characterize
  the corresponding notion of rooted branching probabilistic
  {\bisimilarity}, an equational theory is proposed for a basic,
  recursion-free process language with non-deterministic as well as
  probabilistic choice. The proof of completeness of the
  axiomatization builds on the completeness of strong probabilistic
  {\bisimilarity} on the one hand and on the notion of a concrete
  process, i.e.\ a process that does not display  (partially)
  inert $\tau$-moves, on the other hand. The approach is first presented for the
  non-deterministic fragment of the calculus and next generalized to
  incorporate probabilistic choice, too.
  \vspace{2ex}

  \textit{This paper is dedicated to Catuscia Palamidessi, on the occasion of her 60th birthday. An
    extended abstract appears in \cite{GGV19}.}
\end{abstract}




\section{Introduction}

In~\cite{DP07:tcs}, in a setting of a process language featuring both
non-deterministic and probabilistic choice, Yuxin Deng and Catuscia
Palamidessi propose an equational theory for a notion of weak
{\bisimilarity} and prove its soundness and completeness. Not
surprisingly, the axioms dealing with a silent step are reminiscent to
the well-known $\tau$-laws of Milner~\cite{Mil89:phi,Mil89:ic}. The process
language treated in~\cite{DP07:tcs} includes recursion, thereby
extending the calculus and axiomatization of~\cite{BS01:icalp}. While
the weak transitions of~\cite{DP07:tcs} can be characterized as
finitary, infinitary semantics is treated in~\cite{FG19:jlamp},
providing a sound and complete axiomatization also building
on the seminal work of Milner~\cite{Mil89:ic}.

In this paper we focus on branching {\bisimilarity} in the sense
of~\cite{GW96:jacm}, rather than on weak {\bisimilarity} as
in~\cite{BS01:icalp,DP07:tcs,FG19:jlamp}. 
In the non-probabilistic setting branching {\bisimilarity} has the
advantage over weak {\bisimilarity} that it has far more efficient
algorithms~\cite{GJKW17:tcl,GV90:icalp}. Furthermore, it has a strong
logical underpinning~\cite{DV95:jacm}. It would be very attractive to
have these advantages available also in the probabilistic case, where
model checking is more demanding. See also the initial work reported
in~\cite{GV17:festschrift}.

For a similarly basic process language as in~\cite{DP07:tcs}, without
recursion though, we propose a notion of branching probabilistic {\bisimilarity}
as well as a sound and complete equational axiomatization. Hence,
instead of lifting 
all $\tau$-laws to the probabilistic setting, we only need to do this
for the B-axiom of~\cite{GW96:jacm}, the single axiom capturing inert
silent steps. For what is referred to as the alternating
model~\cite{HJ90:rtss}, branching probabilistic {\bisimilarity} has been
studied in~\cite{AGT12:tcs,AW06:tcs}. Also~\cite{Seg95:thesis}
discusses branching probabilistic {\bisimilarity}. However, the proposed
notions of branching {\bisimilarity} are either no congruence for the
parallel operator, or they invalidate the identities below which we
desire. The paper~\cite{AG09:sofsem} proposes a complete theory for a
variant of branching {\bisimilarity} that is not consistent with the
first $\tau$-law unfortunately.

Our investigation is led by the wish to identify the three processes
below, involving as a subprocess a probabilistic choice between $P$
and~$Q$. Essentially, ignoring the occurrence of the action~$a$
involved, the three processes represent (i)~a probabilistic choice of
weight~$\frac34$ between to instances of the subprocess mentioned,
(ii)~the subprocess on its own, and (iii)~a probabilistic choice of
weight~$\frac13$ for the subprocess and a rescaling of the subprocess,
in part overlapping.

\medskip

\scalebox{0.95}{%
\!\!\!


\hbox{%
\begin{tikzpicture}[> = stealth', semithick, inner sep = 1pt, scale=0.45]
  \tikzstyle{state} = [circle, draw=Black, fill=LightYellow, minimum size=4.0mm]
  \tikzstyle{split} = [circle, draw=black, fill=black]
  \node [state] (t0) at (2.5,9) {$s_0$}; 
  \node [split] (y0) at (2.5,7.25) {};
  \node [state] (t1) at (1.0,6) {$s_1$}; 
  \node [split] (y1) at (1.0,4.25) {}; 
  \coordinate (t3) at (-0.25,3); 
  \coordinate (t4) at (2.0,3); 
  \node [state] (t2) at (4.0,6) {$s_2$}; 
  \node [split] (y2) at (4.0,4.25) {}; 
  \coordinate (t5) at (3.0,3); 
  \coordinate (t6) at (5.25,3); 
  \path [>=] (t0) edge node [midway, right=1pt] {$a$} (y0);
  \path [->] (y0) edge node [midway, above left] {$\frac{3}{4}$} (t1);
  \path [->] (y0) edge node [midway, above right] {$\frac{1}{4}$} (t2);
  \draw [>=] (y0) ++(220:.3) arc (217.5:322.5:.3);
  \path [>=] (t1) edge node [midway, left=0.5pt] {$\tau$} (y1);
  \path [->, shorten >=0.25mm] (y1) edge node [midway, above left] {$\frac{1}{2}$} (t3);
  \path [->, shorten >=0.25mm] (y1) edge node [midway, above right] {$\frac{1}{2}$} (t4);
  \draw [>=] (y1) ++(220:.3) arc (217.5:315:.3);
  \fill [fill=Pink, draw=Black, >=] (t3) -- ++(-1.75,-2.0) -- ++(2.0,0.0)
  -- (t3);  
  \node at ([shift={(-0.35,-1.4)}] t3) {P};
  \fill [fill=LightBlue, draw=Black, >=] (t4) -- ++(-1.75,-2.0) --
  ++(2.0,0.0) -- (t4); 
  \node at ([shift={(-0.45,-1.4)}] t4) {Q};
  \path [>=] (t2) edge node [midway, left=0.5pt] {$\tau$} (y2);
  \path [->, shorten >=0.25mm] (y2) edge node [midway, above left] {$\frac{1}{2}$} (t5);
  \path [->, shorten >=0.25mm] (y2) edge node [midway, above right] {$\frac{1}{2}$} (t6);
  \draw [>=] (y2) ++(227.5:.3) arc (217.5:320:.3);
  \fill [fill=Pink, draw=Black, >=] (t5) -- ++(-0.25,-2.0) -- ++(2.0,0.0)
  -- (t5);  
  \node at ([shift={(+0.45,-1.4)}] t5) {P};
  \fill [fill=LightBlue, draw=Black, >=] (t6) -- ++(-0.25,-2.0) --
  ++(2.0,0.0) -- (t6); 
  \node at ([shift={(+0.45,-1.4)}] t6) {Q};
  \node (S) at (2.5,-0.25) {\small
    $a \pref 
    \Bigl( 
    \partial \bigl( \tau \pref (P \probc{\frac12} Q) \bigr) 
    \probc{\frac34} 
    \partial \bigl( \tau \pref (P \probc{\frac12} Q) \bigr)
    \mkern-2mu \Bigr)$} ;
\end{tikzpicture}
\hspace*{-0.75cm}
\begin{tikzpicture}[> = stealth' , semithick, inner sep = 1pt, scale=0.45]
  \tikzstyle{state} = [circle, draw=Black, fill=LightYellow, minimum size=4.0mm]
  \tikzstyle{split} = [circle, draw=black, fill=black]
  \node [state] (u0) at (2.5,9) {$t_0$}; 
  \node [split] (x0) at (2.5,7.25) {};
  \coordinate (u1) at (1.0,6); 
  \coordinate (u2) at (4.0,6); 
  \coordinate (u3) at (4.0,1); 
  \path (u2) -- (u3);
  \path [>=] (u0) edge node [midway, right=1pt] {$a$} (x0);
  \path [->, shorten >=0.25mm] (x0) edge node [midway, above left] {$\frac{1}{2}$} (u1);
  \path [->, shorten >=0.25mm] (x0) edge node [midway, above right] {$\frac{1}{2}$} (u2);
  \draw [>=] (x0) ++(217.5:.3) arc (217.5:322.5:.3);
  \fill [fill=Pink, draw=Black, >=] (u1) -- ++(-1.0,-2.0) --
  ++(2.0,0.0) -- (u1);  
  \node at ([shift={(0,-1.4)}] u1) {P};
  \fill[fill=LightBlue, draw=Black, >=] (u2) -- ++(-1.0,-2.0) -- ++(2.0,0.0) -- (u2);
  \node at ([shift={(0,-1.4)}] u2) {Q};
  \node (T) at (2.5,2.75) {\small
    $a \pref (P \probc{\frac12} Q)$} ; 
  \node (T') at (2.5,-0.5) {} ; 
\end{tikzpicture}
  \hspace*{-0.75cm}
\begin{tikzpicture}[> = stealth', semithick, inner sep = 1pt, scale=0.45]
  \tikzstyle{state} = [circle, draw=Black, fill=LightYellow, minimum size=4.0mm]
  \tikzstyle{split} = [circle, draw=Black, fill=black]
  \node [state] (s0) at (2.5,9) {$u_0$}; 
  \node [split] (x0) at (2.5,7.25) {};
  \node [state] (s1) at (1.0,6) {$u_1$}; 
  \node [split] (x1) at (1.0,4.25) {}; 
  \coordinate (s2) at (3,6); 
  \coordinate (s3) at (5.5,6); 
  \coordinate (s4) at (-0.5,3); 
  \coordinate (s5) at (2.5,3); 
  \path [>=] (s0) edge node [midway, right=1pt] {$a$} (x0);
  \path [->] (x0) edge node [midway, above left] {$\frac{1}{3}$} (s1);
  \path [->] (x0) edge node [midway, right=2pt] {$\frac{1}{3}$} (s2);
  \path [->] (x0) edge node [midway, above right] {$\frac{1}{3}$} (s3);
  \draw [>=] (x0) ++(217.5:.3) arc (217.5:340:.3);
  \path [>=] (s1) edge node [midway, left =0.5pt] {$\tau$} (x1);
  \fill [fill=Pink, draw=Black, >=] (s2) -- ++(-0.5,-2.0) -- ++(2.0,0.0) -- (s2); 
  \node at ([shift={(+0.35,-1.4)}] s2) {P};
  \fill [fill=LightBlue, draw=Black, >=] (s3) -- ++(-0.5,-2.0) --
  ++(2.0,0.0) -- (s3); 
  \node at ([shift={(+0.35,-1.4)}] s3) {Q};
  \draw [>=] (x1) ++(217.5:.3) arc (217.5:322.5:.3);
  \path [->] (x1) edge node [midway, above left] {$\frac{1}{2}$} (s4);
  \path [->] (x1) edge node [midway, above right] {$\frac{1}{2}$} (s5);
  \fill [fill=Pink, draw=Black, >=] (s4) -- ++(-1.0,-2.0) -- ++(2.0,0.0) -- (s4); 
  \node at ([shift={(0,-1.4)}] s4) {P};
  \fill [fill=LightBlue, draw=Black, >=] (s5) -- ++(-1.0,-2.0) --
  ++(2.0,0.0) -- (s5); 
  \node at ([shift={(0,-1.4)}] s5) {Q};
  \node (U) at (2.5,-0.25) {\small
    $a \pref 
    \Bigl( 
    \partial \bigl( \tau \pref (P \probc{\frac12} Q) \bigr)
    \probc{\frac13} 
    (P \probc{\frac12} Q)
    \Bigr)$} ;
\end{tikzpicture}
} 

\!\!%
}%

\medskip

\noindent
In our view, all three processes starting from $s_0$, $t_0$, and $u_0$
are equivalent. The behavior that can be observed from them when
ignoring $\tau$-steps and coin tosses to resolve probabilistic choices
is the same. This leads to a definition of probabilistic branching
{\bisimilarity} that hitherto was not proposed in the literature and
appears to be the pendant of weak distribution {\bisimilarity} defined
by~\cite{EHKTZ13:qest}.

As for~\cite{DP07:tcs} we seek to stay close to the treatment of the
non-deterministic fragment of the process calculus at hand. However,
as an alternate route in proving completeness, we rely on the
definition of a concrete process.  We first apply the approach for
strictly non-deterministic processes and \emph{mutatis mutandis} for
the whole language allowing processes that involve both
non-deterministic and probabilistic choice. For now, let's call a
process concrete if it doesn't exhibit inert transitions,
i.e.\ $\tau$-transitions that don't change the potential behavior of
the process essentially. The approach we follow first establishes
soundness for branching (probabilistic) {\bisimilarity} and soundness and
completeness for strong (probabilistic) {\bisimilarity}. Because of the
non-inertness of the silent steps involved, strong and branching
{\bisimilarity} coincide for concrete processes. The trick then is to
relate a pair of branching (probabilistically) bisimilar processes to
a corresponding pair of concrete processes. Since these are also
branching (probabilistically) bisimilar as argued, they are
consequently strongly (probabilistically) bisimilar, and, voil\`a,
provably equal by the completeness result for strong (probabilistic)
{\bisimilarity}.

The remainder of the paper is organized as follows. In
Section~\ref{sec-prelim} we gather some notation regarding probability
distributions. For illustration purposes Section~\ref{sec-nondet}
treats the simpler setting of non-deterministic processes reiterating
the completeness proof for the equational theory of~\cite{GW96:jacm}
for rooted branching {\bisimilarity}. Next, after introducing
branching probabilistic {\bisimilarity} and some of its fundamental
properties in Sections~\ref{sec-bpb} and~\ref{fundamental},
respectively, in Section~\ref{sec-prob} we prove the main result,
viz.\ the completeness of an equational theory for rooted branching
probabilistic {\bisimilarity}, following the same lines set out in
Section~\ref{sec-nondet}. In Section~\ref{sec-concl} we wrap up and
make concluding remarks.



\section{Preliminaries}

\label{sec-prelim}

Let $\DistrX$ be the set of distributions over the set~$X$ of finite
support. The support of a distribution $\mu$ is denoted as $\mathit{spt}(\mu)$.
Each distribution $\mu \in \DistrX$ can be represented as $\mu
= \bigoplusiinI \: p_i * x_i$ when $\mu(x_i) = p_i$ for $i \in I$ and
$\sumiinI \: p_i = 1$. 
We assume $I$ to be a finite index set.
In concrete cases, when no confusion arises, the separator~$\ast$ is
omitted from the notation.
For convenience later, we do not require $x_i
\neq x_{i'}$ for $i \neq i'$ nor $p_i > 0$ for $i, i' \in I$.

We use $\delta(x)$ to denote the Dirac distribution for~$x \in X$.
For $\mu, \nu \in \DistrX$ and $r \in [0,1]$ we define $\mu \oplusr
\nu \in \DistrX$ by $(\mu \oplusr \nu)(x) = r \mycdot \mu(x) + (1{-}r)
\mycdot \nu(x)$. By definition $\mu \probc0 \nu = \nu$ and $\mu
\probc1 \nu = \mu$. 
For an index set~$I$, $p_i \in [0,1]$ and $\mu_i \in \DistrX$, we
define $\bigoplusiinI \: p_i * \mu_i \in \DistrX$ by $(\bigoplusiinI
\: p_i * \mu_i)(x) = \sumiinI \: p_i \mycdot \mu_i(x)$ for~$x \in
X$. For $\mu = \bigoplusiinI \: p_i * \mu_i$, $\nu = \bigoplusiinI \:
p_i * \nu_i$, and $r \in [0,1]$ it holds that $\mu \oplusr \nu =
\bigoplusiinI \: ( \mu_i \oplusr \nu_i )$.

For a binary relation $\calR \subseteq \DistrX \times \DistrX$ we
use~$\calR^\dagger$ to denote its symmetric closure.



\section{Completeness: the non-deterministic case}

\label{sec-nondet}

In this section we present 
an approach to prove completeness of an axiomatic theory for branching
{\bisimilarity} exploiting the notion of a concrete process in the setting
of a basic process language. In the remainder of the paper we
extend the approach to a process language involving probabilistic choice.

\blankline

\noindent
We assume to be given a set of actions~$\calA$ including the so-called
silent action~$\tau$. The process language we consider is called a
Minimal Process Language in~\cite{BBR10:cup}. It provides
inaction~$\bfzero$, a prefix construct for each action $a \in \calA$,
and non-deterministic choice.

\begin{definition}[Syntax]
  The class~$\calE$ of non-deterministic processes over~$\calA$, with
  typical element~$E$, is given by\vspace{-.5ex}
  \[
    E \bnfeq \bfzero \mid \alpha \pref E \mid E + E\vspace{-1.5ex}
  \]
  with actions~$\alpha$ from~$\calA$.
\pagebreak[3]
\end{definition}

\noindent
The process $\bfzero$ cannot perform any action, $\alpha \pref E$ can
perform action $\alpha$ and subsequently behave as $E$, and $E_1+E_2$
represents the choice in behavior between $E_1$ and~$E_2$.

For $E \in \calE$ we define its complexity~$c(E)$ by $c(\bfzero) = 0$,
$c(\alpha \pref E) = c(E) + 1$, and $c(E + F) = c(E) + c(F)$.

The behavior of processes in~$\calE$ is given by a structured
operational semantics going back to~\cite{HM80:icalp}.

\begin{definition}[Operational semantics]
  \label{def-nd-transition-relation}
  The transition relation ${\rightarrow} \subseteq \calE \times \calA
  \times \calE$ is
  given by
  \halflineup
  \begin{displaymath}
    \begin{array}{c}
      \sosrule{}{\alpha \pref E \arrow{\alpha} E}
      \: \textsc{\small (pref)}
     \medskip \\
      \sosrule{E_1 \arrow{\alpha} E_1}{E_1 + E_2 \arrow{\alpha} E_1}
      \: \textsc{\small (nd-choice\,1)}
      \qquad
      \sosrule{E_2 \arrow{\alpha} E_2}{E_1 + E_2 \arrow{\alpha} E_2}
      \: \textsc{\small (nd-choice\,2)}
   \end{array}
  \end{displaymath}
\end{definition}

\medskip

\noindent
We have auxiliary definitions and relations derived from the
transition relation of Definition~\ref{def-nd-transition-relation}.
A process $E' \in \calE$ is called a derivative of a process~$E \in
\calE$ iff $E_0, \ldots, E_n \in \calE$ and $\alpha_1, \ldots,
\alpha_n$ exist such that $E \equiv E_0$, $E_{i{-}1} \arrow{\alpha_i}
E_i$, and $E_n \equiv E'$.
We define $\der(E) = \lc E' \in \calE \mid \mbox{$E'$ derivative of~$E$}
\rc$.
Furthermore, for $E, E' \in \calE$ and $\alpha \in \calA$ we write $E
\alphahidearrow E'$ iff $E \alphaarrow E'$, or $\alpha = \tau$ and $E
= E'$. We use $\Arrow{}$ to denote the reflexive transitive closure of
$\arrow{(\tau)}$.

The definitions of strong and branching {\bisimilarity} for~$\calE$ are
standard and adapted from~\cite{GW96:jacm,Mil89:phi}.

\begin{definition}[Strong and branching {\bisimilarity}]
  \label{def-branching-bisimilar}
  \begin{itemize}
  \item [(a)] A symmetric relation $\calR \subseteq \calE \times
    \calE$ is called a \textit{strong bisimulation relation} iff for all
    $E, E', F \in \calE$ if $E \mopcalR F$ and $E
    \arrow{\alpha} E'$ then there is an $F' \in \calE$ such that
    \begin{displaymath}
      F \alphaarrow F' \ \text{and} \ 
      E' \mopcalR F'.
    \end{displaymath}
  \item [(b)] A symmetric relation $\calR \subseteq \calE \times
    \calE$ is called a \textit{branching bisimulation relation} iff for
    all $E, E', F \in \calE$ if $E \mopcalR F$ and $E
    \arrow{\alpha} E'$, then there are $\Fbar, F' \in \calE$ such that
    \begin{displaymath}
      F \Arrow{} \Fbar, \ 
      \Fbar \alphahidearrow F', \ 
      E \mopcalR \Fbar, \ \text{and} \ 
      E' \mopcalR F'.
    \end{displaymath}
  \item [(c)] Strong {\bisimilarity}, denoted by~${\bisim} \subseteq \calE
    \times \calE$, and branching {\bisimilarity}, written as~${\brbisim}
    \subseteq \calE \times \calE$, are defined as the largest strong
    bisimulation relation on~$\calE$ and the largest branching
    bisimulation relation on~$\calE$, respectively.
  \end{itemize}
\end{definition}

\noindent
Clearly, in view of the definitions, strong {\bisimilarity} between two
processes implies branching {\bisimilarity} between the two processes.

If for a transition $E \arrow{\tau} E'$ we have that $E \brbisim E'$,
the transition is called \textit{inert}. A process~$\Ebar$ is called
\textit{concrete} iff it has no inert transitions, i.e., if
$E'\in\der(\Ebar)$ and $E' \arrow{\tau} E''$, then~$E' \nbrbisim
E''$. We write $\calEconc = \lc \Ebar \in \calE \mid \mbox{$\Ebar$
  concrete} \rc$.

\blankline

\noindent
Next we introduce a restricted form of branching {\bisimilarity}, called
rooted branching {\bisimilarity}, instigated by the fact that branching
{\bisimilarity} itself is not a congruence for the choice operator. This
makes branching {\bisimilarity} unsuitable for equational reasoning where
it is natural to replace subterms by equivalent terms. Note that weak
{\bisimilarity} has the same problem~\cite{Mil89:phi}.

For example, we have for any process~$E$ that $E$ and~$\tau \pref E$
are branching bisimilar, but in the context of a non-deterministic
alternative they may not, i.e., it is not necessarily the case $E + F
\brbisim \tau \pref E + F$. More concretely, although $\bfzero
\brbisim \tau \pref \bfzero$, it does not hold that $\bfzero + b \pref
\bfzero \brbisim \tau \pref \bfzero + b \pref \bfzero$. The
$\tau$-move of $\tau \pref \bfzero + b \pref \bfzero$ to~$\bfzero$ has
no counterpart in $\bfzero + b \pref \bfzero$ because $\bfzero + b \pref
\bfzero \nbrbisim \bfzero$.

\begin{definition}
  \label{def-branching-bisimilarity}
  A symmetric $\calR \mathbin\subseteq \calE \times \calE$ is called a
  rooted branching bisimulation relation iff for all $E, F \in \calE$
  such that $E \mopcalR F$ it holds that if $E \arrow{\alpha} E'$
  for~$\alpha \in \calA$, $E' \in \calE$ then $F \arrow{\alpha} F'$
  and $E' \brbisim F'$ for some~$F' \in \calE$. Rooted branching
  bisimilarity, denoted by
  ${\rootedbrbisim} \subseteq \calE \times \calE$, is defined as the
  largest rooted branching bisimulation relation.
\end{definition}

\noindent
The definition of rooted branching bisimilarity boils down to calling
processes $E, F \in \calE$ rooted branching bisimilar,
notation~$E \rootedbrbisim F$, iff (i)~$E \arrow{\alpha} E'$ implies
$F \arrow{\alpha} F'$ and~$E' \brbisim F'$ for some $F' \in \calE$
and, vice versa, (ii)~$F \arrow{\alpha} F'$ implies
$E \arrow{\alpha} E'$ and~$E' \brbisim F'$ for some $E' \in
\calE$. The formulation of Definition~\ref{def-branching-bisimilarity}
for the nondeterministic processes of this section corresponds
directly to the definition of rooted branching \emph{probabilistic}
bisimulation that we will introduce in Section~\ref{sec-bpb}, see
Definition~\ref{rooted-bpb}. 
 
\blankline

\noindent
Direct from the definitions we see ${\bisim} \subseteq
{\rootedbrbisim} \subseteq {\brbisim}$. As implicitly announced we
have a congruence result for rooted branching {\bisimilarity}.

\begin{lemma}[\cite{GW96:jacm}]
  $\rootedbrbisim$ is a congruence on~$\calE$ for the operators $\pref$ and $+$.
\end{lemma}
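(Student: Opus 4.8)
The plan is to prove that $\rootedbrbisim$ is preserved by prefixing and by the choice operator separately. The statement asserts that $E_1 \rootedbrbisim F_1$ and $E_2 \rootedbrbisim F_2$ together imply $\alpha \pref E_1 \rootedbrbisim \alpha \pref F_1$ and $E_1 + E_2 \rootedbrbisim F_1 + F_2$. Since $\rootedbrbisim$ is by definition symmetric and it suffices to check the transfer condition in one direction (the other following by symmetry), I would verify the defining clause of Definition~\ref{def-branching-bisimilarity} for each operator, relying on the given inclusion ${\rootedbrbisim} \subseteq {\brbisim}$.

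For the \textbf{prefix operator}, the argument is essentially immediate. The only transition of $\alpha \pref E_1$ is the rule \textsc{(pref)}, namely $\alpha \pref E_1 \arrow{\alpha} E_1$. The matching transition on the other side is $\alpha \pref F_1 \arrow{\alpha} F_1$, and I must exhibit $E_1 \brbisim F_1$. This follows from the hypothesis $E_1 \rootedbrbisim F_1$ together with ${\rootedbrbisim} \subseteq {\brbisim}$. Thus the relation ${\lc (\alpha \pref E_1, \alpha \pref F_1) \mid E_1 \rootedbrbisim F_1 \rc}^\dagger$ (closed up to include $\rootedbrbisim$ itself) is a rooted branching bisimulation.

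For the \textbf{choice operator}, I would take the candidate relation to be the symmetric closure of $\lc (E_1 + E_2,\, F_1 + F_2) \mid E_1 \rootedbrbisim F_1,\ E_2 \rootedbrbisim F_2 \rc \cup {\rootedbrbisim}$ and check it is a rooted branching bisimulation. Suppose $(E_1+E_2) \arrow{\alpha} E'$. By the operational rules \textsc{(nd-choice\,1)} and \textsc{(nd-choice\,2)}, this transition arises from either $E_1 \arrow{\alpha} E'$ or $E_2 \arrow{\alpha} E'$; by symmetry of the two cases I treat the first. Applying the hypothesis $E_1 \rootedbrbisim F_1$ to the transition $E_1 \arrow{\alpha} E'$ yields an $F'$ with $F_1 \arrow{\alpha} F'$ and $E' \brbisim F'$. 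The crucial point is that \textsc{(nd-choice\,1)} then lifts this to $F_1 + F_2 \arrow{\alpha} F'$, so the matching transition of the composite term exists, and the required branching-bisimilarity of residuals $E' \brbisim F'$ is exactly what the root condition already supplied.

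\textbf{The main subtlety} — though it is a mild one — is that the root condition for $+$ demands that residuals be related merely by $\brbisim$, not by $\rootedbrbisim$; this is precisely what makes the proof go through, since $E'$ and $F'$ need only be branching bisimilar, and it is also why the candidate relation must be seeded with $\rootedbrbisim$ (rather than with $\brbisim$) at the top level while allowing $\brbisim$ on residuals. One should check that including $\rootedbrbisim$ in the candidate relation is harmless: pairs already in $\rootedbrbisim$ trivially satisfy the transfer clause by definition. No deeper difficulty arises here because the choice operator appears only at the root and is never reintroduced by the transition rules, so the stronger root condition is never required below the top level.
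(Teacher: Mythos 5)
Your proof is correct. The paper itself gives no proof of this lemma---it is cited directly from the literature (\cite{GW96:jacm})---and your argument is the standard direct verification one would expect there: since the transfer condition of Definition~\ref{def-branching-bisimilarity} only demands $\brbisim$ (not $\rootedbrbisim$) on residuals, and the rules \textsc{(pref)}, \textsc{(nd-choice\,1)}, \textsc{(nd-choice\,2)} never reintroduce the operators below the root, the candidate relations you exhibit satisfy the transfer condition immediately via ${\rootedbrbisim} \subseteq {\brbisim}$.
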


\noindent
It is well-known that strong and branching {\bisimilarity} for~$\calE$
can be equationally characterized~\cite{BBR10:cup,GW96:jacm,Mil89:phi}.

\begin{definition}[Axiomatization of $\bisim$ and~$\rootedbrbisim$]
  The theory~$\AXd$ is given by the axioms A1 to~A4 listed in
  Table~\ref{table-axiomatization-of-branching-bisimilarity}.
  The theory~$\AXdb$ contains in addition the axiom~\hyperlink{B}{B}.
\end{definition}

\begin{table}
  \vspace{-4ex}
  \centering
  \def\arraystretch{1.1}
  \begin{tabular}{|@{\;}l@{\ \;}l|}
    \hline
    A1 & $E + F = F + E$ \rule{0pt}{12pt} \\
    A2 & $(E + F) + G = E + ( F + G)$ \\
    A3 & $E + E = E$ \\
    A4 & $E + \bfzero = E$ 
    \rule[-5pt]{0pt}{12pt}
    \\ \hline
    \hypertarget{B}{B} & $\alpha \pref ( \, F + \tau \pref ( E + F )
    \, ) = \alpha 
    \pref ( E + F )$
    \rule{0pt}{12pt}\rule[-5pt]{0pt}{12pt} \\ 
    \hline
  \end{tabular}
  \def\arraystretch{1.0}

  \medskip

  \caption{Axioms for strong and branching {\bisimilarity}}
  \label{table-axiomatization-of-branching-bisimilarity}
  \vspace{-4ex}
\end{table}

\noindent
If two processes are provably equal, they are rooted branching
bisimilar.

\begin{lemma} [Soundness]
  \label{lemma-soundness-base}
  For all $E,F \in \calE$, if $\AXdb \vdash E = F$ then $E
  \rootedbrbisim F$.
\end{lemma}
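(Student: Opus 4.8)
The plan is to prove soundness by the standard route for equational logic: show that $\rootedbrbisim$ is an equivalence relation on $\calE$ that is a congruence and that validates every axiom of $\AXdb$, and then argue by induction on the structure of the derivation $\AXdb \vdash E = F$. Reflexivity and symmetry of $\rootedbrbisim$ are immediate, and transitivity is the standard (if slightly delicate) fact for branching bisimilarity, so $\rootedbrbisim$ is an equivalence; it is a congruence for $\pref$ and $+$ by the preceding lemma of~\cite{GW96:jacm}. Together these facts guarantee that the inference rules of equational logic---reflexivity, symmetry, transitivity and substitution of equals into contexts---all preserve membership in $\rootedbrbisim$, so the induction reduces to checking that both sides of each axiom instance are rooted branching bisimilar.

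For the axioms A1--A4 I would observe that they are the usual laws making $+$ a commutative, idempotent monoid with unit $\bfzero$, and that each is in fact valid up to \emph{strong} bisimilarity: in every case the two sides inherit literally the same outgoing transitions from the SOS rules, so the relation pairing the two sides together with the identity on all derivatives is a strong bisimulation. Since ${\bisim} \subseteq {\rootedbrbisim}$, as noted above, this yields soundness of A1--A4 for rooted branching bisimilarity at no extra cost.

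The heart of the matter is the axiom~\hyperlink{B}{B}, $\alpha \pref ( F + \tau \pref ( E + F ) ) = \alpha \pref ( E + F )$. Writing $L \equiv F + \tau \pref (E+F)$ and $R \equiv E+F$, the only transition out of $\alpha \pref L$ and of $\alpha \pref R$ is the $\alpha$-step into $L$, respectively $R$, so by the definition of $\rootedbrbisim$ it suffices to prove the non-rooted statement $L \brbisim R$. For this I would exhibit the explicit relation
\[
  \calR = \{(L,R)\}^\dagger \cup \{(G,G) \mid G \in \calE\}
\]
and verify that it is a branching bisimulation. The transitions of $L$ inherited from the summand $F$ are matched by the identical transitions of $R$, landing in identically related derivatives; the inert step $L \arrow{\tau} R$ is matched by $R$ staying put, using the $\alpha = \tau$ clause $R \tauhidearrow R$ of $\alphahidearrow$ together with the already-present pair $(L,R)$. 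Symmetrically, the transitions of $R$ from the summand $F$ are matched directly by $L$, whereas the transitions of $R$ coming from $E$---which $L$ cannot perform directly---are matched by first taking $L \Arrow{} R$ and then firing the $E$-transition from $R$, the pair $(R,R)$ of the identity part supplying the required relatedness of the intermediate state.

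The main obstacle I anticipate is precisely this asymmetry in~\hyperlink{B}{B}: on the left the summand $E$ is guarded behind the inert $\tau$, so its transitions are available only after the $\Arrow{}$ prefix, while the silent step itself must be absorbed by the stay-put option of Definition~\ref{def-branching-bisimilar}(b) rather than by a matching move. Lining up these two clauses---and confirming that the state reached after $L \Arrow{} R$ is indeed $\calR$-related to $R$---is the only genuinely nontrivial verification; the equivalence and congruence preliminaries and the A-axioms are routine or already supplied.
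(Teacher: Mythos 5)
Your proposal is correct and follows essentially the same route as the paper, whose proof is only a two-line sketch: verify that each axiom of $\AXdb$ relates rooted branching bisimilar processes, then close under the rules of equational logic using the congruence property. Your detailed verification of axiom~\hyperlink{B}{B} via the relation $\{(L,R)\}^\dagger$ together with the identity, including the matching of the inert $\tau$-step by the stay-put clause of $\alphahidearrow$ and of the $E$-transitions of $R$ after $L \Arrow{} R$, is exactly the content the paper leaves implicit.
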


\begin{proof}[Sketch]
  First one shows that the left-hand side and the right-hand
  side of the axioms of~$\AXdb$ are rooted branching bisimilar.
  Next, one observes that rooted branching {\bisimilarity} is a
  congruence.
\end{proof}

\noindent
It is well-known that strong {\bisimilarity} is equationally
characterized by the axioms A1 to~A4 of
Table~\ref{table-axiomatization-of-branching-bisimilarity}.

\begin{theorem}[$\AXd$ sound and complete for~$\bisim$]
  \label{theorem-AXd-sound-and-complete}
  For all processes $E, F \in \calE$ it holds that $\AXd \vdash E =
  F$ iff $E \bisim F$.
\end{theorem}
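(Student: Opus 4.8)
The biconditional splits into soundness and completeness, with
  $\AXd$ comprising the axioms A1--A4 and $\bisim$ denoting strong
  {\bisimilarity}. For soundness I would first check that for each of
  A1--A4 the two sides are strongly bisimilar: by the rules
  \textsc{(nd-choice\,1)} and \textsc{(nd-choice\,2)} the outgoing
  transitions of a sum are exactly the union of the outgoing
  transitions of its summands, so $+$ is commutative, associative and
  idempotent up to $\bisim$ (A1--A3), while $\bfzero$ is a unit (A4),
  since $\bfzero$ performs no action. Combined with the standard fact
  that $\bisim$ is a congruence for $\pref$ and $+$ -- the relation
  pairing a context filled with strongly bisimilar processes is
  routinely checked to be a strong bisimulation -- an induction on the
  length of a derivation in $\AXd$ yields
  $\AXd \vdash E = F \Rightarrow E \bisim F$.

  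For completeness I would run the classical normal-form argument,
  using the complexity $c(\cdot)$ as induction measure. The first step
  is a normalization lemma: every $E \in \calE$ satisfies
  $\AXd \vdash E = \sumiinI \alpha_i \pref E_i$ for some finite $I$,
  with the empty sum read as $\bfzero$. This follows by structural
  induction, using A1 and A2 to reassociate and reorder nested choices
  and A4 to erase $\bfzero$-summands. Since A1, A2 and A4 all preserve
  $c(\cdot)$, the resulting head normal form has the same complexity as
  $E$, and it exposes the transitions syntactically: $E \arrow{\alpha}
  E'$ holds iff $\alpha \pref E'$ is one of its summands.

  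The heart of the proof is then an induction on $c(E) + c(F)$.
  Assuming $E \bisim F$, I bring both into head normal form $E =
  \sumiinI \alpha_i \pref E_i$ and $F = \sumjinJ \beta_j \pref F_j$.
  For each summand $\alpha_i \pref E_i$ the transition $E
  \arrow{\alpha_i} E_i$ is matched, via $E \bisim F$, by some $F
  \arrow{\alpha_i} F_j$ with $E_i \bisim F_j$ (so $\beta_j =
  \alpha_i$). As passing to a one-step derivative strictly decreases
  the complexity, we have $c(E_i), c(F_j) < c(E), c(F)$, so the
  induction hypothesis gives $\AXd \vdash E_i = F_j$, hence $\AXd
  \vdash \alpha_i \pref E_i = \beta_j \pref F_j$; thus every summand of
  $E$ is provably equal to a summand of $F$. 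Absorbing these one at a
  time by A1, A2 and idempotence A3 gives $\AXd \vdash E + F = F$, and
  symmetrically $\AXd \vdash F + E = E$, so $\AXd \vdash E = F + E = E +
  F = F$. The case where $E$ has no transitions is the empty-sum base
  case: then $F$ has none either, and both normalize to $\bfzero$.

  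For strong {\bisimilarity} this scheme is routine, so I do not expect
  a genuine obstacle; the only points needing care are verifying that
  the induction measure strictly decreases on derivatives -- which is
  precisely what licenses applying the hypothesis to $E_i$ and $F_j$ --
  and the idempotence step A3 driving the absorption. The real
  difficulty of the paper lies elsewhere, namely in extending this
  argument past the inert $\tau$-moves handled by axiom
  \hyperlink{B}{B}, and subsequently to the calculus with probabilistic
  choice.
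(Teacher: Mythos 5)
Your proposal is correct, and it is essentially the proof the paper relies on: the paper does not prove this theorem itself but cites Milner (Section~7.4 of~\cite{Mil89:phi}), and your soundness-plus-normal-form-and-absorption argument, with induction on the complexity measure $c(\cdot)$, is exactly that classical argument. The details you sketch (complexity preserved by A1, A2, A4; strict decrease when passing to derivatives of a head normal form; absorption via A3 yielding $E = E+F = F$) all check out.
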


\begin{proof}
  See for example~\cite[Section~7.4]{Mil89:phi}.
\end{proof}

\noindent
For concrete processes that have no inert transitions, branching
{\bisimilarity} and strong {\bisimilarity} coincide. Hence, in view of
Theorem~\ref{theorem-AXd-sound-and-complete}, branching {\bisimilarity}
implies equality for~$\AXd$.

\begin{lemma}
  \label{lemma-branching-implies-equal-for-concrete-processes}
  For all concrete $\Ebar,\Fbar \in \calEconc$, if $\Ebar \brbisim
  \Fbar$ then both $\Ebar \bisim \Fbar$ and $\AXd \vdash \Ebar = \Fbar$.
\end{lemma}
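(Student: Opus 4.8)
The plan is to derive the second conclusion from the first: once we know $\Ebar \bisim \Fbar$, Theorem~\ref{theorem-AXd-sound-and-complete} immediately gives $\AXd \vdash \Ebar = \Fbar$. So the entire content lies in upgrading branching {\bisimilarity} to strong {\bisimilarity} on concrete processes, i.e.\ in showing that $\brbisim$ and $\bisim$ coincide on~$\calEconc$. First I would record two easy preliminaries that keep everything inside $\calEconc$: every derivative of a concrete process is again concrete, since $\der(E')\subseteq\der(\Ebar)$ whenever $E'\in\der(\Ebar)$; and $\brbisim$ is an equivalence, in particular transitive, which is the standard fact that the relational composition of branching bisimulations is again one.

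The main step is to show that $\calR = {\brbisim}\cap(\calEconc\times\calEconc)$ is a strong bisimulation. Symmetry of $\calR$ is inherited from $\brbisim$. So take $\Ebar \mopcalR \Fbar$ and a transition $\Ebar\arrow{\alpha}\Ebar'$. The branching clause of Definition~\ref{def-branching-bisimilar} yields $\Fbar_1,\Fbar'$ with $\Fbar\Arrow{}\Fbar_1$, $\Fbar_1\alphahidearrow\Fbar'$, $\Ebar\brbisim\Fbar_1$ and $\Ebar'\brbisim\Fbar'$. From $\Ebar\brbisim\Fbar$ and $\Ebar\brbisim\Fbar_1$ and transitivity I get $\Fbar\brbisim\Fbar_1$, and I claim that the witnessing path $\Fbar\Arrow{}\Fbar_1$ consists of no real $\tau$-steps, so that $\Fbar\equiv\Fbar_1$.

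Establishing this claim is the crux, and the part I expect to be the genuine obstacle. Write the real transitions of the path as $\Fbar = G_0 \arrow{\tau} G_1 \arrow{\tau} \cdots \arrow{\tau} G_n = \Fbar_1$ and suppose $n\ge 1$. Here I would invoke the stuttering property of branching {\bisimilarity}: if $P\arrow{\tau}P_1\arrow{\tau}\cdots\arrow{\tau}P_m=Q$ with $P\brbisim Q$, then $P\brbisim P_i$ for all~$i$. Since $G_0 = \Fbar \brbisim \Fbar_1 = G_n$, this gives $\Fbar\brbisim G_1$, so the transition $\Fbar\arrow{\tau}G_1$ is inert with source $\Fbar\in\der(\Fbar)$ --- contradicting concreteness of~$\Fbar$. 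Hence $n=0$ and $\Fbar\equiv\Fbar_1$. The stuttering lemma is the non-trivial input: its proof, which builds a branching bisimulation identifying all intermediate states of the path with its endpoints, is the one place where real work is hidden, so I would either cite it from~\cite{GW96:jacm} or prove it as an auxiliary lemma.

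It remains to read off the matching transition. We now have $\Fbar\alphahidearrow\Fbar'$. If $\alpha\neq\tau$ this is a genuine transition $\Fbar\arrow{\alpha}\Fbar'$. If $\alpha=\tau$ and $\Fbar=\Fbar'$, then $\Ebar'\brbisim\Fbar'=\Fbar\brbisim\Ebar$, so $\Ebar\arrow{\tau}\Ebar'$ would be inert with source $\Ebar\in\der(\Ebar)$, contradicting concreteness of~$\Ebar$; hence this degenerate case cannot occur and $\Fbar\arrow{\tau}\Fbar'$ is a real step. In every case $\Fbar\arrow{\alpha}\Fbar'$ with $\Ebar'\brbisim\Fbar'$, and since $\Ebar'$ and $\Fbar'$ are derivatives of concrete processes, hence concrete, we obtain $\Ebar'\mopcalR\Fbar'$. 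Thus $\calR$ is a strong bisimulation, so $\Ebar\bisim\Fbar$, and Theorem~\ref{theorem-AXd-sound-and-complete} then delivers $\AXd\vdash\Ebar=\Fbar$, completing the proof.
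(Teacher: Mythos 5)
Your proof is correct and follows essentially the same route as the paper's (sketched) proof: restrict branching bisimilarity to concrete processes and show that, in the absence of inert $\tau$-transitions, it is already a strong bisimulation, then invoke Theorem~\ref{theorem-AXd-sound-and-complete}. You additionally make explicit the one step the paper's sketch elides --- that collapsing the path $\Fbar \Arrow{} \Fbar_1$ requires the stuttering property of $\brbisim$, not just the concreteness of the endpoints --- which is exactly the right detail to supply.
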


\begin{proof}[Sketch]
  Consider $\Ebar, \Fbar \in \calEconc$ such that $\Ebar \brbisim
  \Fbar$. Let~$\calR$ be a branching bisimulation relation relating
  $\Ebar$ and~$\Fbar$. Define $\calR'$ as the restriction of~$\calR$
  to the derivatives of $\Ebar$ and~$\Fbar$, i.e., $\calR' = \calR
  \cap ( (\der(\Ebar) \times \der(\Fbar)) \cup (\der(\Fbar) \times
  \der(\Ebar)) )$. Then $\calR'$ is a strong bisimulation relation,
  since none of the processes involved admits an inert
  $\tau$-transition. By the completeness of~$\AXd$, see
  Theorem~\ref{theorem-AXd-sound-and-complete}, it follows that $\AXd
  \vdash \Ebar = \Fbar$.
\end{proof}

\noindent
We are now in a position to prove the main technical result of this
section, viz.\ that branching {\bisimilarity} implies equality under a
prefix. In the proof the notion of a concrete process plays a central
role. 

\begin{lemma}
  \label{lemma-branching-bisim-vs-equal-under-prefix}
  \mbox{}
  \begin{itemize}
  \item [(a)] For all processes $E \in \calE$, a
    concrete process $\Ebar \in \calEconc$ exists such that $E \brbisim
    \Ebar$ and $\AXdb \vdash { \alpha \pref E = \alpha \pref \Ebar }$
    for all $\alpha \in \calA$.
  \item [(b)] For all processes $F,G \in \calE$, if $F \brbisim
    G$ then $\AXdb \vdash { \alpha \pref F = \alpha \pref G }$ for all
    $\alpha \in \calA$.%
  \pagebreak[3]
 \end{itemize}
\end{lemma}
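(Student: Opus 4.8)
The plan is to obtain part~(b) from part~(a) together with Lemma~\ref{lemma-branching-implies-equal-for-concrete-processes}, and to prove part~(a) by induction on the complexity $c(E)$, using the axiom~\hyperlink{B}{B} to strip off inert top-level $\tau$-steps. For part~(b), suppose $F \brbisim G$. Applying part~(a) to $F$ and to $G$ yields concrete $\Fbar,\Gbar \in \calEconc$ with $F \brbisim \Fbar$, $G \brbisim \Gbar$, and $\AXdb \vdash \alpha \pref F = \alpha \pref \Fbar$ as well as $\AXdb \vdash \alpha \pref G = \alpha \pref \Gbar$ for all $\alpha \in \calA$. Transitivity of $\brbisim$ gives $\Fbar \brbisim \Gbar$, so Lemma~\ref{lemma-branching-implies-equal-for-concrete-processes} yields $\AXd \vdash \Fbar = \Gbar$, hence $\AXdb \vdash \alpha \pref \Fbar = \alpha \pref \Gbar$. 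Chaining the three provable equalities gives $\AXdb \vdash \alpha \pref F = \alpha \pref G$, as required.

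For part~(a), I would first bring $E$ into head normal form $E = \sum_{i \in I} \alpha_i \pref E_i$ using A1--A4, with $I = \emptyset$ (that is, $E \equiv \bfzero$) as the concrete base case. By the induction hypothesis each $E_i$ has a concrete $\Ebar_i$ with $E_i \brbisim \Ebar_i$ and $\AXdb \vdash \alpha_i \pref E_i = \alpha_i \pref \Ebar_i$, so $\AXdb \vdash E = \Ehat$ and $E \brbisim \Ehat$ for $\Ehat = \sum_{i \in I} \alpha_i \pref \Ebar_i$. The process $\Ehat$ has all proper derivatives concrete, but may still carry inert $\tau$-steps at the top. I would split the summands accordingly: let $J$ collect the indices with $\alpha_j = \tau$ and $\Ehat \brbisim \Ebar_j$, and write $\Ehat = R_{\mathit{in}} + R_{\mathit{rest}}$ with $R_{\mathit{in}} = \sum_{j \in J} \tau \pref \Ebar_j$ and $R_{\mathit{rest}}$ the remaining summands. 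If $J = \emptyset$, then $\Ehat$ is already concrete and we take $\Ebar := \Ehat$; otherwise fix some $j_0 \in J$.

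Two facts then drive the elimination. First, all $\Ebar_j$ with $j \in J$ are concrete and pairwise branching bisimilar (each satisfies $\Ebar_j \brbisim \Ehat$), so Lemma~\ref{lemma-branching-implies-equal-for-concrete-processes} gives $\AXd \vdash \Ebar_j = \Ebar_{j_0}$, and A3 collapses $R_{\mathit{in}}$ to the single summand $\tau \pref \Ebar_{j_0}$. Second, $\Ebar_{j_0}$ absorbs the non-inert summands, i.e.\ $\Ebar_{j_0} \bisim \Ebar_{j_0} + R_{\mathit{rest}}$, whence $\AXd \vdash \Ebar_{j_0} = \Ebar_{j_0} + R_{\mathit{rest}}$ by Theorem~\ref{theorem-AXd-sound-and-complete}. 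Granting the second fact, I would rewrite $\Ehat$ under the prefix $\alpha$ to $R_{\mathit{rest}} + \tau \pref (\Ebar_{j_0} + R_{\mathit{rest}})$ and apply axiom~\hyperlink{B}{B} with $F := R_{\mathit{rest}}$ and $E := \Ebar_{j_0}$ to obtain $\alpha \pref \Ehat = \alpha \pref (\Ebar_{j_0} + R_{\mathit{rest}}) = \alpha \pref \Ebar_{j_0}$. Taking $\Ebar := \Ebar_{j_0}$, which is concrete and satisfies $E \brbisim \Ehat \brbisim \Ebar_{j_0}$, completes the inductive step.

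The main obstacle is the absorption fact $\Ebar_{j_0} \bisim \Ebar_{j_0} + R_{\mathit{rest}}$. Here I would match each transition $R_{\mathit{rest}} \arrow{\beta} R'$, viewed as a transition of $\Ehat$, against $\Ebar_{j_0}$ using $\Ehat \brbisim \Ebar_{j_0}$: this yields $\Ebar_{j_0} \Arrow{} \bar{H} \arrow{(\beta)} H'$ with $\Ehat \brbisim \bar{H}$ and $R' \brbisim H'$. Since $\Ebar_{j_0}$ is concrete and $\bar{H} \brbisim \Ebar_{j_0}$, the stuttering property of branching bisimilarity forces the $\Arrow{}$-path to be empty, so $\bar{H} = \Ebar_{j_0}$ and $\Ebar_{j_0} \arrow{(\beta)} H'$. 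For $\beta \neq \tau$, and for $\beta = \tau$ taken as a genuine step, the targets $R'$ and $H'$ are both concrete, so $R' \bisim H'$ by Lemma~\ref{lemma-branching-implies-equal-for-concrete-processes}, giving a strong match; the only remaining possibility, $\beta = \tau$ with $H' = \Ebar_{j_0}$, would make $R' \brbisim \Ebar_{j_0} \brbisim \Ehat$ and hence the summand inert, contradicting its membership in $R_{\mathit{rest}}$. Thus every summand of $R_{\mathit{rest}}$ is strongly matched, establishing $\Ebar_{j_0} \bisim \Ebar_{j_0} + R_{\mathit{rest}}$. The delicate points are precisely the use of the stuttering property (to collapse the $\Arrow{}$-path against concreteness) and the case analysis that rules out the hidden-move match for genuinely non-inert summands.
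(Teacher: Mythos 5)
Your proof is correct, and although it follows the same master plan as the paper --- reduce to concrete processes, where branching and strong {\bisimilarity} coincide, and invoke completeness of $\AXd$ for~$\bisim$ --- the inductive step of part~(a) is organized genuinely differently. The paper proves (a) and~(b) by a \emph{simultaneous} induction and, given an inert summand $\tau \pref E_{i_0}$, peels it off one at a time: it shows $\Ebar_{i_0} \brbisim E_{i_0}+H$ and then feeds $E_{i_0}+H$ (of smaller complexity) back into the induction hypotheses for \emph{both} (a) and~(b) before applying axiom~\hyperlink{B}{B}. You instead first concretize all immediate subterms to obtain $\Ehat = \sumiinI \, \alpha_i \pref \Ebar_i$ and then discharge all top-level inert $\tau$-summands in one pass, via the absorption fact $\Ebar_{j_0} \bisim \Ebar_{j_0} + R_{\mathit{rest}}$, completeness of~$\AXd$ for~$\bisim$, and a single application of~\hyperlink{B}{B}. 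This buys a cleaner induction --- part~(b) is not needed inside the proof of part~(a) and becomes a plain corollary --- at the price of establishing the absorption fact at the level of \emph{strong} {\bisimilarity}, whereas the paper only needs the branching {\bisimilarity} $\Ebar_{i_0} \brbisim E_{i_0}+H$. Both arguments rest on the same key observation, that a concrete process admits no nontrivial $\Arrow{}$-path to a branching bisimilar state: you justify it explicitly via the stuttering property of~$\brbisim$ (known from~\cite{GW96:jacm}; the paper states it only for the probabilistic case in Lemma~\ref{stuttering}), while the paper's own proof makes the same move implicitly (``Because $\Ebar_{i_0}$ is concrete, $\Ebar'_{i_0} = \Ebar_{i_0}$''). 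Likewise, your case analysis ruling out the hidden-move match for summands of $R_{\mathit{rest}}$ --- which would force such a summand to be inert --- is exactly the point where the paper argues ``$\alpha \neq \tau$ or $G \nbrbisim E_{i_0}$'', so the two proofs agree on the delicate steps while packaging them differently.
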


\begin{proof}
  We prove statements (a) and~(b) by simultaneously induction on 
  $c(E)$ and $\max \lbrace c(F), c(G) \rbrace$, respectively.

  Basis, $c(E) = 0$. We have that $E= \bfzero+\cdots+\bfzero$. Hence,
  take $\Ebar=\bfzero$. Clearly, part (a) of the lemma holds as
  $\bfzero$ is concrete, $E \brbisim \bfzero$ and $\AXdb \vdash\alpha
  \pref E = \alpha \pref \bfzero$ for all $\alpha \in \calA$.

  Induction step for (a): $c(E) > 0$. The process $E$ can be written as 
  ${ \sumiinI \: \alpha_i \pref E_i }$
  for some finite $I$ and suitable~$\alpha_i\in\calA$ and $E_i\in\calE$.

  First suppose that for some $i_0\in I$ we have $\alpha_{i_0} = \tau$
  and $E_{i_0} \brbisim E$.  Then $\AXd \vdash E = H + \tau.E_{i_0}$,
  where \plat{$H := \sum_{i\in I\setminus\{i_o\}} \alpha_i \pref
    E_i$}.
  \vspace{1pt} By the induction hypothesis~(a), there is a term
  $\Ebar_{i_0} \in \calEconc$ such that $E_{i_0} \brbisim
  \Ebar_{i_0}$. We claim that $\Ebar_{i_0} \brbisim E_{i_0}+H$.

  For suppose $\Ebar_{i_0} \alphaarrow F$.
  \vspace*{1pt}
  Then $E_{i_0} \Arrow{} E'_{i_0} \alphahidearrow G$ where
  $\Ebar_{i_0}\brbisim E'_{i_0}$ 
  and $F \brbisim G$. In case $E_{i_0} = E'_{i_0}$ it follows that $E_{i_0}
  \alphahidearrow G$. 
  Since $\Ebar_{i_o}$ is concrete, either $\alpha \neq \tau$ or $F
  \nbrbisim \Ebar_{i_0}$. 
  Hence, $\alpha \neq \tau$ or $G \nbrbisim E_{i_0}$. So $E_{i_0}
  \alphaarrow G$. 
  Consequently, $E_{i_0}+H \alphaarrow G$.\linebreak[3]
  In case $E_{i_0} \neq E'_{i_0}$ we have $E_{i_0}+H \Arrow{} E'_{i_0}
  \alphahidearrow G$. 

  Now suppose $E_{i_0} + H \alphaarrow F$.  Then either
  $E_{i_0} \alphaarrow F$ or $H \alphaarrow F$.  \vspace*{1pt} In the
  first case we have
  $\Ebar_{i_0} \Arrow{} \Ebar'_{i_0} \alphahidearrow G$ where
  $E_{i_0}\brbisim \Ebar'_{i_0}$ and $F \brbisim G$, \vspace{1pt}
  while in the latter case $E \alphaarrow F$, and since
  $E \brbisim E_{i_0} \brbisim \Ebar_{i_0}$ we have
  $\Ebar_{i_0} \Arrow{} \Ebar'_{i_0} \alphahidearrow G$ where
  $E\brbisim \Ebar'_{i_0}$ and $F \brbisim G$. Because $\Ebar_{i_0}$
  is concrete, $\Ebar'_{i_o} = \Ebar_{i_0}$.  Thus
  $\Ebar_{i_0} \alphahidearrow G$ with $F \brbisim G$, which was
  to be shown.

  Hence $E_{i_0} \brbisim \Ebar_{i_0} \brbisim E_{i_0}+H$. Clearly
  $c(E_{i_0}), c(E_{i_0}+H) < c(E)$.  Therefore, by the induction
  hypothesis (b),
  $\AXdb\vdash \tau \pref E_{i_0} = \tau \pref ( E_{i_0} + H )$.  By
  the induction hypothesis (a), there is a term $\Ebar \in \calEconc$
  such that $\Ebar \brbisim E_{i_0} + H$ \vspace{1pt} and
  $\AXdb\vdash \alpha \pref \Ebar = \alpha \pref ( E_{i_0} + H )$. Now
  we have $E \brbisim E_{i_0} \brbisim E_{i_0} + H \brbisim
  \Ebar$. Therefore,
  \begin{align*}
    \AXdb\vdash\alpha \pref E 
    & = \alpha \pref ( H + \tau \pref E_{i_0} ) 
    && \\
    & = \alpha \pref ( H + \tau \pref ( E_{i_0} + H ) ) 
    && (\text{since $\AXdb\vdash\tau \pref E_{i_0} = \tau \pref
      (E_{i_0} + H)$}) \\ 
    & = \alpha \pref ( E_{i_0} + H ) 
    && (\text{use axiom \hyperlink{B}{B}}) \\
    & = \alpha \pref \Ebar
    && (\text{by the choice of $\Ebar$}).
  \end{align*}
  Hence, we have shown the existence of a desired process $\Ebar$ with
  the required properties.

  Now suppose, for all $i \in I$ we have $\alpha_i \neq \tau$ or
  $E_i\nbrbisim E$. Clearly $c(E_i) < c(E)$ for all $i \in I$.  By
  the induction hypothesis we can find, for all $i \in I$, concrete
  $\Ebar_i$ such that $\Ebar_i \brbisim E_i$ and $\AXdb \vdash \alpha
  \pref \Ebar_i = \alpha \pref E_i$ for all $\alpha \in \calA$. Define
  $\Ebar = { \sumiinI \: \alpha_i \pref \Ebar_i }$. Then $\Ebar
  \brbisim E$ and $\Ebar$~is concrete too, since $\Ebar_i \brbisim E_i
  \nbrbisim E \brbisim \Ebar$ for~$i \in I$ in case $\alpha_i = \tau$.
  Moreover, $\AXdb\vdash E = \Ebar$, since $E = \sumiinI \: \alpha_i
  \pref E_i = \sumiinI \: \alpha_i \pref \Ebar_i = \Ebar$. Hence, for
  $\alpha \in \calA$, $\AXdb \vdash \alpha \pref E = \alpha \pref
  \Ebar$.

  Both the base and the induction step for (b): $\max \lbrace c(F),
  c(G) \rbrace \geqslant 0$. Suppose $F \brbisim G$. Pick $\Fbar,
  \Gbar \in \calEconc$ such that $F \brbisim \Fbar$ and
  $\AXdb\vdash\alpha \pref F = \alpha \pref \Fbar$ for all~$\alpha \in
  \calA$, and similarly for $G$ and~$\Gbar$. Then we have $\Fbar
  \brbisim \Gbar$. Since $\Fbar$ and~$\Gbar$ are concrete it follows
  that $\AXd\vdash\Fbar = \Gbar$, see
  Lemma~\ref{lemma-branching-implies-equal-for-concrete-processes}. Now
  pick any $\alpha \in \calA$. Then we have $\AXdb\vdash\alpha \pref F
  = \alpha \pref \Fbar = \alpha \pref \Gbar = \alpha \pref G$.
\end{proof}

\noindent
By now we have gathered sufficient building blocks to prove the main
result of this section.

\begin{theorem}[$\AXdb$ sound and complete for $\rootedbrbisim$]
  \label{theorem-completeness}
  For all processes $E, F \in \calE$ it holds that $E \rootedbrbisim
  F$ iff $\AXdb \vdash E = F$.
\end{theorem}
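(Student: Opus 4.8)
The plan is to split the biconditional into its two directions. The right-to-left implication, $\AXdb \vdash E = F$ implies $E \rootedbrbisim F$, is exactly the soundness statement of Lemma~\ref{lemma-soundness-base}, so nothing new is needed there. The substance is the left-to-right (completeness) direction, and the heavy lifting for it has already been done in Lemma~\ref{lemma-branching-bisim-vs-equal-under-prefix}(b); what remains is a standard summand-matching argument driven by A1--A4.

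So suppose $E \rootedbrbisim F$. Using A1--A4 I would first bring both terms into the form of a sum of prefixed subterms: by A4 every $\bfzero$ summand can be discarded, so $\AXd \vdash E = \sumiinI \alpha_i \pref E_i$ and $\AXd \vdash F = \sumjinJ \beta_j \pref F_j$ for finite index sets, where---by the operational semantics---the top-level summands are in exact correspondence with the outgoing transitions, i.e.\ the transitions of $E$ are precisely $E \arrow{\alpha_i} E_i$ ($i \in I$) and likewise for~$F$. (If $I = \emptyset$ then $E$ is provably $\bfzero$, and rootedness forces $J = \emptyset$ as well, giving $E = \bfzero = F$; this degenerate case is in fact subsumed by the matching argument below.)

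Next I would use the definition of $\rootedbrbisim$ to match summands. Fix $i \in I$. From $E \arrow{\alpha_i} E_i$ rooted branching bisimilarity yields a transition $F \arrow{\alpha_i} F'$ with $E_i \brbisim F'$; by the summand/transition correspondence this transition is some summand $\beta_j \pref F_j$ with $\beta_j = \alpha_i$ and $F' = F_j$, so $E_i \brbisim F_j$. Branching bisimilarity of the continuations is exactly the hypothesis of Lemma~\ref{lemma-branching-bisim-vs-equal-under-prefix}(b), which therefore gives $\AXdb \vdash \alpha_i \pref E_i = \alpha_i \pref F_j = \beta_j \pref F_j$. Hence every summand of $E$ is provably equal to a summand of $F$, and by the symmetric argument every summand of $F$ is provably equal to a summand of $E$.

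Finally I would assemble these equalities. Since each summand $\alpha_i \pref E_i$ of $E$ is provably equal to a summand of $F$, repeated use of A1, A2 and idempotency~A3 gives $\AXdb \vdash E + F = F$; symmetrically $\AXdb \vdash E + F = E$. Combining these with A1 yields $\AXdb \vdash E = E + F = F$, as desired. The only point requiring a little care---and the closest thing to an obstacle---is the bookkeeping that links syntactic top-level summands to operational transitions and back, so that the matching delivered by the rooted bisimulation can be fed into Lemma~\ref{lemma-branching-bisim-vs-equal-under-prefix}(b); once that correspondence is pinned down, the completeness proof is purely equational, with all the genuinely hard branching-bisimulation reasoning encapsulated in the preceding lemma.
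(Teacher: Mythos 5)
Your proposal is correct and follows essentially the same route as the paper: both directions are split off the same way, soundness is delegated to Lemma~\ref{lemma-soundness-base}, and completeness reduces to writing $E$ and $F$ as sums of prefixed terms, matching summands via the rooted transfer condition, and invoking Lemma~\ref{lemma-branching-bisim-vs-equal-under-prefix}(b) on each matched pair. The only (immaterial) difference is the final equational assembly: the paper rewrites both terms over a common index set $K$ of matching pairs, whereas you derive $\AXdb \vdash E + F = F$ and $\AXdb \vdash E + F = E$ by absorption with A1--A3; both are valid.
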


\begin{proof}
  In view of Lemma~\ref{lemma-soundness-base} we only need to prove
  completeness of~$\AXdb$ for rooted branching
  {\bisimilarity}. Suppose $E, F \in \calE$ and $E \rootedbrbisim
  F$. Let $E = \sumiinI \: \alpha_i \pref E_i$ and $F= \sumjinJ \:
  \beta_j \pref F_j$ for suitable index sets $I$ and~$J$, $\alpha_i,
  \beta_j \in \calA$, $E_i, F_j \in \calE$. Since $E \rootedbrbisim F$
  we have (i)~for all $i \in I$ there is a $j \in J$ such that
  $\alpha_i = \beta_j $ and $E_i \brbisim F_j$, and, symmetrically,
  (ii)~for all $j \in J$ there is an $i \in I$ such that $\alpha_i =
  \beta_j$ and $E_i \brbisim F_j$.  Put $K = \lc (i,j) \in I \times J
  \mid ( \alpha_i = \beta_j ) \land ( E_i \brbisim F_j ) \rc$. Define
  the processes $G, H \in \calE$ by
  \begin{displaymath}
    G = \sumkinK \: \gamma_k \pref G_k
    \quad \text{and} \quad
    H = \sumkinK \: \zeta_k \pref H_k
  \end{displaymath}
  where, for $i \in I$, $\gamma_k = \alpha_i$ and $G_k \equiv E_i$ if
  $k = (i,j)$ for some $j \in J$, and, similarly for $j \in J$,
  $\zeta_k = \beta_j$ and $H_k \equiv F_j$ if $k = (i,j)$ for some $i
  \in I$. Then $G$ and~$H$ are well-defined. Moreover, $\AXd \vdash E
  = G$ and $\AXd \vdash F = H$.

  For $k \in K$, say $k = (i,j)$, it holds that $\gamma_k = \alpha_i =
  \beta_j = \zeta_k$ and $G_k \equiv E_i \brbisim F_j \equiv H_k$, by
  definition of~$K$. By
  Lemma~\ref{lemma-branching-bisim-vs-equal-under-prefix}b we obtain,
  for all $k \in K$, $\AXdb \vdash { \gamma_k \pref G_k = \zeta_k
    \pref H_k }$. From this we get
  \begin{displaymath}
    \AXdb \vdash E 
    = \sumiinI \: \alpha_i \pref E_i
    = \sumkinK \: \gamma_k \pref G_k
    = \sumkinK \: \zeta_k \pref H_k
    = \sumjinJ \: \beta_j \pref F_j
    = F
  \end{displaymath}
  which concludes the proof of the theorem.
\end{proof}



\section{Branching {\bisimilarity} for probabilistic processes} 
\label{sec-bpb}

In this section we define branching {\bisimilarity}
for probabilistic processes.

Following~\cite{BS01:icalp}, we start with adapting the syntax of
processes, now distinguishing non-deterministic processes~$E \in
\calE$ and probabilistic processes~$P \in \calP$.

\begin{definition}[Syntax]
  The classes $\calE$ and~$\calP$ of non-deterministic and
  probabilistic processes over~$\calA$, respectively, ranged over by
  $E$ and~$P$, are given by
  \begin{align*}
    E & \bnfeq \bfzero \mid \alpha \pref P \mid E + E \\
    P & \bnfeq \partial(E) \mid P \oplusr P 
  \end{align*}
  with actions~$\alpha$ from~$\calA$ where $r \in (0,1)$.
\end{definition}

\noindent
The probabilistic process $P_1 \oplusr P_2$ executes the behavior
of~$P_1$ with probability~$r$ and the behavior~$P_2$ with
probability~$1-r$. By convention, $P \probc1 Q$ denotes~$P$ and $P
\probc0 Q$ denotes~$Q$.

We again introduce the complexity measure~$c$, now for
non-deterministic and probabilistic processes, based on the depth of a
process.  The complexity measure $c : \calE\cup \calP \to \bbbn$
is given by $c(\bfzero) = 0$, $c(\alpha \pref P) = c(P) + 1$, $c(E+F)
= c(E)+c(F)$, and~$c(\partial(E)) = c(E) + 1$, $c(P \oplusr Q) = c(P) +
c(Q)$.

\blankline

\noindent
As usual SOS semantics for $\calE$ and~$\calP$ makes use of two types
of transition relations~\cite{HJ90:rtss,BS01:icalp}.

\pagebreak[3]

\begin{definition}[Operational semantics]
  \label{def-pr-transition-relation} \mbox{}
  \begin{itemize}
    \item [(a)] The transition relations ${\rightarrow} \subseteq
      \calE \times \calA \times \Distr(\calE)$ and ${\mapsto}
      \subseteq \calP \times \Distr(\calE)$ are given by\vspace{-2.5ex}
      \begin{displaymath}
        \begin{array}{c}
          \sosrule{P \mapsto \mu}{\alpha \pref P \arrow{\alpha} \mu}
          \: \textsc{\small (pref)}
          \medskip \\
          \sosrule{E_1 \arrow{\alpha} \mu_1}{E_1 + E_2 \arrow{\alpha}
            \mu_1}
          \: \textsc{\small (nd-choice\,1)}
          \qquad
          \sosrule{E_2 \arrow{\alpha} \mu_2}{E_1 + E_2 \arrow{\alpha}
            \mu_2}
          \: \textsc{\small (nd-choice\,2)}
          \bigskip \\
          \sosrule{}{\partial(E) \mapsto \delta(E)}
          \: \textsc{\small (Dirac)}
          \qquad
          \sosrule{P_1 \mapsto \mu_1 \quad P_2 \mapsto
            \mu_2}{P_1 \oplusr P_2 \mapsto \mu_1 \oplusr \mu_2}
          \: \textsc{\small (p-choice)}
        \end{array}
      \end{displaymath}
    \item [(b)] The transition relation ${\rightarrow}
      \subseteq \DistrE \times \calA \times \DistrE$ is such that $\mu
      \alphaarrow \mu'$ whenever $\mu = \bigoplusiinI \: p_i * E_i$,
      $\mu' = \bigoplusiinI \: p_i * \mu'_i $, and $E_i \alphaarrow
      \mu'_i$ for all $i \in I$.
      \vspace{-.5ex}
  \end{itemize}
\end{definition}

\noindent
With $\den{P}$, for $P\in\calP$, we denote the unique distribution $\mu$ such that $P\mapsto\mu$.

The transition relation~$\rightarrow$ on distributions allows for a
probabilistic combination of non-deterministic alternatives resulting
in a so-called combined transition,
cf.~\cite{SL94:concur,Seg95:thesis}. For example, for $E \equiv a
\pref (P \probc{1/2} Q) + a \pref (P \probc{1/3} Q)$, the Dirac
process $\delta(E) \equiv \delta( a \pref (P \probc{1/2} Q) + a \pref
(P \probc{1/3} Q) )$ provides an $a$-transition to $\den{P \probc{1/2} Q}$
as well as an $a$-transition to $\den{P \probc{1/3} Q}$. However, since for
distribution~$\delta(E)$ it holds that $\delta(E) = \frac12 \delta(E)
\oplus \frac12 \delta(E)$ there is also a transition
\begin{displaymath}
  \delta(E) 
  = \half \delta(E) \oplus \half \delta(E)
  \arrow{a} \half \den{P \probc{1/2\mkern1mu} Q} \oplus \half \den{P
  \probc{1/3\mkern1mu} Q} = \den{P \probc{5/12\mkern2mu} Q}.
\end{displaymath}
As noted in~\cite{Sto02:phd}, the ability to combine transitions is
crucial for obtaining transitivity of probabilistic process
equivalences that take internal actions into account.

Referring to the example in the introduction, the processes of $t_0$
and~$u_0$ will be identified.  However, without the splitting of the
source distribution~$\mu$ as provided by
Definition~\ref{def-pr-transition-relation}, we are not able to relate
$t_0$ and~$u_0$ directly, or rather their direct derivatives, while
meeting the natural transfer conditions (see
Definition~\ref{def-probabilistically-branching-bisimilar}).  The
difficulty arises when both $P$ and $Q$ can do a $\tau$-transition to
non-bisimilar processes.

In preparation to the definition of the notion of branching
probabilistic {\bisimilarity} below we introduce some notation.

\begin{definition}
  For $\mu, \mu' \mathbin\in \DistrE$ and $\alpha \mathbin\in \calA$ we write
  $\mu \alphahidearrow \mu'$ iff (i)~$\mu \alphaarrow \mu'$, or
  (ii)~$\alpha = \tau$ and $\mu = \mu_1 \oplusr \mu_2$,
  $\mu' = \mu'_1 \oplusr \mu'_2$ such that $\mu_1 \arrow{\tau} \mu_1'$
  and $\mu_2 = \mu'_2$ for some $r \in [0,1]$. We use
  $\Arrow{}$ to denote the reflexive transitive closure of
  $\arrow{(\tau)}$.
\end{definition}

\noindent
Thus, for example,\vspace{-1ex}
\begin{gather*}
  \myfrac13 \delta( \tau \pref (P \probc{1/2} Q)) \oplus
  \myfrac23 \den{ P \probc{1/2} Q } \  \arrow{(\tau)} \   \den{P \probc{1/2} Q} 
  \quad \text{and} \\
  \myfrac12 \delta( \tau \pref \partial ( \tau \pref P )) \oplus
  \myfrac13 \delta( \tau \pref P ) \oplus \myfrac16 \den{P}
  \Arrow{} \den{P}.
\end{gather*}

We are now in a position to define strong probabilistic {\bisimilarity}
and branching probabilistic {\bisimilarity}. Note that the notion of strong
probabilistic {\bisimilarity} is the variant with combined transitions as defined in 
\cite{SL94:concur,BS01:icalp}. 

\begin{definition}[Strong and branching probabilistic {\bisimilarity}]
  \label{def-probabilistically-branching-bisimilar}
  \begin{itemize}
  \item [(a)] A symmetric relation $\calR \subseteq \DistrE \times
    \DistrE$ is called \textit{decomposable} iff for all $\mu, \nu \in
    \DistrE$ such that $\mu \mopcalR \nu$ and $\mu = \bigoplusiinI \:
    p_i * \mu_i$ there are $\nu_i \in \DistrE$, for~$i \in I$,
    such that
    \begin{displaymath}
      \nu  = \bigoplusiinI \: p_i * \nu_i \ 
      \text{and} \ 
      \mu_i \mopcalR \nu_i 
      \ \text{for all~$i \in I$.}
    \end{displaymath}

  \item [(b)] A decomposable relation $\calR \subseteq \DistrE \times
    \DistrE$ is called a \emph{strong probabilistic bisimulation
    relation} iff for all $\mu, \nu \in \DistrE$ such that $\mu
    \mopcalR \nu$ and $\mu \arrow{\alpha} \mu'$ there is a $\nu'
    \in \DistrE$ such that
    \begin{displaymath}
      \nu \alphaarrow \nu' \ \text{and} \ 
      \mu' \mopcalR \nu'.
    \end{displaymath}

  \item [(c)] A symmetric relation $\calR \subseteq \DistrE \times
    \DistrE$ is called \textit{weakly decomposable} iff for all $\mu,
    \nu \in \DistrE$ such that $\mu \mopcalR \nu$ and $\mu =
    \bigoplusiinI \: p_i * \mu_i$ there are $\nubar, \nu_i \in
    \DistrE$, for~$i \in I$, such that
    \begin{displaymath}
      \nu \Arrow{} \nubar,\ 
      \mu \mopcalR \nubar,\  
      \nubar = \bigoplusiinI \: p_i * \nu_i,\ 
      \text{and} \ 
      \mu_i \mopcalR \nu_i 
      \ \text{for all~$i \in I$.}
    \end{displaymath}

  \item [(d)] A weakly decomposable relation $\calR \subseteq \DistrE
    \times \DistrE$ is called a \emph{branching} probabilistic
    bisimulation relation iff for all $\mu, \nu \in \DistrE$ such
    that $\mu \mopcalR \nu$ and $\mu \arrow{\alpha} \mu'$, there are
    $\nubar, \nu' \in \DistrE$ such that
    \begin{displaymath}
      \nu \Arrow{} \nubar, \ 
      \nubar \alphahidearrow \nu', \ 
      \mu \mopcalR \nubar, \ \text{and} \ 
      \mu' \mopcalR \nu'.
    \end{displaymath}

  \item [(e)] Strong probabilistic {\bisimilarity}, denoted by~${\bisim}
    \subseteq \DistrE \times \DistrE$, and branching probabilistic
    {\bisimilarity}, written as~${\brbisim} \subseteq \DistrE \times
    \DistrE$, are respectively defined as the largest strong
    probabilistic bisimulation relation on~$\DistrE$ and as the
    largest branching probabilistic bisimulation relation
    on~$\DistrE$.
  \end{itemize}
\end{definition}

\noindent
By comparison, on finite processes, as used in this paper,
the branching probabilistic bisimilarity of Segala \&
Lynch~\cite{SL94:concur} can be defined in our framework exactly as in
(d) and~(e) above, but taking a decomposable instead of a weakly
decomposable relation. This yields a strictly finer equivalence,
distinguishing the processes $s_0$, $t_0$ and $u_0$ from the
introduction.

The notion of decomposability has been adopted from~\cite{Hen12:facj}
and weak decomposability from~\cite{LV16}. The underlying idea stems
from \cite{DGHM09}. These notions provide a convenient dexterity to
deal with behavior of sub-distributions, e.g., to distinguish
$\myfrac12 \partial( a \pref \partial(\bfzero)) \oplus \myfrac12
\partial( b \pref \partial(\bfzero))$ from $\partial(\bfzero)$, as
well as combined behavior.

\newcommand*\circled[1]{\tikz[baseline=(char.base)]{
    \node[shape=circle,draw,inner sep=0.5pt] (char) {#1};}}

Our definition of branching probabilistic {\bisimilarity} is based on
distributions rather than on states and has similarity with the notion
of weak distribution {\bisimilarity} proposed by Eisentraut et
al.\ in~\cite{EHKTZ13:qest}. Consider the running example
of~\cite{EHKTZ13:qest}, reproduced in Figure~\ref{PA-of-Eisentraut}
and reformulated in terms of the process language at hand. The states
\circled{1} and~\circled{6} are identified with respect to weak
distribution {\bisimilarity} as detailed
in~\cite{EHKTZ13:qest}. Correspondingly, putting
\begin{displaymath}
  \begin{array}{rcl}
  E_1 & = & \tau \pref \bigl(
  \partial\dg( 
    \tau \pref \partial\dor( \tau \pref P + c \pref Q + \tau \pref R
    \dor) + 
    c \pref Q + \tau \pref R \dg) 
  \probc{1/2}
  {} \\ & & \phantom{\tau \pref \bigl( {}}
  \partial\dg( \tau \pref \mg(
    \partial\dor( \tau \pref P + c \pref Q + \tau \pref R \dor)
    \probc{1/2}
    \partial( \bfzero ) \mg) \dg)
  \bigr) \smallskip \\  
  E_6 & = & \tau \pref ( \partial\dor( \tau \pref P + c \pref Q + \tau \pref
  R \dor) \probc{3/4} \partial( \bfzero ) )
  \end{array}
\end{displaymath}
the non-deterministic processes
$E_1$ and~$E_6$ are identified with respect to branching probabilistic
{\bisimilarity}.

\begin{figure}

\begin{center}
\begin{tikzpicture}[> = stealth', semithick, inner sep = 1pt, scale=0.90]
  \tikzstyle{state} = [circle, draw=Black, fill=LightYellow, minimum size=4.0mm]
  \tikzstyle{split} = [circle, draw=black, fill=black]
  \node [state] (s1) at (0,1) {$1$}; 
  \node [split] (x1) at (0.75,1) {};
  \node [state] (s2) at (1.5,2) {$2$}; 
  \node [split] (x2a) at (2.25,2.5) {}; 
  \node [split] (x2b) at (1.875,3) {}; 
  \node [split] (x2) at (3.0,2) {}; 
  \node [state] (s3) at (1.5,0) {$3$}; 
  \node [split] (x3) at (3.0,0) {}; 
  \node [state] (s4) at (4.5,2) {$4$}; 
  \node [split] (x4) at (3.75,2.5) {}; 
  \node [split] (x4a) at (4.15,3) {}; 
  \node [split] (x4b) at (4.85,3) {}; 
  \node [state] (s5) at (4.5,0) {$5$}; 
  \node [state] (s6) at (6,1) {$6$}; 
  \node [split] (x6) at (5.25,1) {}; 

  \node [inner sep=0pt] (Q) at (3,3) {\large \textcolor{green}{\trianglepafill}}; 
  \node [inner sep=0pt] (P) at (3,4) {\large \textcolor{blue}{\squadfill}}; 
  \node [inner sep=1pt] (R) at (4.5,4) {\large \textcolor{red}{\pentagofill}}; 
  \path [>=] (s1) edge node [above=1pt] {$\tau$} (x1);
  \path [->] (x1) edge node [midway, above left] {$\frac{1}{2}$} (s2);
  \path [->] (x1) edge node [midway, below left] {$\frac{1}{2}$} (s3);
  \draw [>=] (x1) ++ (0.10,-0.15) arc (-48.0:48:0.2);
  \path [>=] (s2) edge node [above=1pt] {$\tau$} (x2);
  \path [->] (x2) edge node [midway, above=1pt] {$1$} (s4);
  \draw [>=] (x2) ++ (0.10,-0.15) arc (-48.0:48:0.2);  
  \path [>=] (s2) edge node [above left=0pt] {$c$} (x2a);
  \path [->] (x2a) edge node [midway, above left=1pt] {$1$} (Q);
  \draw [>=] (x2a) ++ (0.15,0.00) arc (10:70:0.2);
  \path [>=] (s2) edge node [above left=0pt] {$\tau$} (x2b);
  \path [->] (x2b) edge node [midway, above left=1pt] {$1$} (P);
  \draw [>=] (x2b) ++ (0.15,0.00) arc (10:70:0.2);
  \path [>=] (s3) edge node [above=1pt] {$\tau$} (x3);
  \path [->] (x3) edge node [midway, above left] {$\frac{1}{2}$} (s4);
  \path [->] (x3) edge node [midway, below=1pt] {$\frac{1}{2}$} (s5);
  \draw [>=] (x3) ++ (0.18,0.00) arc (0.0:50:0.2);
  \path [>=] (s4) edge node [above right=0pt] {$c$} (x4);
  \path [->] (x4) edge node [above right] {$1$} (Q);
  \draw [>=] (x4) ++ (-0.05,+0.15) arc (120:180:0.2);
  \path [>=] (s4) edge node [above right=0pt] {$\tau$} (x4a);
  \path [->] (x4a) edge node [above right] {$1$} (P);
  \draw [>=] (x4a) ++ (-0.05,+0.15) arc (110:170:0.2);
  \path [>=] (s4) edge node [right=1pt] {$\tau$} (x4b);
  \path [->] (x4b) edge node [right=1pt] {$1$} (R);
  \draw [>=] (x4b) ++ (0.04,+0.18) arc (90:140:0.2);
  \path [>=] (s6) edge node [above=1pt] {$\tau$} (x6);
  \path [->] (x6) edge node [midway, above right] {$\frac{3}{4}$} (s4);
  \path [->] (x6) edge node [midway, below right] {$\frac{1}{4}$} (s5);
  \draw [>=] (x6) ++ (-0.10,+0.15) arc (132.0:228:0.2);
\end{tikzpicture}
\end{center}
  \vspace*{-0.50cm}
  \caption{Probabilistic automaton of~\cite{EHKTZ13:qest}}
  \label{PA-of-Eisentraut}
\end{figure}
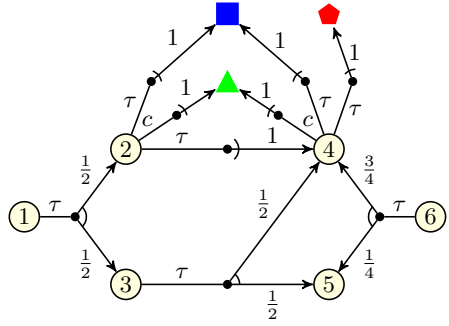

Note that strong and branching probabilistic {\bisimilarity} are
well-defined since any union of strong or branching probabilistic
bisimulation relations is again a strong or branching probabilistic
bisimulation relation. In particular, (weak) decomposability is
preserved under arbitrary unions.

\blankline

\noindent
As we did for the non-deterministic setting, we introduce a notion of
rooted branching probabilistic {\bisimilarity} for distributions over
processes. 

\begin{definition}
  \label{rooted-bpb}
  A symmetric and decomposable relation $\calR \mathbin\subseteq \DistrE
  \times \DistrE$ is called a rooted branching probabilistic
  bisimulation relation iff for all $\mu, \nu \in \DistrE$ such that
  $\mu \mopcalR \nu$ it holds that if $\mu \arrow{\alpha} \mu'$
  for~$\alpha \in \calA$, $\mu' \in \DistrE$ then $\nu \arrow{\alpha}
  \nu'$ and $\mu' \brbisim \nu'$ for some~$\nu' \in \DistrE$. Rooted
  branching probabilistic bisimilarity, denoted by ${\rootedbrbisim}
  \subseteq \DistrE \times \DistrE$, is defined as the largest rooted
  branching probabilistic bisimulation relation.
\end{definition}

\noindent
  Since any union of rooted branching probabilistic bisimulation
  relations is again a rooted branching probabilistic bisimulation
  relation, rooted branching probabilistic
  {\bisimilarity}~$\rootedbrbisim$ is well-defined.

  Note, the two probabilistic processes
  \begin{displaymath}
    P = \partial( \tau \pref \partial(\astop)) \probc{1/2} \partial(\bstop)
    \quad \text{and} \quad
    Q = \partial(\astop) \probc{1/2} \partial(\bstop)
  \end{displaymath}
  are \emph{not} rooted branching probabilistically bisimilar. Any
  rooted branching probabilistic bisimulation relation is by
  decomposability required to relate the respective probabilistic
  components $\partial( \tau \pref \partial(\astop))$ and
  $\partial(\astop)$, which clearly do not meet the transfer
  condition. Thus, since $\partial( \tau \pref \partial(\astop))
  \nrootedbrbisim \partial(\astop)$ also~$P \nrootedbrbisim Q$.

Two non-deterministic processes are considered to be strongly, rooted
branching, or branching probabilistically bisimilar iff their Dirac
distributions are, i.e., $E \bisim F$ iff $\delta(E) \bisim \delta(F)$,
$E \rootedbrbisim F$ iff $\delta(E) \rootedbrbisim \delta(F)$, and $E
\brbisim F$ iff $\delta(E) \brbisim \delta(F)$. 
Two probabilistic processes are considered to be strongly, rooted
branching, or branching probabilistically bisimilar iff their
associated distributions over~$\calE$ are.

We show that branching probabilistic bisimilarity, although not a
congruence for non-deterministic choice, is a congruence for
probabilistic choice. We first need a technical result.

  \begin{lemma}
    \label{flexibelDelen}
    Let $I$ and~$J$ be finite index sets, $p_i,q_j\in[0,1]$ and
    $\xi, \mu_i,\nu_j \in \DistrE$, for $i \in I$ and $j \in J$, with
    $\xi = \bigoplusiinI\: p_i*\mu_i$ and
    $\xi =\bigoplusjinJ \: q_j * \nu_j$. Then there are
    $r_{ij} \in [0,1]$ and $\varrho_{ij} \in \DistrE$ such that
    $\sum_{i\in I} \: r_{ij}=q_j$, $\sum_{j\in J} \: r_{ij}=p_i$,
    \begin{math}
      p_i * \mu_i = \bigoplusjinJ \: r_{ij} * \varrho_{ij}
      \textrm{ for all}~i \in I,
      \textrm{ and } 
      q_j * \nu_j = \bigoplusiinI \: r_{ij} * \varrho_{ij}
      \textrm{ for all}~j \in J.
    \end{math}
\end{lemma}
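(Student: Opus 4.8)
The plan is to lift the problem to the level of the scaled (sub-)distributions themselves. Write $\theta_{ij}$ for the intended combination $r_{ij} * \varrho_{ij}$, regarded as the function $E \mapsto r_{ij} \mycdot \varrho_{ij}(E)$ on $\calE$. Then the two required identities $p_i * \mu_i = \bigoplusjinJ r_{ij} * \varrho_{ij}$ and $q_j * \nu_j = \bigoplusiinI r_{ij} * \varrho_{ij}$ read, pointwise, $\sumjinJ \theta_{ij}(E) = p_i \mycdot \mu_i(E)$ and $\sumiinI \theta_{ij}(E) = q_j \mycdot \nu_j(E)$ for every $E \in \calE$. So I would first reduce the lemma to producing nonnegative, finitely-supported functions $\theta_{ij}$ on $\calE$ with exactly these row and column sums: from such a family the weights are recovered as the total masses $r_{ij} = \sum_{E \in \calE} \theta_{ij}(E)$ and the distributions by normalisation, $\varrho_{ij} = \theta_{ij}/r_{ij}$ whenever $r_{ij} > 0$ (and $\varrho_{ij} = \delta(\bfzero)$, say, otherwise, which is harmless since then $\theta_{ij} = 0$). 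The marginal conditions $\sumjinJ r_{ij} = p_i$ and $\sumiinI r_{ij} = q_j$, as well as $r_{ij} \in [0,1]$ and $\varrho_{ij} \in \DistrE$, then drop out by summing the row and column identities over $E$ and using that each $\mu_i, \nu_j$ has total mass one.

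The heart of the argument is an explicit construction of $\theta_{ij}$ as the coupling of the two decompositions that is conditionally independent given the common distribution $\xi$. For $E \in \mathit{spt}(\xi)$ I would set $\theta_{ij}(E) = (p_i \mycdot \mu_i(E)) \mycdot (q_j \mycdot \nu_j(E)) / \xi(E)$, and $\theta_{ij}(E) = 0$ for $E \notin \mathit{spt}(\xi)$. This is well defined since $\xi(E) > 0$ on its support, is clearly nonnegative, and is finitely supported because $\mathit{spt}(\theta_{ij}) \subseteq \mathit{spt}(\xi)$.

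The verification then uses nothing beyond the standing hypothesis that both representations give the \emph{same} $\xi$, that is $\sumiinI p_i \mycdot \mu_i(E) = \xi(E) = \sumjinJ q_j \mycdot \nu_j(E)$ for every $E$. Summing $\theta_{ij}(E)$ over $j$ factors $p_i \mycdot \mu_i(E)/\xi(E)$ out of the sum and leaves $\sumjinJ q_j \mycdot \nu_j(E) = \xi(E)$, yielding $\sumjinJ \theta_{ij}(E) = p_i \mycdot \mu_i(E)$; summing over $i$ is entirely symmetric and gives the column identity. Together with the reduction of the first paragraph this delivers all four conclusions.

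I do not expect a serious obstacle; the only care needed is with the degenerate values, which is why I would treat them explicitly. When $\xi(E) = 0$, nonnegativity of the summands forces $p_i \mycdot \mu_i(E) = q_j \mycdot \nu_j(E) = 0$ for all $i, j$, so setting $\theta_{ij}(E) = 0$ there is consistent with both marginals; and a vanishing weight $p_i$ (or $q_j$) makes the whole row (column) of $\theta$ zero, matching $r_{ij} = 0$. Should a reader prefer to avoid the quotient altogether, the same family can instead be built by induction on $|I| + |J|$, repeatedly peeling off a shared atom of mass $\min\{p_i, q_j\}$; but the closed-form coupling renders the two marginal computations immediate and avoids that bookkeeping.
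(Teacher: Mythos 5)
Your construction is exactly the one in the paper: the coupling $\theta_{ij}(E) = p_i\mu_i(E)\,q_j\nu_j(E)/\xi(E)$ on $\mathit{spt}(\xi)$ is precisely the paper's $r_{ij}*\varrho_{ij}$, with $r_{ij}$ the total mass and $\varrho_{ij}$ the normalisation, and the marginal verifications coincide. The proposal is correct and takes essentially the same approach, with somewhat more explicit attention to the degenerate cases.
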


\begin{proof}
  Let
  \begin{math}\displaystyle
    r_{ij} = \sum_{E\in \mathit{spt}(\xi)} \:
    \frac{\rule[-3pt]{0pt}{10pt} p_i\mu_i(E) \, q_j
      \nu_j(E)}{\rule{0pt}{7pt} \xi(E)}
  \end{math}
  for all $i \in I$ and $j \in J$.
  In case $r_{ij}=0$ choose $\varrho_{ij}\in\DistrE$ arbitrarily.
  Otherwise, define for all $E \in \cal{E}$, $i \in I$, $j \in J$:
  \begin{displaymath}
    \varrho_{ij}(E) =
    \left\lbrace
      \begin{array}{cl}
        \displaystyle\frac
        {\rule[-3pt]{0pt}{10pt} p_i\mu_i(E) \, q_j \nu_j(E)}
        {\rule{0pt}{7pt} r_{ij} \, \xi(E)} \quad &
        \textrm{if } \xi(E)>0, \\
        0 & \textrm{otherwise.}
      \end{array}
    \right.
  \end{displaymath}
  With these definitions it is straightforward to check the
  required properties.
\end{proof}

\begin{lemma}
  \label{unrooted congruence}
  Let $\mu_1,\mu_2,\nu_1,\nu_2\in \DistrE$ and $r\in(0,1)$. If $\mu_1
  \brbisim \nu_1$ and $\mu_2 \brbisim \nu_2$ then $\mu_1 \probc{r}
  \mu_2 \brbisim \nu_1 \probc{r} \nu_2$.
\end{lemma}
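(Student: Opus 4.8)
The plan is to build one branching probabilistic bisimulation relation that contains the pair in question and then invoke maximality of $\brbisim$. I would take
\[
  \calR = {\brbisim}\ \cup\ \bigl\{\,(\mu_1\probc{r}\mu_2,\ \nu_1\probc{r}\nu_2)\ \big|\ \mu_1\brbisim\nu_1,\ \mu_2\brbisim\nu_2,\ r\in(0,1)\,\bigr\},
\]
which is symmetric and reflexive because $\brbisim$ is, and which contains $(\mu_1\probc{r}\mu_2,\nu_1\probc{r}\nu_2)$. It then remains to check that $\calR$ is weakly decomposable and meets the transfer condition of Definition~\ref{def-probabilistically-branching-bisimilar}(d); the lemma follows at once. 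Two auxiliary compatibility facts about $\probc{r}$ carry most of the weight and I would establish them first: (1) if $\sigma_1\Arrow{}\sigma_1'$ and $\sigma_2\Arrow{}\sigma_2'$ then $\sigma_1\probc{r}\sigma_2\Arrow{}\sigma_1'\probc{r}\sigma_2'$, and (2) if $\sigma_1\arrow{(\alpha)}\sigma_1'$ and $\sigma_2\arrow{(\alpha)}\sigma_2'$ then $\sigma_1\probc{r}\sigma_2\arrow{(\alpha)}\sigma_1'\probc{r}\sigma_2'$.

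Fact~(1) is proved by interleaving: a single $\arrow{(\tau)}$ step on one component lifts to a single $\arrow{(\tau)}$ step on the $\probc{r}$-combination by clause~(ii) of the definition of $\arrow{(\tau)}$ (move that component's sub-part, keep the rest), and one concatenates the two lifted derivations through the intermediate $\sigma_1'\probc{r}\sigma_2$. For fact~(2) with $\alpha\neq\tau$ both steps are genuine transitions, and I would merge them by forming the disjoint union of the two state-wise transition families of Definition~\ref{def-pr-transition-relation}(b), reweighted by $r$ and $1{-}r$. For $\alpha=\tau$ I would use the uniform reading of $\arrow{(\tau)}$: each step writes $\sigma_i = \theta_i\probc{s_i}\rho_i$, $\sigma_i' = \theta_i'\probc{s_i}\rho_i$ with $\theta_i\arrow{\tau}\theta_i'$ a genuine transition (clause~(i) is $s_i=1$; $s_i=0$ is a stutter). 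The two moving parts $\theta_1,\theta_2$ then fuse into one genuine $\tau$-transition while the inert remainders $\rho_1,\rho_2$ are carried along, so $\sigma_1\probc{r}\sigma_2\arrow{(\tau)}\sigma_1'\probc{r}\sigma_2'$ is again a single step of this shape; the all-stutter case $s_1=s_2=0$ reduces to reflexivity of $\arrow{(\tau)}$, which holds via clause~(ii) with $r=0$.

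For weak decomposability, given $(\mu_1\probc{r}\mu_2,\nu_1\probc{r}\nu_2)\in\calR$ and an arbitrary splitting $\mu_1\probc{r}\mu_2 = \bigoplusiinI p_i*\xi_i$, the key tool is Lemma~\ref{flexibelDelen}, applied to this splitting and to the two-term representation of $\mu_1\probc{r}\mu_2$ with weights $r,1{-}r$ on $\mu_1,\mu_2$. It returns weights $s_{i1},s_{i2}$ and distributions $\varrho_{i1},\varrho_{i2}$ refining both, which re-express $\mu_1$ and $\mu_2$ as $\bigoplus$-combinations indexed by $I$ and write each $\xi_i = \varrho_{i1}\probc{t_i}\varrho_{i2}$. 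Applying weak decomposability of $\brbisim$ to $\mu_1\brbisim\nu_1$ and to $\mu_2\brbisim\nu_2$ separately produces $\nu_1\Arrow{}\nubar_1$, $\nu_2\Arrow{}\nubar_2$ with matching pieces $\sigma_{i1},\sigma_{i2}$; fact~(1) then assembles $\nubar := \nubar_1\probc{r}\nubar_2$ with $\nu_1\probc{r}\nu_2\Arrow{}\nubar$ and $\mu_1\probc{r}\mu_2\mopcalR\nubar$, and the required splitting is $\nubar = \bigoplusiinI p_i*\eta_i$ with $\eta_i := \sigma_{i1}\probc{t_i}\sigma_{i2}$, where $\xi_i\mopcalR\eta_i$ holds by construction of $\calR$ (indices with $p_i=0$ being matched reflexively).

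The transfer condition is handled along the same split--match--recombine pattern, and is where I expect the only real difficulty. From $\mu_1\probc{r}\mu_2\arrow{\alpha}\mu'$ I would first invoke Lemma~\ref{flexibelDelen} on the combined-transition representation to split the move into $\mu_1\arrow{\alpha}\mu_1'$ and $\mu_2\arrow{\alpha}\mu_2'$ with $\mu' = \mu_1'\probc{r}\mu_2'$, match each component against $\brbisim$, and recombine the two matches with facts~(1) and~(2). The crux --- and the main obstacle --- is the $\alpha=\tau$ case: the branching transfer condition demands that the matching move after the $\Arrow{}$ prefix be a \emph{single} $\arrow{(\tau)}$ step, whereas matching $\mu_1$ and $\mu_2$ independently naturally yields one $\tau$-step per side. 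Fact~(2) is exactly what rescues this, since the definition of $\arrow{(\tau)}$ is designed so that two genuine component $\tau$-transitions fuse into one (clause~(i)) and a genuine step on one side paired with a stutter on the other is precisely clause~(ii); verifying that these fusions preserve the weights and the $\calR$-relatedness of sources and targets is the one genuinely delicate computation.
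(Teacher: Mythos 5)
Your proposal is correct and follows essentially the same route as the paper: the same candidate relation (pairs $\xi'\probc{s}\xi''$, $\eta'\probc{s}\eta''$ with componentwise related parts), Lemma~\ref{flexibelDelen} to refine an arbitrary splitting against the two-term $r,1{-}r$ decomposition for weak decomposability, and the split--match--recombine argument for the transfer condition. The only difference is that you make explicit the lifting facts for $\Arrow{}$ and $\arrow{(\alpha)}$ under $\probc{r}$ (including the fusion of two component $\tau$-moves into one combined step), which the paper uses implicitly; both versions are sound.
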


\begin{proof}
  Suppose $\mu_1 \brbisim \nu_1$ and $\mu_2 \brbisim \nu_2$ through
  branching probabilistic bisimulation relations $\calR_1$
  and~$\calR_2$. We show that the relation $\calR = \lc \langle \xi'
  \probc{s} \xi'' , \eta' \probc{s} \eta'' \rangle \mid \xi'
  \mathop{\calR_1} \eta' ,\, \xi'' \mathop{\calR_2} \eta'' ,\, {s \in
    (0,1)} \rc$ is a branching probabilistic bisimulation relation
  relating $\mu_1 \probc{r} \mu_2$ with $\nu_1 \probc{r} \nu_2$.
  
  Symmetry is straightforward. We show that $\calR$ is weakly
  decomposable. So, assume $\xi \mopcalR \eta$ and $\xi =
  \bigoplusiinI \: p_i * \xi_i$. Thus, $\xi = \xi' \probc{s} \xi''$
  and $\eta = \eta' \probc{s} \eta''$ with $\xi' \mathop{\calR_1}
  \eta'$ and $\xi'' \mathop{\calR_2} \eta''$ for suitable $\xi',
  \xi'', \eta', \eta'' \in \DistrE$. By Lemma~\ref{flexibelDelen}
  there must be $s'_i, s''_i \geqslant 0$ and $\xi'_i, \xi''_i \in
  \DistrE$, for $i \in I$, such that
  \begin{displaymath}
    \begin{array}{lcl}
      \xi' & = & \bigoplusiinI \: s'_i/s * \xi'_i \, , 
      \smallskip \\ 
      \xi''  & = & \bigoplusiinI \: s''_i/(1{-}s) * \xi''_i
      \, , 
      \smallskip \\ 
      \xi_i & = & ( s'_i/p_i * \xi'_i ) \oplus ( s''_i/p_i * \xi''_i )
      \quad \text{for all $i\in I$\,,}
    \end{array}
  \end{displaymath}
  $\sum_{i\in I}s'_i = s$, $\sum_{i\in I}s''_i = 1{-}s$,
  and $s'_i + s''_i = p_i$ for all $i \in I$.
  Since $\calR_1$ and~$\calR_2$ are weakly decomposable, there are
  $\etabar', \etabar'', \eta'_i, \eta''_i \in \DistrE$ for~$i \in I$
  such that
  \begin{displaymath}
    \begin{array}{lclclcl}
      \eta' \Arrow{} \etabar' & \quad &
      \xi' \mathop{\calR_1} \etabar' , & \quad &
      \etabar' \, = \bigoplusiinI \:
      {s'_i}/{s} * \eta'_i , & \quad &
      \xi'_i \mathop{\calR_1} \eta'_i
      , \smallskip \\ 
      \eta'' \Arrow{} \etabar'' &&
      \xi'' \mathop{\calR_2} \etabar'' &&
      \etabar'' = \bigoplusiinI \:
      {s''_i}/{(1{-}s)} * \eta''_i &&
      \xi''_i \mathop{\calR_1} \eta''_i 
  \end{array}
  \end{displaymath}
  for all $i \in I$. Therefore, we can conclude that 
  \begin{displaymath}
    \eta' \probc{s} \eta'' \Arrow{} \etabar' \probc{s} \etabar''
    \quad \text{and} \quad
    ( \xi' \probc{s} \xi'' ) \mopcalR \mkern1mu ( \etabar'
    \probc{s} \etabar'' )
    \, .
  \end{displaymath}
  Moreover, 
  \begin{displaymath}
    \begin{array}{rcl}
      \etabar' \probc{s} \etabar'' & = &
      \bigl( \bigoplusiinI \: {s'_i}/{s} * \eta'_i \bigr)
      \probc{s} 
      \bigl( \bigoplusiinI \: {s''_i}/{(1{-}s)} * \eta''_i \bigr) 
      \smallskip \\
      & = &
      \bigl( \bigoplusiinI \: s'_i * \eta'_i \bigr)
      \oplus 
      \bigl( \bigoplusiinI \: s''_i * \eta''_i \bigl)
      \smallskip \\ & = & \phantom{\bigl(}
      \bigoplusiinI \: p_i * ( \eta'_i \probc{s'_i/p_i} \eta''_i )
    \end{array}
  \end{displaymath}
  and $\xi_i = ( \xi'_i \probc{s'_i/p_i} \xi''_i ) \mopcalR
  \mkern1mu ( \eta'_i \probc{s'_i/p_i} \eta''_i )$ for all $i \in I$.
  This finishes the argument that $\calR$ is decomposable.
  
  Next, we show that $\calR$ satisfies the transfer property for
  bisimulations. Suppose we have $(\xi_1 \probc{r} \xi_2) \calR
  (\eta_1 \probc{r} \eta_2)$, thus $\xi_1 \calR_1 \xi_2$ and $\eta_1
  \calR_2 \eta_2$. If $\xi_1 \probc{r} \xi_2 \alphaarrow \xi'$ then
  $\xi_1 \alphaarrow \xi'_1$, $\xi_2 \alphaarrow \xi'_2$ and $\xi' =
  \xi'_1 \probc{r} \xi'_2$ for suitable $\xi'_1, \xi'_2 \in
  \DistrE$.\vspace{1pt} By assumption, $\bar\eta_1,\eta'_1$
  and~$\bar\eta_2, \eta'_2$ exist such that $\eta_1 \Arrow{} \etabar_1
  \alphahidearrow \eta'_1$, $\eta_2 \Arrow{} \etabar_2 \alphahidearrow
  \eta'_2$, $\xi_1 \mathop{\calR_1} \etabar_1$, $\xi_2
  \mathop{\calR_2} \etabar_2$, $\xi'_1 \mathop{\calR_1} \eta'_1$,
  and~$\xi'_2 \mathop{\calR_2} \eta'_2$. From this we obtain $\eta_1
  \probc{r} \eta_2 \Arrow{} \etabar_1 \probc{r} \etabar_2\linebreak[3]
  \alphahidearrow \eta'$, $\xi_1 \probc{r} \xi_2 \mopcalR
  \etabar_1 \probc{r} \etabar_2$ and~$\xi' \mopcalR \eta'$ for
  $\eta' = \eta'_1 \probc{r} \eta'_2$.
\end{proof}

\noindent
A direct consequence of the previous lemma is that if $P_1 \brbisim
Q_1$ and $P_2 \brbisim Q_2$ then $P_1 \probc{r} P_2 \brbisim Q_1
\probc{r} Q_2$.

\begin{lemma} [Congruence]
  \label{lemma-congruence-prob}
  The relations $\bisim$, $\rootedbrbisim$, and~$\brbisim$ on $\calE$
  and~$\calP$ are equivalence relations, and the relations $\bisim$ and
  $\rootedbrbisim$ are congruences on $\calE$ and~$\calP$.
\end{lemma}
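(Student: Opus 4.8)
The plan is to handle the three relations in parallel as far as possible and to isolate the single genuinely delicate point, namely transitivity in the branching case. Reflexivity and symmetry are immediate: the identity relation on $\DistrE$ is trivially (weakly) decomposable and satisfies both the strong and the branching transfer condition, so it is a bisimulation relation of each of the three kinds, and hence every distribution is related to itself; symmetry is built into Definition~\ref{def-probabilistically-branching-bisimilar}, as each bisimulation notion is required to be symmetric. The substance of the lemma is therefore transitivity of all three relations together with the congruence clauses for $\bisim$ and $\rootedbrbisim$.

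For transitivity I would argue through relational composition: given bisimulation relations $\calR_1$ and $\calR_2$ of a fixed kind, I would show that $\{\, (\mu,\rho) \mid \mu \mathop{\calR_1} \nu \mathop{\calR_2} \rho \text{ for some } \nu \in \DistrE \,\}$ is again a bisimulation relation of that kind, so that transitivity of the associated {\bisimilarity} follows from its being the largest such relation. In the strong case this is routine: decomposability of the composition is obtained by splitting $\mu = \bigoplusiinI p_i * \mu_i$ first through $\calR_1$ and then through $\calR_2$, and the transfer condition is discharged by two successive single-step simulations of $\mu \alphaarrow \mu'$. The branching case for $\brbisim$ is where the work lies, and it is the main obstacle. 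Matching a move $\mu \alphaarrow \mu'$ through the intermediate $\nu$ first produces, via $\calR_1$, a whole weak path $\nu \Arrow{} \nubar \alphahidearrow \nu'$ with $\mu \mathop{\calR_1} \nubar$ and $\mu' \mathop{\calR_1} \nu'$; this entire path must then be replayed on the $\calR_2$-side. Since a single application of the branching transfer property matches only one transition, I would first prove a stuttering lemma: if $\nu \mathop{\calR_2} \rho$ and $\nu \Arrow{} \nubar$, then $\rho \Arrow{} \bar\rho$ with $\nubar \mathop{\calR_2} \bar\rho$, the distributions along the two paths being kept pairwise related. This I would establish by induction on the length of the $\Arrow{}$-path, invoking at each $\arrow{(\tau)}$-step the branching transfer property together with weak decomposability of $\calR_2$; the semi-branching reformulation, in which a matching $\tau$-step may be skipped, is what keeps the induction from breaking down on inert moves. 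Weak decomposability of the composition is obtained analogously, the $\Arrow{}$ introduced on the $\nu$-side by decomposing through $\calR_1$ being transported across $\calR_2$ by the same stuttering lemma. Once $\brbisim$ is known to be transitive, transitivity of $\rootedbrbisim$ is comparatively easy: the first, strong, transition is matched in one step through each of $\calR_1$ and $\calR_2$, yielding a single strong step overall, while the residuals, related by $\brbisim$ on each side, are $\brbisim$-related overall by the transitivity just obtained.

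For the congruence clauses I would verify closure of $\bisim$ and $\rootedbrbisim$ under each operator of the signature. Closure under prefix, $\partial$, and $+$ is direct from the operational rules: the only move of $\delta(\alpha \pref P)$ is $\alphaarrow \den{P}$, matched by $\alphaarrow \den{Q}$ using $\den{P} \brbisim \den{Q}$ (valid since ${\bisim} \subseteq {\rootedbrbisim} \subseteq {\brbisim}$, as in the non-deterministic case); for $\partial$ one simply notes $\den{\partial(E)} = \delta(E)$; and for $+$ one unfolds a move $\delta(E_1 + E_2) \alphaarrow \mu'$ into a move of $E_1$ or of $E_2$ and matches it on the corresponding summand, the root condition being available by hypothesis. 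The only substantial clause is probabilistic choice. For $\bisim$ I would mirror the construction of Lemma~\ref{unrooted congruence}, but using the strong (fully decomposable) transfer condition, with Lemma~\ref{flexibelDelen} supplying the simultaneous splitting required for decomposability. For $\rootedbrbisim$ I would combine the root condition on the first transition with the fact, recorded just after Lemma~\ref{unrooted congruence}, that $\brbisim$ is already a congruence for $\oplusr$, so that the residuals after the first matched step are related by the already-established branching congruence. I would finally remark that $\brbisim$ is deliberately not claimed to be a congruence for $+$, consistent with the non-congruence example of Section~\ref{sec-nondet}.
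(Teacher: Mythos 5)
Your treatment of probabilistic choice coincides with the paper's: the relation $\lc \langle \mu_1 \probc{r} \mu_2 , \nu_1 \probc{r} \nu_2 \rangle \mid \mu_1 \mathop{\calR_1} \nu_1 ,\, \mu_2 \mathop{\calR_2} \nu_2 \rc$, decomposability via Lemma~\ref{flexibelDelen}, and the residuals discharged by Lemma~\ref{unrooted congruence}. On the equivalence-relation part you in fact do more than the paper, which omits transitivity as a ``straightforward auxiliary result''; your relational-composition argument with a stuttering lemma transporting $\Arrow{}$-paths across the second bisimulation is a reasonable way to fill that in, and you correctly identify that the $\alphahidearrow$ (semi-branching) formulation is what makes the induction go through.

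There is, however, a concrete gap in your congruence argument for non-deterministic choice. You assert that a move $\delta(E_1+E_2) \alphaarrow \mu'$ ``unfolds into a move of $E_1$ or of $E_2$'' and is matched on the corresponding summand. Under Definition~\ref{def-pr-transition-relation}(b) this is not the only possibility: writing $\delta(E_1+E_2)$ as a convex combination of copies of itself, the distribution can perform a \emph{combined} transition $\mu' = \mu'_1 \oplusr \mu'_2$ in which $\mu'_1$ originates from the $E_1$-summand and $\mu'_2$ from the $E_2$-summand. This is exactly the case the paper singles out as the interesting one (``We only consider the last case, as the first two are simpler''), and your case analysis as stated fails there, since neither summand alone matches $\mu'$. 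The repair uses the same device you already deploy for $\oplusr$: obtain $\nu'_1, \nu'_2$ with $\delta(F_1) \alphaarrow \nu'_1$, $\delta(F_2) \alphaarrow \nu'_2$, $\mu'_1 \brbisim \nu'_1$ and $\mu'_2 \brbisim \nu'_2$ from the two root hypotheses, and recombine with Lemma~\ref{unrooted congruence} to get $\delta(F_1+F_2) \alphaarrow \nu'_1 \oplusr \nu'_2$ with $\mu' \brbisim \nu'_1 \oplusr \nu'_2$. So the missing case is easily closed with tools you already have, but as written the proof of the $+$ clause is incomplete. (Your remark that the only move of $\delta(\alpha \pref P)$ is to $\den{P}$ survives the same objection only because all combined transitions of $\alpha \pref P$ with itself collapse to $\den{P}$.)
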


\begin{proof}
  The proof of $\bisim$, $\rootedbrbisim$, and~$\brbisim$ being
  equivalence relations involves a number of straightforward auxiliary
  results, in particular for the case of transitivity, and are omitted
  here.

  Regarding congruence the interesting cases are for non-deterministic
  and probabilistic choice with respect to rooted branching probabilistic
  {\bisimilarity}.
  Suppose $E_1 \rootedbrbisim F_1$ and $E_2 \rootedbrbisim F_2$. Then
  $\calR = \singleton{ \langle \delta(E_1+E_2) , \delta(F_1+F_2)
    \rangle }^\dagger$ is a rooted branching probabilistic
  bisimulation relation. Clearly, $\calR$~is symmetric and
  decomposable. Moreover, if $\delta(E_1+E_2) \arrow{\alpha} \mu'$,
  then either $\delta(E_1) \alphaarrow \mu'$, $\delta(E_2) \alphaarrow
  \mu'$, or $\delta(E_1) \alphaarrow \mu'_1$, $\delta(E_2) \alphaarrow
  \mu'_2$ and $\mu' = \mu'_1 \probc{r} \mu'_2$ for suitable $\mu'_1,
  \mu'_2 \in \DistrE$ and~$r \in (0,1)$. We only consider the last
  case, as the first two are simpler.  Hence, we can find $\nu'_1,
  \nu'_2 \in \DistrE$ such that $\delta(F_1) \alphaarrow \nu'_1$,
  $\delta(F_2) \alphaarrow \nu'_2$, $\mu'_1 \brbisim \nu'_1$,
  and~$\mu'_2 \brbisim \nu'_2$. From this it follows that
  $\delta(F_1+F_2) \alphaarrow \nu'$ and~$\mu' \brbisim \nu'$
  for~$\nu' = \nu'_1 \probc{r} \nu'_2$ using Lemma \ref{unrooted
    congruence}.

  Suppose $P_1 \rootedbrbisim Q_1$ and $P_2 \rootedbrbisim Q_2$ with
  $\calR_1$ and~$\calR_2$ rooted branching probabilistic bisimulation
  relations relating $\den{P_1}$ with~$\den{Q_1}$, and $\den{P_2}$
  with~$\den{Q_2}$, respectively, and fix some~$r\in(0,1)$. Then
  $\calR = \lc \langle {\mu}_1 \probc{r} {\mu}_2 , \nu_1 \probc{r}
  \nu_2 \rangle \mid {\mu}_1 \calR_1 \nu_1 ,\, {\mu}_2 \calR_2 \nu_2
  \rc$ is a rooted branching probabilistic bisimulation relation
  relating $\den{P_1 \probc{r} P_2}$ with~$\den{Q_1 \probc{r} Q_2}$.
  Symmetry is straightforward and decomposability can be shown along the lines
    of the proof of weak decomposability for Lemma~\ref{unrooted congruence}. 
    
  So we are left to prove the transfer property. 
  Suppose $({\mu}_1 \probc{r} {\mu}_2) \calR (\nu_1 \probc{r}
  \nu_2)$, thus $\mu_1 \calR_1 \mu_2$ and $\nu_1 \calR_2 \nu_2$. If
  $\mu_1 \probc{r} {\mu}_2 \alphaarrow {\mu'}$ then ${\mu}_1
  \alphaarrow \mu'_1$, $\mu_2 \alphaarrow \mu'_2$ and $\mu' = \mu'_1
  \probc{r} \mu'_2$ for suitable $\mu'_1, \mu'_2 \in \DistrE$. By
  assumption, $\nu'_1$ and~$\nu'_2$ exist such that $\nu_1 \alphaarrow
  \nu'_1$, $\nu_2 \alphaarrow \nu'_2$, $\mu'_1 \brbisim \nu'_1$,
  and~$\mu'_2 \brbisim \nu'_2$. From this we obtain $\nu_1 \probc{r}
  \nu_2 \alphaarrow \nu'$ and by Lemma~\ref{unrooted congruence}~$\mu'
  \brbisim \nu'$ for $\nu' = \nu'_1 \probc{r} \nu'_2$.
\end{proof}

\section{A few fundamental properties of branching bisimilarity}
\label{fundamental}

In this section we show two fundamental properties of branching probabilistic
bisimilarity that we need further on: the stuttering property, known
from~\cite{GW96:jacm} for non-deterministic processes, and
cancellativity of probabilistic choice with respect to ${\brbisim} \mkern1mu$.

\begin{lemma}[Stuttering Property]
  \label{stuttering}
  If $\mu \Arrow{} \bar\mu \Arrow{} \nu$ and $\mu\brbisim\nu$
  then $\mu\brbisim\bar\mu$.
\end{lemma}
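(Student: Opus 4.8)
The plan is to exhibit a branching probabilistic bisimulation relation containing the pair $(\bar\mu,\mu)$; since $\brbisim$ is the largest such relation and is symmetric, this yields $\mu\brbisim\bar\mu$. Fixing $\mu,\bar\mu,\nu$ with $\mu\Arrow{}\bar\mu\Arrow{}\nu$ and $\mu\brbisim\nu$, I would work with the symmetric relation
\[
  \calR = {\brbisim} \cup \lc (\rho,\mu),(\mu,\rho) \mid \mu\Arrow{}\rho \text{ and } \rho\Arrow{}\nu \rc .
\]
Because $\mu\Arrow{}\bar\mu\Arrow{}\nu$, the pair $(\bar\mu,\mu)$ lies in $\calR$, so it remains to check that $\calR$ is a branching probabilistic bisimulation. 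Pairs already in $\brbisim$ cause no trouble, as $\brbisim$ is itself such a relation and its witnesses stay inside ${\brbisim}\subseteq\calR$; throughout I use that $\brbisim$ is reflexive, being an equivalence relation by Lemma~\ref{lemma-congruence-prob}.

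For the transfer property I consider a fresh pair, say $(\rho,\mu)$ with $\mu\Arrow{}\rho\Arrow{}\nu$ (the pair $(\mu,\rho)$ being symmetric). If the left component moves, $\rho\alphaarrow\rho'$, the responder $\mu$ walks along the path to $\rho$ and then copies the step: $\mu\Arrow{}\rho\alphahidearrow\rho'$, and reflexivity gives $\rho\mopcalR\rho$ and $\rho'\mopcalR\rho'$, as required. The opposite and harder direction is when $\mu$ moves, $\mu\alphaarrow\mu'$, and the ``later'' distribution $\rho$ must respond. Here I route through $\nu$: since $\mu\brbisim\nu$ there are $\bar\nu,\nu'$ with $\nu\Arrow{}\bar\nu\alphahidearrow\nu'$, $\mu\brbisim\bar\nu$ and $\mu'\brbisim\nu'$; concatenating with $\rho\Arrow{}\nu$ gives $\rho\Arrow{}\bar\nu\alphahidearrow\nu'$, and the matching conditions $\mu\mopcalR\bar\nu$ and $\mu'\mopcalR\nu'$ hold because both pairs lie in ${\brbisim}\subseteq\calR$.

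The remaining and most delicate point is weak decomposability, the genuinely probabilistic ingredient. For a pair $(\rho,\mu)$ with $\rho=\bigoplusiinI p_i * \rho_i$, the responder $\mu$ again walks to $\rho$, i.e.\ takes $\bar\nu=\rho$ via $\mu\Arrow{}\rho$, and the decomposition is matched trivially by reflexivity. For the reverse pair $(\mu,\rho)$ with $\mu=\bigoplusiinI p_i * \mu_i$ I once more route through $\nu$: weak decomposability of $\brbisim$ applied to $\mu\brbisim\nu$ yields $\bar\nu,\nu_i$ with $\nu\Arrow{}\bar\nu$, $\mu\brbisim\bar\nu$, $\bar\nu=\bigoplusiinI p_i * \nu_i$ and $\mu_i\brbisim\nu_i$; prefixing $\rho\Arrow{}\nu$ gives $\rho\Arrow{}\bar\nu$ with exactly the required components, all matched inside $\brbisim$. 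This establishes that $\calR$ is a branching probabilistic bisimulation, whence $\bar\mu\brbisim\mu$ and thus $\mu\brbisim\bar\mu$. I expect the real obstacle to be precisely this asymmetric matching: a move or a decomposition of the earlier distribution $\mu$ cannot be answered locally by the later $\rho$, and the whole argument hinges on the observation that $\rho\Arrow{}\nu$ lets the responder slide forward to $\nu$ and borrow the bisimulation $\mu\brbisim\nu$.
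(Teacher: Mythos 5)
Your proposal is correct and follows essentially the same route as the paper's proof: adjoin the new pair(s) to $\brbisim$, answer moves and decompositions of the later distribution by letting $\mu$ first walk forward along $\Arrow{}$, and answer moves and decompositions of $\mu$ by sliding the later distribution on to $\nu$ and borrowing the witnesses from $\mu\brbisim\nu$. The only (harmless) difference is that you close the relation under all intermediate distributions $\rho$ with $\mu\Arrow{}\rho\Arrow{}\nu$, whereas the paper adds only the single pair $(\mu,\bar\mu)$ and its converse.
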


\begin{proof}
  We show that the relation ${\brbisim} \mkern1mu \cup \{
  (\mu,\bar\mu), (\bar\mu,\mu) \}$ is a branching probabilistic
  bisimulation.

  First suppose $\mu \arrow\alpha \mu'$. Then there are
    $\nubar, \nu' \in \DistrE$ such that
    \begin{displaymath}
      \nu \Arrow{} \nubar, \ 
      \nubar \alphahidearrow \nu', \ 
      \mu \brbisim \nubar, \ \text{and} \ 
      \mu' \brbisim \nu'.
    \end{displaymath}
  Since $\bar\mu \Arrow{} \nu$, we have $\bar\mu \Arrow{} \nubar$,
  which had to be shown. 
  Now suppose $\bar\mu \arrow\alpha \mu'$.
  Then certainly $\mu \Arrow{}\bar\mu\arrow\alpha \mu'$.

  To show weak decomposability, suppose 
  $\mu = \bigoplusiinI \: p_i * \mu_i$.
  Then there are $\nubar, \nu_i \in \DistrE$, for~$i \in I$, such that
    \begin{displaymath}
      \nu \Arrow{} \nubar,\ 
      \mu \mopcalR \nubar,\  
      \nubar = \bigoplusiinI \: p_i * \nu_i,\ 
      \text{and} \ 
      \mu_i \mopcalR \nu_i 
      \ \text{for all~$i \in I$.}
    \end{displaymath}
  Again it suffices to point out that $\bar\mu \Arrow{} \nubar$.
  Conversely, suppose $\bar\mu = \bigoplusiinI \: p_i * \bar\mu_i$.
  Then $\mu  \Arrow{} \bar\mu = \bigoplusiinI \: p_i * \bar\mu_i$.
\end{proof}

\noindent
For $S \subseteq\calE$ and $\mu\in\DistrE$, define
$\mu(S)=\sum_{E\in\calE}\mu(E)$. Now two distributions $\mu$ and~$\nu$
are strong probabilistic bisimilar iff for each bisimulation
equivalence class $S\subseteq\calE$ one has $\mu(S)=\nu(S)$. The proof
is essentially the same as that of Lemma~\ref{decomp} below. However,
such a property does not hold for branching probabilistic
bisimilarity, due to the use of weak decomposability instead of
decomposability. But it does hold when restricting attention to a
class of processes on which weak decomposability reduces to
decomposability.

Call a distribution $\mu\in\DistrE$ \emph{$\brbisim$-stable} iff, for
all $\bar\mu\in\DistrE$, 
\begin{equation}\label{stable}
  \mu \Arrow{} \bar\mu \ \text{and} \ \mu \mathrel{\brbisim} \bar\mu
  \ \text{implies} \ \bar\mu = \mu\;,
\end{equation}
i.e., if it cannot perform internal activity without leaving its
branching bisimulation equivalence class. Note that if a distribution
$\mu \oplusr \nu$ with $r \in (0,1)$ is $\brbisim$-stable, then so are
$\mu$ and~$\nu$.

\begin{lemma}
  \label{decomp}
  If $\mu$ and $\nu$ are $\brbisim$-stable then $\mu\brbisim\nu$ iff
  $\mu(S)=\nu(S)$ for each $\brbisim$-equivalence class~$S$.
\end{lemma}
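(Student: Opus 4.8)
The plan is to prove the two implications separately. The direction from equal class-masses to $\brbisim$ needs only congruence and does not actually use the stability hypothesis, whereas the converse is where stability does the real work, by forcing weak decomposability to behave like ordinary decomposability.

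For the implication ``$\mu(S)=\nu(S)$ for every $\brbisim$-equivalence class $S$ $\Rightarrow \mu\brbisim\nu$'', I would first note that if $E$ and $F$ lie in the same class then $\delta(E)\brbisim\delta(F)$ by definition, so using the congruence of $\brbisim$ for probabilistic choice (Lemma~\ref{unrooted congruence}, extended to finite combinations $\bigoplus_i p_i*\mu_i$ by induction) any distribution may be rewritten, up to $\brbisim$, by replacing every state in its support with a fixed representative of its class. Concretely, fixing one representative $E_S$ per class, I would show $\mu\brbisim\bigoplus_S \mu(S)*\delta(E_S)$ and likewise $\nu\brbisim\bigoplus_S \nu(S)*\delta(E_S)$, both finite combinations since the supports are finite. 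As $\mu(S)=\nu(S)$ for every $S$, the two right-hand sides are literally the same distribution, and transitivity of $\brbisim$ (Lemma~\ref{lemma-congruence-prob}) yields $\mu\brbisim\nu$.

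For the converse, ``$\mu\brbisim\nu$ with $\mu,\nu$ stable $\Rightarrow \mu(S)=\nu(S)$'', I would proceed in three steps. First I would record two consequences of stability: (i)~every Dirac component $\delta(E)$ with $E\in\mathit{spt}(\mu)$ is itself stable, since otherwise an inert internal move $\delta(E)\Arrow{}\theta\neq\delta(E)$ with $\delta(E)\brbisim\theta$ would, by Lemma~\ref{unrooted congruence}, lift to a move $\mu\Arrow{}\mu'\neq\mu$ with $\mu\brbisim\mu'$, contradicting stability of $\mu$; and (ii)~for stable $\mu\brbisim\nu$ weak decomposability reduces to decomposability, because applying weak decomposability to a decomposition of $\mu$ produces some $\nubar$ with $\nu\Arrow{}\nubar$ and $\mu\brbisim\nubar$, whence $\nu\brbisim\nubar$ and stability of $\nu$ forces $\nubar=\nu$, so the matching decomposition is already carried by $\nu$ itself. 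Second, writing $\mu=\bigoplus_E \mu(E)*\delta(E)$ and applying this decomposability, I would obtain $\nu=\bigoplus_E \mu(E)*\nu_E$ with $\delta(E)\brbisim\nu_E$, each $\nu_E$ again stable as a component of the stable $\nu$. Third, for each $E$ I would pin $\nu_E$ down: decomposing the stable $\nu_E$ into its (stable) Dirac components and applying decomposability against the Dirac $\delta(E)$, the rigidity of a Dirac distribution (its only decomposition is into copies of itself) forces every $F\in\mathit{spt}(\nu_E)$ to satisfy $\delta(F)\brbisim\delta(E)$, i.e.\ $F$ lies in the class of $E$. Hence $\nu_E(S)=1$ if $E\in S$ and $0$ otherwise, and summing gives $\nu(S)=\sum_{E\in S}\mu(E)=\mu(S)$.

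The main obstacle is step (ii), the collapse of weak decomposability to decomposability for stable distributions, together with its companion fact (i); both hinge on combining the congruence of $\brbisim$ for $\probc{r}$ with the defining property of stability, and it is precisely this collapse that makes the Dirac-rigidity argument in the third step go through (for the strong case the same skeleton is simpler, since decomposability is built into the definition). Once these are in place, the remaining work — lifting an internal move of a single component to the whole distribution, and the class-by-class bookkeeping of masses — is routine.
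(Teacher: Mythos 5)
Your proposal is correct and follows essentially the same route as the paper: the easy direction via congruence of $\brbisim$ for probabilistic choice, and the hard direction via the same two-stage decomposition (first of $\mu$ into Dirac components, then of each matching $\nu_E$ into Dirac components matched back against the stable $\delta(E)$), with stability collapsing the weak-decomposability moves $\nu \Arrow{} \nubar$ and $\delta(E) \Arrow{} \bar\mu$ to identities. Your auxiliary facts (i) and (ii) are exactly what the paper records in the remark preceding the lemma and uses inline in its proof.
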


\begin{proof} Suppose $\mu\brbisim \nu$.
  Let $\mu = \bigoplus_{i\in I_1} \: p_i * E_i$. By weak
  decomposability, $\nu \Arrow{} \bar\nu = \bigoplusiinI \: p_i *
  \nu_i$ with $\nu \mathrel{\brbisim} \bar\nu$ and $\delta(E_i)
  \brbisim \nu_i$ for all $i\in I$.  By (\ref{stable}), as $\nu
  \brbisim \nubar$, we have $\bar\nu = \nu$.

  Let, for each $i\in I$, $\nu_i = \bigoplus_{j\in J_i} \: p_{ij} *
  F_{ij}$. Note, $\sum_{j \in J_i} \: p_{ij} = 1$.
  By weak decomposability, there are $\mu_{ij} \in \DistrE$,
  for~$j \in J_i$, such that $\delta(E_i) \Arrow{} \bar\mu_i {:=}
  \bigoplus_{i\in J_i} \: p_{ij} * \mu_{ij}$,
  $\nu_i \brbisim \bar\mu_i$ and $\mu_{ij} \brbisim
  \delta(F_{ij})$ for all~$j \in J_i$. By~(\ref{stable}) and
  $\brbisim$-stability of~$\delta(E_i)$, it follows that $\delta(E_i)
  = \bar\mu_i$ and hence $\mu_{ij} = \delta(E_i)$.
  Writing $E_{ij} {:=} E_i$, $q_{ij}:=p_i \cdot p_{ij}$ and $K = \lc
  ij \mid i \in I \land j\in J_i \rc$ we obtain
  \begin{displaymath}
    \mu = \bigopluskinK \: q_k * E_k \, ,\
    \nu = \bigopluskinK \: q_{k} * F_{k} \, , \ 
    \text{and} \ 
    \delta(E_k) \mopcalR \delta(F_k)
    \ \text{for all~$k \in K$.}
    \end{displaymath}
  Now, for any $\brbisim$-equivalence class $S \subseteq
    \calE$ it holds that $E_k\in S \Leftrightarrow F_k\in S$ for all
  $k\in K$. So, 
  \begin{displaymath}
    \mu(S) = \textstyle{\sum_{k \in K ,\, E_k \in S}} \: q_k = 
    \textstyle{\sum_{k \in K ,\, F_k \in S}} \: q_k = \nu(S) \, .
  \end{displaymath}
  The reverse direction of
  Lemma~\ref{decomp} is straightforward with Lemma~\ref{unrooted
    congruence}.
\end{proof}

\noindent
The next lemma holds because in this paper we consider finite
processes only.

\begin{lemma}
  \label{stabilizing}
  For each $\mu \in \DistrE$ there is a $\brbisim$-stable
  $\mu' \brbisim \mu$ with $\mu \Arrow{} \mu'$.
\end{lemma}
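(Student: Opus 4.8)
The plan is to work with the expected-complexity measure $c(\mu) = \sum_{E \in \mathit{spt}(\mu)} \mu(E)\,c(E)$ on $\DistrE$ and to show that every genuine internal step strictly decreases it, so that a $c$-minimal inert-reachable distribution is forced to be $\brbisim$-stable. The starting point is a structural induction showing $c(\den{P}) \le c(P) - 1$ for every $P \in \calP$: the base case $P = \partial(E)$ gives $c(\den{P}) = c(E) = c(P)-1$, and the case $P = P_1 \probc{r} P_2$ follows because $c(\den{P})$ is the $r$-weighted average of $c(\den{P_1})$ and $c(\den{P_2})$. Combining this with the rule \textsc{(pref)}, any state transition $E \arrow{\tau} \den{P}$ satisfies $c(\den{P}) \le c(E) - 2$; hence a full step $\mu \arrow{\tau} \mu'$ lowers $c$ by at least $2$, and an internal step of type (ii) at weight $r > 0$ lowers it by at least $2r > 0$. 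So $\mu \arrow{(\tau)} \mu'$ with $\mu \ne \mu'$ always gives $c(\mu') < c(\mu)$, while $c \ge 0$ throughout.

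Next I would restrict attention to the finite set $D$ of all $\tau$-derivatives of the states in $\mathit{spt}(\mu)$; since the language is recursion-free, $D$ is finite and every $\bar\mu$ with $\mu \Arrow{} \bar\mu$ lies in the compact simplex $\Distr(D)$. Put $\mathcal{M} = \lc \bar\mu \in \Distr(D) \mid \mu \Arrow{} \bar\mu \ \text{and} \ \mu \brbisim \bar\mu \rc$, which is nonempty as $\mu \in \mathcal{M}$. I would take $\mu' \in \mathcal{M}$ of minimal value $c(\mu')$ and claim it is the required distribution. By construction $\mu \Arrow{} \mu'$ and $\mu \brbisim \mu'$, and $\mu'$ is $\brbisim$-stable: if $\mu' \Arrow{} \bar\mu$ with $\mu' \brbisim \bar\mu$ and $\bar\mu \ne \mu'$, let $\bar\mu'$ be the distribution reached after the first genuine step, so $\mu' \Arrow{} \bar\mu' \Arrow{} \bar\mu$ with $c(\bar\mu') < c(\mu')$. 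By the Stuttering Property (Lemma~\ref{stuttering}) we then get $\mu' \brbisim \bar\mu'$, whence $\mu \Arrow{} \mu' \Arrow{} \bar\mu'$ and $\mu \brbisim \mu' \brbisim \bar\mu'$, so $\bar\mu' \in \mathcal{M}$ with $c(\bar\mu') < c(\mu')$, contradicting minimality.

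The main obstacle is the existence of the $c$-minimizer in $\mathcal{M}$. Finiteness of $D$ does not bound the length of inert paths: one may drain an inert $\tau$-move by an arbitrarily small weight $r$ and thereby produce an infinite strictly $c$-decreasing chain whose values merely converge to the infimum (repeatedly shifting a vanishing fraction of mass along a single inert transition), so one cannot simply take a maximal inert chain. Instead the minimum must be extracted by compactness. Here I would use that $c$ is linear and that $\mathcal{M}$ is convex: it is closed under $\probc{t}$ because $\Arrow{}$ combines with $\probc{t}$ (so $\mu = \mu \probc{t} \mu \Arrow{} \bar\mu \probc{t} \bar\nu$) and because $\brbisim$ is a congruence for probabilistic choice (Lemma~\ref{unrooted congruence}, giving $\bar\mu \probc{t} \bar\nu \brbisim \mu$). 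A minimizing sequence in the compact simplex $\Distr(D)$ then has a convergent subsequence with limit $\mu^*$ attaining the infimal value of $c$, and the crux is to verify $\mu^* \in \mathcal{M}$, i.e.\ that both the reachability set $\lc \bar\mu \mid \mu \Arrow{} \bar\mu \rc$ and the $\brbisim$-class of $\mu$ are closed in $\Distr(D)$. Establishing this closedness — equivalently, that the infimal internal progress is realized by an actual reachable, bisimilar distribution — is the technical heart of the argument, and is precisely where the restriction to finite processes is needed.
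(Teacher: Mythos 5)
Your argument is essentially the paper's own proof: the same weight function $w(\mu)=\sum_{E}\mu(E)\cdot c(E)$, the same observation that every genuine internal step strictly decreases it, and the same extraction of a minimizer over the set $T_\mu=\lc \mu' \mid \mu \Arrow{} \mu' \land \mu'\brbisim\mu\rc$ by compactness in a finite-dimensional simplex, with minimality (plus the Stuttering Property) forcing $\brbisim$-stability. The one step you explicitly leave open --- that the reachability set and the $\brbisim$-class of $\mu$ are closed, so that $T_\mu$ is compact and the infimum of $w$ is attained --- is precisely where the paper's proof sketch also stops, citing \cite{DGHM09} for compactness of $\lc \mu' \mid \mu \Arrow{} \mu'\rc$ and asserting that $T_\mu$ is compact by the same finitary argument, so your proposal matches the paper's route and correctly isolates its only nontrivial analytic ingredient.
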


\begin{trivlist}
  \item[\hspace{\labelsep}\textbf{Proof Sketch:}] \hspace{-.6pt}Define the
    \emph{weight} of a distribution by $w(\mu) \mathbin=
    \sum_{E \in \calE} \: \mu(E)\cdot \mkern1mu c(E)$, i.e.,\ the weighted
    average of the complexities of the states in its support.  Now $E
    \arrow\alpha \mu$ implies $w(\mu) < w(\delta(E))$, and thus $\mu
    \arrow\alpha \mu'$ implies $w(\mu')< w(\mu)$, and $\mu \mathbin{\Arrow{}}
    \mu'$ implies $w(\mu') \mathbin\leqslant w(\mu)$. For $\mu \mathbin\in \DistrE$ let
    $T_\mu {:=} \lc \mu' \mid \mu' \brbisim \mu \land \mu \mathbin{\Arrow{}}
    \mu' \rc$ and define
    \begin{displaymath}
      sw(\mu) \: {:=} \: \textstyle{\inf_{\mu' \in T_\mu}} \: w(\mu') \, .
    \end{displaymath}
    In~\cite{DGHM09} it is shown that for any $\mu\in\DistrE$, the set
    $\lc \mu' \mid \mu \Arrow{} \mu' \rc$ is compact. This concept is
    defined by regarding a probability distribution over a finite set
    of $n$~states as a point in the $n$-dimensional Euclidean
    space~$\bbbr^n$. Similarly it can be shown that the set $T_\mu$~is
    compact. This is a consequence of the finitary nature of the space
    under consideration. The infimum over the compact set will be
    reached, and therefore there exists a derivation $\mu \Arrow{} \mu'$
    with $\mu' \brbisim \mu$ and $w(\mu') =
    sw(\mu)$. By construction $\mu'$ must be $\brbisim$-stable.  \qed
\end{trivlist}

\begin{lemma}[Cancellativity]
  \label{cancellative}
  Let $\mu, \mu', \nu, \nu' \in \DistrE$. If $\mu \oplusr \nu \brbisim
  \mu'\oplusr \nu'$ with $r \in (0,1]$ and $\nu\brbisim\nu'$, then
    $\mu \brbisim \mu'$.
\end{lemma}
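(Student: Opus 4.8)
The plan is to reduce the statement to the mass‑counting characterization of Lemma~\ref{decomp} and then cancel arithmetically. If $r=1$ the claim is trivial, since $\mu\oplusr\nu=\mu$ and $\mu'\oplusr\nu'=\mu'$; so assume $r\in(0,1)$. Using Lemma~\ref{stabilizing} I would first pass to $\brbisim$‑stable distributions $\hat\mu\brbisim\mu$, $\hat\nu\brbisim\nu$, $\hat\mu'\brbisim\mu'$ and $\hat\nu'\brbisim\nu'$ (with $\mu\Arrow{}\hat\mu$, and so on). It then suffices to establish $\hat\mu\brbisim\hat\mu'$, because then $\mu\brbisim\hat\mu\brbisim\hat\mu'\brbisim\mu'$. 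From $\hat\nu\brbisim\nu\brbisim\nu'\brbisim\hat\nu'$ and stability, Lemma~\ref{decomp} yields $\hat\nu(S)=\hat\nu'(S)$ for every $\brbisim$‑equivalence class~$S$, while congruence (Lemma~\ref{unrooted congruence}) gives $\hat\mu\oplusr\hat\nu\brbisim\mu\oplusr\nu\brbisim\mu'\oplusr\nu'\brbisim\hat\mu'\oplusr\hat\nu'$.

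For a distribution $\xi$ write $m_S(\xi):=\hat\xi(S)$ for any $\brbisim$‑stable $\hat\xi\brbisim\xi$; by Lemmas~\ref{stabilizing} and~\ref{decomp} this is independent of the chosen $\hat\xi$ (two stable distributions bisimilar to $\xi$ are bisimilar, hence agree on class masses) and is a $\brbisim$‑invariant. The key claim is that $m_S$ is affine with respect to probabilistic choice: for all $\brbisim$‑stable $\hat\mu,\hat\nu$ and every class $S$,
\begin{equation*}
  m_S(\hat\mu\oplusr\hat\nu)=r\,\hat\mu(S)+(1-r)\,\hat\nu(S).
\end{equation*}
Granting this, applying it on both sides together with the $\brbisim$‑invariance of $m_S$ gives $r\,\hat\mu(S)+(1-r)\,\hat\nu(S)=r\,\hat\mu'(S)+(1-r)\,\hat\nu'(S)$; cancelling the equal terms $(1-r)\hat\nu(S)=(1-r)\hat\nu'(S)$ and dividing by $r>0$ yields $\hat\mu(S)=\hat\mu'(S)$ for all $S$, which is $\hat\mu\brbisim\hat\mu'$ by Lemma~\ref{decomp}.

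The main obstacle is the affine identity, which amounts to showing that a combination of two $\brbisim$‑stable distributions is \emph{mass‑stable}: even though $\hat\mu\oplusr\hat\nu$ need not itself be $\brbisim$‑stable — only the converse implication, from stability of $\hat\mu\oplusr\hat\nu$ to stability of $\hat\mu$ and $\hat\nu$, is immediate — every internal computation $\hat\mu\oplusr\hat\nu\Arrow{}\sigma$ reaching a stable $\sigma\brbisim\hat\mu\oplusr\hat\nu$ must preserve the total mass of each equivalence class, so that $\sigma(S)=(\hat\mu\oplusr\hat\nu)(S)$. To prove this I would analyse the computation one silent step at a time: by the Stuttering Property (Lemma~\ref{stuttering}) every intermediate distribution is again $\brbisim$‑equivalent to $\hat\mu\oplusr\hat\nu$, and each non‑trivial step performs a genuine $\tau$‑move on some sub‑distribution. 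Using the splitting lemma (Lemma~\ref{flexibelDelen}) I would reconcile that sub‑distribution with the decomposition into the $\hat\mu$‑ and $\hat\nu$‑parts, writing the step simultaneously as $\hat\mu\Arrow{}\bar\mu$ and $\hat\nu\Arrow{}\bar\nu$ with $\bar\mu\oplusr\bar\nu$ the result. The delicate point — and the step I expect to be hardest — is to show that such a silent move, which stays within the branching equivalence class, cannot shift probability mass from one $\brbisim$‑equivalence class to another; here the stability of the individual components, the branching transfer condition, and repeated use of Lemma~\ref{flexibelDelen} to align the several convex decompositions of one and the same distribution all have to be combined.
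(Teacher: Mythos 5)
Your overall skeleton (stabilize, reduce to class masses via Lemma~\ref{decomp}, cancel arithmetically) matches the paper's, but the one step that carries all the weight --- your ``affine identity'' $m_S(\hat\mu\oplusr\hat\nu)=r\,\hat\mu(S)+(1-r)\,\hat\nu(S)$ --- is exactly the step you leave open, and the route you sketch for it does not go through. Analysing the stabilizing computation $\hat\mu\oplusr\hat\nu\Arrow{}\sigma$ one silent step at a time gives you, via Lemma~\ref{stuttering}, only that each intermediate distribution is branching bisimilar to $\hat\mu\oplusr\hat\nu$; these intermediates need not be $\brbisim$-stable, so Lemma~\ref{decomp} tells you nothing about their class masses. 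Worse, showing that a single step $\delta(E)\probc{q}\chi\tauhidearrow\rho'\probc{q}\chi$ that is inert \emph{for the combination} cannot shift probability mass between $\brbisim$-classes requires, in essence, cancelling $\chi$ from $\delta(E)\probc{q}\chi\brbisim\rho'\probc{q}\chi$ to relate $\delta(E)$ and $\rho'$ --- i.e.\ an instance of the very cancellativity you are trying to prove. The ``delicate point'' you flag is the whole lemma in disguise, so as written the argument is circular (or at least rests on an unproven claim of comparable difficulty).

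The gap closes once you reorder the moves: do not stabilize $\mu$ and $\nu$ separately and then try to combine, but stabilize the \emph{sum} first. By Lemma~\ref{stabilizing} take a $\brbisim$-stable $\xi\brbisim\mu\oplusr\nu$ with $\mu\oplusr\nu\Arrow{}\xi$, and apply weak decomposability of $\brbisim$ to $\xi$ against the splitting $r*\mu\oplus(1{-}r)*\nu$ of $\mu\oplusr\nu$: this yields $\xi\Arrow{}\bar\xi=\bar\mu\oplusr\bar\nu$ with $\xi\brbisim\bar\xi$, $\bar\mu\brbisim\mu$ and $\bar\nu\brbisim\nu$, and stability of $\xi$ forces $\bar\xi=\xi$. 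The stable witness now arrives already decomposed as $\bar\mu\oplusr\bar\nu$, its components are themselves stable by the remark following~(\ref{stable}), and your affine identity degenerates to the arithmetic fact $(\bar\mu\oplusr\bar\nu)(S)=r\cdot\bar\mu(S)+(1{-}r)\cdot\bar\nu(S)$. Doing the same on the primed side, your cancellation computation then goes through verbatim --- which is precisely the paper's proof.
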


\begin{proof}
  Let $\mu\oplusr\nu \brbisim \mu'\oplusr \nu'$ with
  $r \mathbin\in (0,1]$ and $\nu\brbisim\nu'$. By
  Lemma~\ref{stabilizing} there is a $\brbisim$-stable distribution
  $\xi \brbisim \mu \oplusr \nu$ with $\mu \oplusr \nu \Arrow{} \xi$.
  By weak decomposability, there are $\bar\xi \brbisim \xi$, $\bar\mu
  \brbisim \mu$ and $\bar\nu \brbisim \nu$ with $\xi \Arrow{} \bar\xi
  = \bar\mu \oplusr \bar\nu$. By the $\brbisim$-stability of $\xi$ we
  have $\bar\xi=\xi$.  Likewise there are distributions
  $\bar\mu'\brbisim\mu'$ and $\bar\nu'\brbisim \nu'$ such that
  $\bar\mu'\oplusr\bar\nu'$ is $\brbisim$-stable, $\mu'\oplusr \nu'
  \brbisim \bar\mu' \oplusr \bar\nu'$ and $\mu' \oplusr \nu' \Arrow{}
  \bar\mu'\oplusr \bar\nu'$.

  Since $\bar\mu \oplusr \bar\nu \brbisim \bar\mu'\oplusr \bar\nu'$,
  and using the $\brbisim$-stability of $\bar\mu \oplusr \bar\nu$ and
  $\bar\mu'\oplusr \bar\nu'$, it follows by Lemma~\ref{decomp} that
  $(\bar\mu \oplusr \bar\nu)(S) = (\bar\mu' \oplusr\bar\nu')(S)$
  for every $\brbisim$-equivalence class of states~$S$.
  Since $\bar\nu \brbisim \bar\nu'$, we likewise
  have $\bar\nu(S) = \bar\nu'(S)$. Using that $(\bar\mu\oplusr
  \bar\nu)(S) = r\cdot \bar\mu(S) + (1{-}r)\cdot\bar\nu(S)$,
  \begin{displaymath}
    \bar\mu(S) = 
    \frac{(\bar\mu\oplusr \bar\nu)(S) - (1{-}r)\cdot\bar\nu(S)}r
    = \frac{(\bar\mu'\oplusr\bar\nu')(S) - \bar\nu'(S)}r = 
    \bar\mu'(S).
  \end{displaymath}
  Thus, by Lemma~\ref{decomp}, $\bar\mu \brbisim \bar\mu'$,
  and consequently $\mu \brbisim \mu'$.
\end{proof}

\section{Completeness: the probabilistic case}

\label{sec-prob}

In this section we provide a sound and complete equational
characterization of rooted branching probabilistic bisimilarity.  The
completeness result is obtained along the same lines as the
corresponding result for branching {\bisimilarity} for the
non-deterministic processes in Section~\ref{sec-nondet}.  We extend
and adapt the non-deterministic theories $\AXd$ and~$\AXdb$ of
Section~\ref{sec-nondet}.

\begin{definition}[Axiomatization of $\bisim$ and~$\rootedbrbisim$]
  The theory~$\AXp$ is given by the axioms A1 to~A4, the axioms P1
  to~P3 and C listed in
  Table~\ref{table-axiomatization-of-probabilistic-branching-bisimulation}. 
  The theory~$\AXpb$ contains in addition the axioms~\hyperlink{BP}{BP}
  and~\hyperlink{G}{G}.
\end{definition}

\noindent
The axioms A1--A4 for non-deterministic processes are as before. Regarding
probabilistic processes, for the axioms P1 and~P2 dealing with
commutativity and associativity, we need to take care of the
probabilities involved. For~P2, it follows from the given restrictions
that also $(1{-}r)s = (1{-}r')s'$, i.e., the probability for~$Q$ to
execute is equal for the left-hand and right-hand side of the
equation. Axiom~P3 expresses that a probabilistic choice between equal
processes can be eliminated.
Axiom~C expresses that any two nondeterministic transitions can
be executed in a combined fashion: one with probability~$r$ and one
with the complementary probability~$1{-}r$.

The axioms P1 and P2 allow us to write each probabilistic
process~$P$ as 
\begin{displaymath}
  \partial(E_1)\probc{r_1}(\partial(E_2)\probc{r_2}(\partial(E_3)\probc{r_3}\cdots))   
\end{displaymath}
for non-deterministic processes $E_i$. In the sequel we denote such a
process by $\bigoplusiinI \: p_i * E_i$ with
$p_i=r_i\prod_{j=1}^{i-1}(1-r_j)$.
More specifically, if a probabilistic process~$P$ corresponds to a
distribution $\bigoplusiinI \: p_i * E_i$, then we have $\AXp \vdash P
= \bigoplusiinI \: p_i * E_i$, as can be shown by induction on the
structure of~$P$.

For axioms~\hyperlink{BP}{BP} and~\hyperlink{G}{G} of
Table~\ref{table-axiomatization-of-probabilistic-branching-bisimulation}
we introduce the notation~$E \sqsubseteq P$ for $E \in \calE$, $P \in
\calP$. We define
\begin{displaymath}
  E \sqsubseteq P
  \quad \text{iff} \quad
  \begin{array}[t]{l}
    \forall \alpha \in \calA , \, \mu \in \DistrE \colon
    E \arrow{\alpha} \mu \quad \Longrightarrow
    \smallskip \\
    \qquad\qquad\qquad
    \exists \mkern1mu \nu \in \DistrE \colon
    \den{P}  \arrow{(\alpha)} \nu \land \mu \brbisim \nu \, .
  \end{array}
\end{displaymath}
Thus, we require that every transition of the non-deterministic
process~$E$ can be directly matched by the probabilistic process~$P$.
Note, if $E \sqsubseteq P$ \vspace{1pt} and
$\delta(E) \arrow{\alpha} \mu$, then $\den{P} \alphahidearrow \nu$ for
some $\nu \in \DistrE$ such that $\mu \brbisim \nu \mkern1mu$: If
$\delta(E) \arrow\alpha \mu$, then
$\mu = \bigoplusiinI \: p_i * \mu_i$ and $E \arrow\alpha \mu_i$ for
suitable $p_i \geqslant 0$, $\mu_i \in \DistrE$. \vspace{1pt} Since
$E \sqsubseteq P$, we have for each~$i \in I$ that
$\den{P} \alphahidearrow \nu_i$ for some $\nu_i \in \DistrE$
satisfying $\mu_i \brbisim \nu_i$. Hence
$\den{P} \alphahidearrow \nu \, {:=} \, \bigoplusiinI \: p_i * \nu_i$ and
$\mu \brbisim \nu$ by Lemma~\ref{unrooted congruence}.

Axiom~\hyperlink{BP}{BP} is an adaptation of axiom~\hyperlink{B}{B} of
the theory~$\AXdb$ to the probabilistic setting of~$\AXpb$.  In the
setting of non-deterministic processes the implication $F
\arrow{\alpha} F'$ $\Longrightarrow$ $E \arrow{\alpha} E'$ $\land$ $F'
\brbisim E'$ for some $E'$ is captured by $E + F
\mathrel{\rootedbrbisim} E$. If we reformulate axiom~\hyperlink{B}{B}
as $E + F = E$ $\Longrightarrow$ $\alpha \pref ( F + \tau \pref E ) =
\alpha \pref E$, then it becomes more similar to
axiom~\hyperlink{BP}{BP} in
Table~\ref{table-axiomatization-of-probabilistic-branching-bisimulation}.

As to~\hyperlink{BP}{BP}, in the context of a preceding
action~$\alpha$ and a 
probabilistic process~$Q$, a non-deterministic alternative~$E$ that is
also offered by a probabilistic process after a $\tau$-prefix can be
dispensed with, together with the prefix~$\tau$. In a formulation
without the prefix~$\alpha$ and the probabilistic alternative~$Q$, but
with the specific condition $E \sqsubseteq P$, and retaining the
$\tau$-prefix on the right-hand side, the axiom~\hyperlink{BP}{BP} shows similarity
with axioms~T2 and~T3 in~\cite{FG19:jlamp} which, in turn, are
reminiscent of axioms T1 and~T2 of~\cite{DP07:tcs}; these axioms stem
from Milner's second $\tau$-law~\cite{Mil89:phi}.

\begin{table}
  \centering
  \def\arraystretch{1.1}
  \begin{tabular}{|@{\;}l@{\ \;}l|}
    \hline
    A1 & $E + F = F + E$ \rule{0pt}{12pt} \\
    A2 & $(E + F) + G = E +(F + G)$ \\
    A3 & $E + E = E$ \\
    A4 & $E + \bfzero = E$ 
    \rule[-5pt]{0pt}{12pt}
    \\ \hline
    P1 & $P \oplusr Q = Q \probc{1{-}r} P$ \rule{0pt}{12pt} \\
    P2 & $P \oplusr ( Q \probc{s} R ) = (P \probc{\bar{r}} Q)
    \probc{\bar{s}} R$ \\
    & \small 
    \qquad
    where $r = \bar{r} \mkern1mu \bar{s}$ and $(1{-}r)(1{-}s) =
    1{-}\bar{s}$ \quad \\ 
    P3  & $P \oplusr P = P$
    \rule[-5pt]{0pt}{12pt}
    \\ \hline
    \rule{0pt}{12pt}%
    \hypertarget{C}{C}
    & $\alpha \pref P + \alpha \pref Q = \alpha \pref P + \alpha
    \pref ( P \oplusr Q ) + \alpha \pref Q$   
    \rule[-5pt]{0pt}{12pt}
   \\ \hline
    \hypertarget{BP}{BP} & if $E \sqsubseteq P$ then    
    \rule{0pt}{12pt}\\
    &
    \qquad
    $\alpha \pref \bigl (
    \partial ( \, E + \tau \pref P \, ) 
    \oplusr Q \bigr ) = 
    \alpha \pref \bigl ( P \oplusr Q \bigr )$
    \rule[-5pt]{0pt}{12pt}
    \\
    \hypertarget{G}{G} &  if $E \sqsubseteq \partial(F)$ then    
    \rule{0pt}{12pt}\\
    &
    \qquad
      $\alpha\pref (\partial(E+F) \oplusr Q) =
      \alpha\pref (\partial(F) \oplusr Q)$
    \rule[-5pt]{0pt}{12pt} \\
    \hline
  \end{tabular}
  \def\arraystretch{1.0}

  \medskip

  \caption{Axioms for strong and rooted branching probabilistic {\bisimilarity}}
  \label{table-axiomatization-of-probabilistic-branching-bisimulation}
\end{table}

Let us illustrate the working of axiom~\hyperlink{BP}{BP}. Consider the
non-deterministic process $E = \bstop$ and the probabilistic process
$P = \partial( \astop + \bstop ) \probc{1/2} \partial( \bstop)$. Then we
have $E \sqsubseteq P$, i.e.\ 
\begin{displaymath}
  \bstop \sqsubseteq 
  \partial( \astop + \bstop ) \probc{1/2} \partial( \bstop ).
\end{displaymath}
Therefore, we have by application of axiom~\hyperlink{BP}{BP} the
provable equality
\begin{displaymath}
  \begin{array}{l}
    \AXpb \vdash
    \alpha \pref \bigl( \partial \dg( {\bstop} + \tau \pref \mg( 
    \partial\dor( \astop + \bstop \dor) \probc{1/2}
    \rule{0pt}{9.5pt}\partial\dor( 
    \bstop \dor) \mg) 
     \dg) \oplusr Q \bigr)
    = {} \smallskip \\ \hspace{94pt}
    \alpha \pref \bigl( \dg(\rule{0pt}{9.5pt}
    \partial\dor( \astop + \bstop \dor) \probc{1/2} \partial\dor( \bstop \dor)
    \dg) \oplusr Q \bigr) \; .
  \end{array}
\end{displaymath}
Another example is
$a \pref (P_1\oplusr P_2) \sqsubseteq \partial\dg(b\pref R + a\pref
P_1\dg)\oplusr\partial\dg(c\pref S + a\pref P_2\dg)$, 
so
\begin{displaymath}
  \begin{array}{l}
    \AXpb \vdash
    \alpha \pref \bigl(\partial\mg(
    a\pref(P_1\oplusr P_2) + \tau \pref \dor( \partial\dg( b \pref R
    + 
    a \pref P_1\dg)\oplusr\partial\dg(c\pref S + a\pref P_2\dg) \dor) 
     \mg) \oplusr Q\bigr)
    = {} \smallskip \\ \hspace{121pt}
    \alpha \pref \bigl( \dor( \partial\dg(b\pref R + a\pref P_1 \dg) \oplusr
    \partial\dg(c\pref S + a\pref P_2\dg) \dor)
     \oplusr Q
    \bigr) \;.
  \end{array}
\end{displaymath}
An example illustrating the use of $(\alpha)$, rather than
$\alpha$, as label of the matching transition of~$\den{P}$ in the
definition of $\sqsubseteq$ is
\begin{displaymath}
  \tau \pref \mg(\partial(b \pref P + \tau \pref Q)\oplusr Q\mg)
  \sqsubseteq  \partial(b \pref P + \tau \pref Q)
\end{displaymath}
from which we obtain
\begin{displaymath}
  \begin{array}{l}
    \AXpb \vdash
    \alpha \pref \left(\partial\rule{0pt}{9.5pt}\big(
    \tau \pref \mg(\partial(b \pref P + \tau \pref Q)\oplusr Q\mg)
    + \tau \pref \mg( \partial(b \pref P + \tau \pref Q) \mg) 
    \big)\oplusr R \right)
    = {} \smallskip \\ \hspace{173pt}
    \alpha \pref\big( \mg( \partial(b \pref P + \tau \pref Q) \mg)\oplusr R\big)\;.
  \end{array}
\end{displaymath}
The axiom \hyperlink{G}{G} roughly is a variant of \hyperlink{BP}{BP}
without the $\tau$ prefixing the process $P$. A typical example,
matching the one above, is
\begin{displaymath}
  \begin{array}{l}
    \AXpb \vdash
    \alpha \pref \left(\partial\rule{0pt}{9.5pt}\big(
    \tau \pref \mg(\partial(b \pref P + \tau \pref Q)\oplusr Q \mg)
    + \mg( b \pref P + \tau \pref Q \mg) 
    \big)\oplusr R \right) 
    = {} \smallskip \\ \hspace{153pt}
    \alpha \pref \big( \mg( \partial(b \pref P + \tau \pref Q)
    \mg) \oplusr R \big) \; . 
  \end{array}
\end{displaymath}
The occurrences of the prefix $\alpha \pref {\_}$
in~\hyperlink{BP}{BP} and~\hyperlink{G}{G} are related to the
root condition for non-deterministic processes,
cf.\ axiom~\hyperlink{B}{B} in Section~\ref{sec-nondet}.

\begin{lemma}
  \label{simpleBP}
  The following simplifications of the axiom \hyperlink{BP}{BP} are derivable:
  \begin{displaymath}
    \def\arraystretch{1.2}
    \begin{array}{rl}
      (i) &
      \AXpb\vdash \alpha \pref 
      \partial ( \, E + \tau \pref P \, ) = 
      \alpha \pref  P \quad \text{if $E \sqsubseteq P$,} \\
      (ii) &
      \AXpb\vdash \alpha \pref 
      (\partial ( \, \tau \pref P) \oplusr R \, ) = 
      \alpha \pref   (P \oplusr R) \textrm{ and} \\
      (iii) &
      \AXpb\vdash \alpha \pref 
      \partial ( \, \tau \pref P)  = 
      \alpha \pref   P \, .
    \end{array}
    \def\arraystretch{1.0}
  \end{displaymath}
\end{lemma}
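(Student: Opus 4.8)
The plan is to treat (i) as the substantial claim and to obtain (ii) and (iii) as instances of, respectively, axiom~\hyperlink{BP}{BP} and part~(i), both specialized to $E = \bfzero$. The starting observation is that $\bfzero \sqsubseteq P$ holds for every $P \in \calP$, since $\bfzero$ has no outgoing transitions and the defining implication of $\sqsubseteq$ is then vacuously satisfied; moreover $\AXp \vdash \bfzero + \tau \pref P = \tau \pref P$ by A1 and A4, so $\partial(\bfzero + \tau \pref P)$ and $\partial(\tau \pref P)$ may be interchanged under the congruence generated by the axioms.

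For (i), write $D := \partial(E + \tau \pref P)$ and fix any $r \in (0,1)$. The key trick, and the only real obstacle, is to strip off the spurious probabilistic alternative $Q$ that \hyperlink{BP}{BP} always leaves behind. First I would instantiate \hyperlink{BP}{BP} with the alternative $Q := D$ itself, giving $\alpha \pref (D \oplusr D) = \alpha \pref (P \oplusr D)$; collapsing the left-hand side with P3 yields
\begin{displaymath}
  \AXpb \vdash \alpha \pref D = \alpha \pref (P \oplusr D).
\end{displaymath}
Next I would rewrite $P \oplusr D$ as $D \probc{1-r} P$ using P1, and instantiate \hyperlink{BP}{BP} a second time, now with alternative $Q := P$ and probability $1-r$, to obtain $\alpha \pref (D \probc{1-r} P) = \alpha \pref (P \probc{1-r} P)$; a second application of P3 simplifies the right-hand side to $\alpha \pref P$. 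Chaining these equalities gives
\begin{displaymath}
  \AXpb \vdash \alpha \pref D
  = \alpha \pref (P \oplusr D)
  = \alpha \pref (D \probc{1-r} P)
  = \alpha \pref P,
\end{displaymath}
which is exactly (i). Note that both applications of \hyperlink{BP}{BP} are licensed by the single hypothesis $E \sqsubseteq P$, and that the two auxiliary probabilities $r$ and $1-r$ both lie in $(0,1)$, so no appeal to the boundary conventions $\probc1$, $\probc0$ is needed.

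Finally, (ii) is the instance of \hyperlink{BP}{BP} with $E := \bfzero$ and $Q := R$: since $\bfzero \sqsubseteq P$, it reads $\alpha \pref (\partial(\bfzero + \tau \pref P) \oplusr R) = \alpha \pref (P \oplusr R)$, and rewriting $\bfzero + \tau \pref P$ to $\tau \pref P$ inside $\partial$ (A1, A4) delivers the claim. Likewise (iii) is part~(i) with $E := \bfzero$, after the same A1/A4 simplification of the argument of $\partial$. I expect the whole proof to be short; the one step that requires genuine thought is the double instantiation of \hyperlink{BP}{BP} in (i), where choosing $Q = D$ in the first application is precisely what lets P3 collapse the left-hand side and thereby eliminates the trailing alternative.
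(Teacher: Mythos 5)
Your proof is correct and follows essentially the same route as the paper: part (i) is established by the same P3--\hyperlink{BP}{BP}--P1--\hyperlink{BP}{BP}--P3 chain (the paper fixes $r=\frac12$ where you keep a general $r\in(0,1)$), and (ii) is the same $E:=\bfzero$ instance of \hyperlink{BP}{BP} after the A1/A4 rewriting, using that $\bfzero\sqsubseteq P$ holds vacuously. Your derivation of (iii) as the $E:=\bfzero$ instance of (i) is a harmless shortcut relative to the paper, which instead repeats the (i)-style argument using (ii).
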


\begin{proof}
  The proof of the first equality requires an application of~P3:
  \[
  \def\arraystretch{1.2}
  \begin{array}{ll}
    \AXpb \vdash &
    \alpha \pref 
    \partial ( E + \tau \pref P ) ~\stackrel{\textrm{P3}}{=} \\
    & \alpha \pref  \bigl( 
    \partial ( E + \tau \pref P ) \probc{\frac12}
    \partial ( E + \tau \pref P ) \bigr)
    ~\stackrel{\textrm{\hyperlink{BP}{BP}}}{=} \\
    & \alpha \pref 
    \bigl( 
    P \probc{\frac12} \partial ( E + \tau \pref P )
    \bigr)
    ~\stackrel{\textrm{P1}}{=} \\
    & \alpha \pref 
    \bigl(
    \partial ( E + \tau \pref P ) \probc{\frac12} P
    \bigr )
    ~\stackrel{\textrm{\hyperlink{BP}{BP}}}{=} \\
    & \alpha \pref 
    \bigl( P \probc{\frac12} P \bigr )
    ~\stackrel{\textrm{P3}}{=}  
    \alpha \pref P \, .
  \end{array}
  \def\arraystretch{1.0}\]
  The proof of the second equality uses A4:
  \[
  \def\arraystretch{1.2}
  \begin{array}{ll}
    \AXpb \vdash &
    \alpha \pref \bigl(
    \partial ( \tau \pref P ) \oplusr R  
    \bigr) ~\stackrel{\textrm{A4}}{=} \\
    & \alpha \pref \bigl(
    \partial ( {\bfzero} + \tau \pref P ) \oplusr R  
    \bigr) ~\stackrel{\textrm{\hyperlink{BP}{BP}}}{=} \\
    &\alpha \pref \bigl( 
    P \oplusr R  
    \bigr) 
\end{array}\]
The last equation can be proven in a similar way as the first
property, using the second one.
\end{proof}

\noindent
Similar simplifications of axiom \hyperlink{G}{G} can be found.

Using the properties in the lemma above the process identities
mentioned in the introduction can easily be proven. Returning to the
processes $E_1$ and~$E_6$ related to Figure~\ref{PA-of-Eisentraut}, we
have
\begin{displaymath}
  \def\arraystretch{1.2}
  \begin{array}{rcl}
    \AXpb \vdash 
    E_1 & = & \tau \pref \bigl(
  \partial\dg( 
    \tau \pref \partial\dor( \tau \pref P + c \pref Q + \tau \pref R \dor) +
    c \pref Q + \tau \pref R \dg) 
    \\ & & \phantom{\tau \pref \bigl( {}} 
  {} \probc{1/2}
  \partial\dg( \tau \pref \mg(
    \partial\dor( \tau \pref P + c \pref Q + \tau \pref R \dor)
    \probc{1/2}
    \partial( \bfzero ) \mg) \dg)
  \bigr) \\
  & \stackrel{\text{\hyperlink{BP}{BP}}}{=} & 
  \tau \pref \bigl(
  \partial\dor( 
    \tau \pref P + c \pref Q + \tau \pref R \dor) 
  \\ & & \phantom{\tau \pref \bigl( {}} 
  {} \probc{1/2}
  \partial\dg( \tau \pref \mg(
    \partial\dor( \tau \pref P + c \pref Q + \tau \pref R \dor)
    \probc{1/2}
    \partial( \bfzero ) \mg) \dg)
  \bigr) \\
  & \stackrel{\text{\ref{simpleBP} $(ii)$, P1}}{=} & 
  \tau \pref \bigl(
  \partial\dor( \tau \pref P + c \pref Q + \tau \pref R \dor) 
  \\ & & \phantom{\tau \pref \bigl( {}} 
  {} \probc{1/2}
  \mg(
    \partial\dor( \tau \pref P + c \pref Q + \tau \pref R \dor)
    \probc{1/2}
    \partial( \bfzero )  \mg)
  \bigr) \\
  & \stackrel{\text{P2}}{=} & 
  \tau \pref \bigl(
   \mg(\partial\dor(\tau\pref P + c \pref Q + \tau \pref R \dor) 
  \probc{2/3}\\
&&  \phantom{\tau \pref \bigl( ({}}  \partial\dor( \tau \pref P + c \pref Q + \tau \pref R \dor)\mg)
  \probc{3/4}
  \partial( \bfzero ) 
  \bigr)
  \\
  & \stackrel{\text{P3}}{=} & 
  \tau \pref ( \partial\dor( \tau \pref P + c \pref Q + \tau \pref R \dor) 
  \probc{3/4} \partial( \bfzero ) )
  = E_6 \, .
  \end{array}
  \def\arraystretch{1.0}
\end{displaymath}

\noindent
Soundness of the theory~$\AXp$ for strong probabilistic {\bisimilarity}
and of the theory~$\AXpb$ for rooted branching probabilistic
{\bisimilarity} is straightforward.

\begin{lemma} [Soundness]
  \label{lemma-soundness}
  For all $P,Q \in \calP$, if $\AXp \vdash P = Q$ then $P \bisim Q$,
  and if $\AXpb \vdash P = Q$ then $P \rootedbrbisim Q$.
\end{lemma}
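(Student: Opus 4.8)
The plan is to reduce soundness of each theory to soundness of its individual axiom instances. By Lemma~\ref{lemma-congruence-prob} both $\bisim$ and $\rootedbrbisim$ are equivalence relations and congruences on $\calE$ and~$\calP$; hence, once every substitution instance of an axiom is shown to relate its two sides by the appropriate equivalence, every provable equation is sound by a routine induction on the length of the derivation, as equational logic only combines reflexivity, symmetry, transitivity and closure under contexts. It therefore remains to inspect the axioms one at a time.

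For $\AXp$ and $\bisim$: the axioms A1--A4 relate processes with identical sets of outgoing transitions, so the identity relation augmented with the single pair of left- and right-hand sides is a strong probabilistic bisimulation. The axioms P1--P3 are even simpler, since applying $\den{\cdot}$ to both sides yields literally the same distribution: on $\DistrE$ the operator $\oplusr$ satisfies $\mu \oplusr \nu = \nu \probc{1-s} \mu$ with $s=r$, the reparametrised associativity of~P2, and idempotence $\mu \oplusr \mu = \mu$. The only interesting case is~C, where I would invoke combined transitions: splitting the Dirac source as $\bigoplusiinI p_i * \delta(E)$ and letting each copy fire, one checks that the set of $\alpha$-derivatives of $\delta(\alpha \pref P + \alpha \pref Q)$ is exactly $\lc \den P \probc{s} \den Q \mid s \in [0,1] \rc$, and that the extra summand $\alpha \pref (P \oplusr Q)$, whose target $\den P \oplusr \den Q$ already lies on this segment, does not enlarge it. Thus both sides have the same $\alpha$-derivatives and are strongly bisimilar.

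For $\AXpb$ and $\rootedbrbisim$, note first that $\bisim \subseteq \rootedbrbisim$: a strong probabilistic bisimulation is symmetric and decomposable, and since it is contained in $\brbisim$ its transfer condition implies the rooted branching one. Hence A1--A4, P1--P3 and~C are inherited from the strong case, and it remains to treat BP and~G. For both I would first discharge the root condition. Since $\delta(\alpha \pref P')$ is Dirac, the relation $\singleton{(\delta(\alpha \pref P'),\delta(\alpha \pref Q'))}^\dagger$ is trivially decomposable, its only $\alpha$-derivative (even under combined transitions) is $\den{P'}$, and so $\alpha \pref P' \rootedbrbisim \alpha \pref Q'$ reduces to $\den{P'} \brbisim \den{Q'}$. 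Applying this together with $\den{\partial(\cdot) \oplusr Q} = \den{\cdot} \oplusr \den Q$ and the congruence of $\brbisim$ for probabilistic choice (Lemma~\ref{unrooted congruence}), soundness of~BP reduces to the single fact $\delta(E + \tau \pref P) \brbisim \den P$ whenever $E \sqsubseteq P$, and soundness of~G to $\delta(E + F) \brbisim \delta(F)$ whenever $E \sqsubseteq \partial(F)$.

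These two core facts are where the real work lies, and I expect them to be the main obstacle. For~BP I would exhibit the branching probabilistic bisimulation ${\brbisim} \cup \singleton{(\delta(E + \tau \pref P), \den P)}^\dagger$: the transition $\delta(E + \tau \pref P) \arrow{\tau} \den P$ is inert and is matched by letting $\den P$ stand still (the $\arrow{(\tau)}$ option of the transfer clause), while every transition inherited from~$E$ is matched by $\den P$ using precisely the consequence of $E \sqsubseteq P$ recorded after the definition of~$\sqsubseteq$, namely that $\delta(E) \arrow{\alpha} \mu$ implies $\den P \alphahidearrow \nu$ with $\mu \brbisim \nu$. The delicate points to verify are (i)~the combined transitions obtained by splitting the Dirac source $\delta(E + \tau \pref P)$, which mix genuine $E$-moves with the inert $\tau$-move and must be matched componentwise, assembling the pieces again through Lemma~\ref{unrooted congruence}, and (ii)~weak decomposability of the candidate relation, where the non-Dirac shape of $\den P$ is handled by first performing the inert move $\delta(E + \tau \pref P) \Arrow{} \den P$ and then copying the given decomposition, relying on reflexivity of~$\brbisim$. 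The argument for~G is analogous but simpler: there is no inert $\tau$-prefix, and since $E \sqsubseteq \partial(F)$ matches every $E$-move against the Dirac distribution $\den{\partial(F)} = \delta(F)$, the relation ${\brbisim} \cup \singleton{(\delta(E + F), \delta(F))}^\dagger$ is readily seen to be a branching probabilistic bisimulation.
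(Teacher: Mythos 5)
Your proposal is correct and follows essentially the same route as the paper: reduce soundness to per-axiom instances via the congruence property, observe that only \hyperlink{BP}{BP} and \hyperlink{G}{G} are non-trivial, strip the root by the same Definition~\ref{rooted-bpb}/Lemma~\ref{unrooted congruence} argument to the core facts $\partial(E+\tau\pref P)\brbisim P$ (for $E\sqsubseteq P$) and $E+F\brbisim F$ (for $E\sqsubseteq\partial(F)$), and verify these with the same candidate relations, handling the combined transitions that mix $E$-moves with the (partially) inert $\tau$-move exactly as the paper's case~(ii). Your additional explicit treatment of A1--A4, P1--P3 and~C for strong probabilistic bisimilarity, which the paper omits as routine, is also correct.
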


\begin{proof}
  As usual, in view of $\bisim$ and~$\rootedbrbisim$ being
  congruences, one only needs to prove the left-hand and right-hand
  sides of the axioms to be strongly or rooted branching
  probabilistically bisimilar. We only treat the cases
  of the axioms~\hyperlink{BP}{BP} and~\hyperlink{G}{G}
  with respect to rooted branching probabilistic bisimilarity.

  For \hyperlink{BP}{BP}, by Definition~\ref{rooted-bpb} and
  Lemma~\ref{unrooted congruence}, it suffices to show that
  $P \brbisim \partial(E + \tau \pref P)$ if~$E \sqsubseteq
  P$. Suppose $\delta( E + \tau \pref P ) \alphaarrow \mu$. We
  distinguish two cases: (i)~%
  $\delta(E) \alphaarrow \mu$; (ii)~$\alpha = \tau$,
  $\delta(E) \arrow{\tau} \mu'$ and $\mu = \den{P} \probc{r} \mu'$ for
  some~$r \in (0,1]$. For~(i), by definition of~$E \sqsubseteq P$, we
  have $\den{P} \arrow{(\alpha)} \nu$ and $\mu \brbisim \nu$ for
  suitable~$\nu \in \DistrE$. For~(ii), again
  by~$E \mathbin\sqsubseteq P$, we have $\den{P} \arrow{(\tau)} \nu'$
  and $\mu' \brbisim \nu'$. Thus
  $\den{P} \mathbin= \den{P} \!\probc{r}\! \den{P} \arrow{(\tau)} \nu$
  \linebreak[3] and $\mu \brbisim \nu$ for
  $\nu = \den{P} \probc{r} \nu'$, as was to be shown.
  Conversely, $\den{P} \alphaarrow \mu$ trivially implies that
  $\delta(E + \tau \pref P) \Arrow{} \den{P} \alphahidearrow \mu$.
  The requirement on weak decomposability also holds trivially.

  For \hyperlink{G}{G}, by Definition~\ref{rooted-bpb} and
  Lemma~\ref{unrooted congruence}, it suffices to show that $E+F
  \brbisim F$ if $E \sqsubseteq \partial(F)$. Put $\calR = \singleton{
    \langle E+F , F \mkern1mu \rangle }^\dagger \cup {\brbisim}
  \mkern1mu$. We verify that $\calR$ is a branching probabilistic
  bisimulation. Naturally, $\delta(E) \alphaarrow \mu$ implies
  $\delta(E+F) \alphaarrow \mu$, and also weak decomposability is
  easy. Finally, suppose $\delta(E+F) \alphaarrow \mu$. Since $E
  \sqsubseteq \partial(F)$ \vspace{2pt} now we have $\delta(E)
  \alphahidearrow \nu$ for some~$\nu$ with $\mu \brbisim \nu$.
\end{proof}

\noindent
As for the process language with non-deterministic processes only, we
aim at a completeness proof that is built on completeness of strong
{\bisimilarity} and the notion of a concrete process.
Equational characterization of strong probabilistic {\bisimilarity}
has been addressed by various authors. The theory~$\AXp$ provides a
sound and complete theory. For a proof, see e.g.~\cite{Hen12:facj}.

\begin{lemma}
  \label{lemma-completeness-for-strong-probabilistic-bisimulation}
  The theory $\AXp$ is sound and complete for strong {\bisimilarity}.
\end{lemma}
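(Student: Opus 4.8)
The plan is to prove completeness, since soundness of $\AXp$ for $\bisim$ is the first clause of Lemma~\ref{lemma-soundness}. As prefixing binds a probabilistic process while probabilistic choice combines them, I would establish two statements at once: (a)~for all $E,F\in\calE$, if $\delta(E)\bisim\delta(F)$ then $\AXp\vdash E=F$, and (b)~for all $P,Q\in\calP$, if $\den{P}\bisim\den{Q}$ then $\AXp\vdash P=Q$. These are proved by simultaneous induction on the combined depth of the processes involved; any well-founded measure for which the body $P$ of a prefix $\alpha\pref P$ and the non-deterministic components $E_i$ of a probabilistic process $\bigoplusiinI p_i*E_i$ are strictly smaller will serve. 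The base case is immediate, as a depth-$0$ process is provably $\bfzero$ (resp.\ $\partial(\bfzero)$) by A1--A4 (resp.\ and P3).

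For the probabilistic case~(b) I would first rewrite both processes into the normal form $P=\bigoplusiinI p_i*E_i$ and $Q=\bigoplusjinJ q_j*F_j$ with $E_i,F_j\in\calE$, which is provable in $\AXp$ using P1 and~P2 as already noted above. The decisive structural fact is the strong counterpart of Lemma~\ref{decomp}, stated in the remark preceding it: $\den{P}\bisim\den{Q}$ holds iff $\den{P}(S)=\den{Q}(S)$ for every $\bisim$-equivalence class $S\subseteq\calE$ (for strong bisimilarity this holds unconditionally, there being no internal activity to stabilise). Grouping the summands of $P$ by $\bisim$-class, all components $E_i$ lying in one class $S$ are pairwise strongly bisimilar, hence provably equal by the induction hypothesis~(a) at strictly smaller depth; using P3 together with P1 and~P2 I can then collapse each class to a single summand $p_S*E_S$ with $p_S=\den{P}(S)$, and likewise $Q$ to $\bigoplus_S q_S*F_S$. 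By the characterisation $p_S=q_S$, and as the chosen representatives satisfy $E_S\bisim F_S$, statement~(a) gives $\AXp\vdash E_S=F_S$; comparing term by term yields $\AXp\vdash P=Q$.

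For the non-deterministic case~(a) I would write $E=\sumiinI \alpha_i\pref P_i$ and $F=\sumjinJ \beta_j\pref Q_j$ and argue by mutual absorption: it suffices to show $\AXp\vdash F+\alpha_i\pref P_i=F$ for each summand of~$E$ (and symmetrically), since then $E+F=F$ and $F+E=E$, whence $E=F$ by A1--A4. Fix a summand $\alpha_i\pref P_i$; then $\delta(E)\arrow{\alpha_i}\den{P_i}$, so the transfer property yields $\delta(F)\arrow{\alpha_i}\nu$ with $\den{P_i}\bisim\nu$. A transition out of the Dirac distribution $\delta(F)$ is a \emph{combined} transition, arising by splitting $\delta(F)$ into a convex combination of copies that each take a direct $\alpha_i$-transition of~$F$; hence $\nu=\bigoplus_k s_k\,\den{Q_{j_k}}$ for indices with $\beta_{j_k}=\alpha_i$. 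Letting $R$ be the probabilistic process that combines these $Q_{j_k}$ with the weights $s_k$, we have $\den{R}=\nu$, so $\den{P_i}\bisim\den{R}$ and the induction hypothesis~(b) at strictly smaller depth gives $\AXp\vdash P_i=R$. Finally, axiom~C reintroduces each binary probabilistic combination of equally-labelled alternatives while retaining the originals (via A3), so iterating C and re-associating with P2 derives $\AXp\vdash F=F+\alpha_i\pref R=F+\alpha_i\pref P_i$, as required.

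The hard part will be exactly this handling of combined transitions: one must recognise that every transition leaving a Dirac distribution is a convex combination of the direct successors of $F$, recast that combination as a probabilistic process $R$ so that induction hypothesis~(b) applies, and then \emph{synthesise} the matching combined branch on the left purely by repeated use of axiom~C. The remaining nontrivial ingredient is the mass-per-equivalence-class characterisation of $\bisim$ (the strong analogue of Lemma~\ref{decomp}), which is what makes the probabilistic case~(b) close cleanly; everything else is bookkeeping with A1--A4 and P1--P3.
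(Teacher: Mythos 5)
Your proposal cannot be compared against a proof in the paper, because the paper gives none: Lemma~\ref{lemma-completeness-for-strong-probabilistic-bisimulation} is discharged by the single sentence ``For a proof, see e.g.~\cite{Hen12:facj}''. What you have written is therefore a genuinely different route --- a self-contained completeness argument in place of an appeal to the literature --- and, as far as I can see, it is correct. Its two load-bearing ingredients are exactly the right ones, and both are implicitly endorsed elsewhere in the paper: the mass-per-equivalence-class characterisation of $\bisim$ on distributions is precisely the remark preceding Lemma~\ref{decomp} (whose forward direction is the decomposability argument proved there, and whose converse is routine for strong bisimilarity, where no stabilisation is needed); and your handling of the non-deterministic case --- observing that an $\alpha$-transition of $\delta(F)$ is a convex combination $\bigoplus_k s_k * \den{Q_{j_k}}$ of direct $\alpha$-successors of $F$, packaging it as a probabilistic process $R$ so that the induction hypothesis for $\calP$ applies, and then re-synthesising the summand $\alpha \pref R$ inside $F$ by iterated use of axiom~\hyperlink{C}{C} and A3 --- is the same saturation device the authors deploy (for the rooted branching case) in the proof of Theorem~\ref{theorem-prob-completeness}. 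The only points that deserve an explicit check are bookkeeping: the mutual induction is well-founded under the paper's measure~$c$, since $c(P_i)<c(E)$ and $c(R)\leqslant\sum_k c(Q_{j_k})<c(F)$, and normalisation by P1--P2 does not increase~$c$; and collapsing a $\bisim$-class of summands via P3 uses congruence of provable equality under $\partial(\cdot)$ and $\oplusr$, which equational logic supplies. What your approach buys is independence from the external reference and a proof tailored to the exact notion of strong probabilistic bisimilarity with combined transitions used here; what the citation buys the authors is brevity. If this lemma were to be proved in the paper, your argument is essentially the one that would appear.
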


\noindent
The next lemma provides a more state-based characterization of strong
probabilistic bisimilarity.

\begin{lemma}\label{state based}
  Let $\calR \subseteq \DistrE \times \DistrE$ be a decomposable
  relation such that 
  \begin{equation}
    \label{cong}
    \mu_1 \mopcalR \nu_1 \ \text{and} \ \mu_2 \mopcalR \nu_2
    \quad  \text{implies} \quad
    (\mu_1 \oplusr \mu_2) \mopcalR (\nu_1 \oplusr \nu_2)
  \end{equation}
  and for each pair $E,F \in \calE$ 
  \begin{equation}
    \label{diamond}
    \delta(E) \mopcalR \delta(F)
    \ \text{and} \
    E \arrow{\alpha} \mu'
    \quad \text{implies} \quad
    \delta(F) \alphaarrow \nu'
    \ \text{and} \ 
    \mu' \mopcalR \nu'
  \end{equation}
  for a suitable $\nu' \in \DistrE$.
  Then $\mu \mopcalR \nu$ implies $\mu\bisim \nu$.
\end{lemma}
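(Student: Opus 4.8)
The plan is to show that the relation $\calR$ is itself a strong probabilistic bisimulation relation in the sense of Definition~\ref{def-probabilistically-branching-bisimilar}(b). Since $\calR$ is decomposable by assumption, hence symmetric, it only remains to establish the transfer property; once this is done, maximality of $\bisim$ gives ${\calR} \subseteq {\bisim}$, and the claim $\mu \mopcalR \nu \Rightarrow \mu \bisim \nu$ follows at once. So suppose $\mu \mopcalR \nu$ and $\mu \alphaarrow \mu'$. By Definition~\ref{def-pr-transition-relation}(b) I may write $\mu = \bigoplusiinI \: p_i * E_i$ with $E_i \alphaarrow \mu'_i$ and $\mu' = \bigoplusiinI \: p_i * \mu'_i$.

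First I would apply decomposability to $\mu = \bigoplusiinI \: p_i * \delta(E_i)$ to obtain $\nu_i \in \DistrE$ with $\nu = \bigoplusiinI \: p_i * \nu_i$ and $\delta(E_i) \mopcalR \nu_i$. The key difficulty is that the transfer hypothesis~\eqref{diamond} is available only for pairs of \emph{Dirac} distributions, whereas the $\nu_i$ need not be Dirac. To bridge this gap I would decompose a second time: writing each $\nu_i = \bigoplus_{j \in J_i} \: q_{ij} * \delta(F_{ij})$ over its support and applying decomposability to $\nu_i \mopcalR \delta(E_i)$ (using symmetry), I obtain a matching decomposition $\delta(E_i) = \bigoplus_{j \in J_i} \: q_{ij} * \mu_{ij}$ with $\delta(F_{ij}) \mopcalR \mu_{ij}$. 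The crucial observation is that a Dirac distribution can only be written as a convex combination of copies of itself, so $\mu_{ij} = \delta(E_i)$ for every $j$ with $q_{ij} > 0$, and hence $\delta(E_i) \mopcalR \delta(F_{ij})$. Now~\eqref{diamond} applies to the transition $E_i \alphaarrow \mu'_i$, yielding $\nu'_{ij} \in \DistrE$ with $\delta(F_{ij}) \alphaarrow \nu'_{ij}$ and $\mu'_i \mopcalR \nu'_{ij}$.

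It remains to reassemble these local transitions into a single transition from $\nu$. Since $\nu = \bigoplus_{i,j} \: p_i q_{ij} * \delta(F_{ij})$, and the distribution transition relation is closed under convex combinations — which follows directly from Definition~\ref{def-pr-transition-relation}(b), as a transition from each Dirac summand lifts to a transition from the mixture — I obtain $\nu \alphaarrow \nu'$ for $\nu' = \bigoplus_{i,j} \: p_i q_{ij} * \nu'_{ij}$. Finally, rewriting $\mu' = \bigoplusiinI \: p_i * \mu'_i = \bigoplus_{i,j} \: p_i q_{ij} * \mu'_i$ (using $\sum_{j \in J_i} \: q_{ij} = 1$) and invoking property~\eqref{cong}, extended from binary to finite convex combinations by a routine induction on the index set, from $\mu'_i \mopcalR \nu'_{ij}$ I conclude $\mu' \mopcalR \nu'$. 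This verifies the transfer property and completes the argument.

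The main obstacle throughout is the two-fold decomposition of the target: the first application of decomposability handles the split of $\mu$, but it only produces distributions $\nu_i$ on the right, so a second decomposition together with the ``Dirac decomposes into copies of itself'' observation is what actually reduces the distribution-level transfer to the Dirac-level hypothesis~\eqref{diamond}. The remaining ingredients — linearity of $\alphaarrow$ and the finite-combination form of~\eqref{cong} — are routine.
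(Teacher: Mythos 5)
Your proposal is correct and follows essentially the same route as the paper's own proof: the same double application of decomposability, the same observation that a Dirac distribution only decomposes into copies of itself (reducing the transfer obligation to the Dirac-level hypothesis~(\ref{diamond})), and the same reassembly of the matching transitions via Definition~\ref{def-pr-transition-relation}(b) and property~(\ref{cong}). The only differences are notational (the paper introduces an index set $K$ for the combined decomposition), so nothing further is needed.
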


\begin{proof}
  We show that $\calR$ is a strong probabilistic bisimulation relation.
  So, let $\mu, \nu \in \DistrE$ such that $\mu \mopcalR \nu$ and
  $\mu \arrow{\alpha} \mu'$.  By
  Definition~\ref{def-pr-transition-relation}(b) we have
  $\mu = \bigoplusiinI \: p_i * E_i$,
  $\mu' = \bigoplusiinI \: p_i * \mu'_i $, and
  $E_i \alphaarrow \mu'_i$ for all $i \in I$.  Since $\calR$ is
  decomposable, there are $\nu_i \in \DistrE$, for~$i \in I$, such
  that
    \begin{displaymath}
      \nu  = \bigoplusiinI \: p_i * \nu_i
      \quad \text{and} \quad
      \delta(E_i) \mopcalR \nu_i 
      \ \text{for all~$i \in I$.}
    \end{displaymath}
    Let, for each $i\in I$,
    $\nu_i = \bigoplus_{j\in J_i} \: p_{ij} * F_{ij}$.  Since $\calR$
    is decomposable, there are $\mu_{ij} \in \DistrE$,
    for~$j \in J_i$, such that
    \begin{displaymath}
      \delta(E_i)  = \textstyle\bigoplus_{i \in J_i} \: p_{ij} * \mu_{ij}
      \quad \text{and} \quad 
      \mu_{ij} \mopcalR \delta(F_{ij})
      \ \text{for all~$j \in J_i$.}
    \end{displaymath}
    Here $\mu_{ij} = \delta(E_i)$.  Writing $E_{ij} \, {:=} \, E_i$,
    $q_{ij} \, {:=} \, p_i\cdot p_{ij}$ and
    $K = \lc (i,j) \mid i \in I \land j \in J_i \rc$ we obtain
    \begin{displaymath}
      \mu= \bigopluskinK \: q_k * E_k \ ,\
      \nu = \bigopluskinK \: q_{k} * F_{k}
      \quad \text{and} \quad
      \delta(E_k) \mopcalR \delta(F_k)
      \ \text{for all~$k \in K$.}
    \end{displaymath}
    Let $\mu'_{ij} {:=} \mu'_i$ for all $i \in I$ and $j \in
    J_i$. Then $\mu' = \bigopluskinK \: q_{k} * \mu'_k$. Using
    that $E_k \arrow{\alpha} \mu'_k$ for all $k \in K$, there must be
    distributions~$\nu'_k$ for $k \in K$ such that
    \begin{displaymath}
      \delta(F_k) \alphaarrow \nu'_k
      \quad \text{and} \quad
      \mu'_k \mopcalR \nu'_k.
    \end{displaymath}
    By Definition~\ref{def-pr-transition-relation}(b) this implies
    $\nu \alphaarrow \nu'$, for
    $\nu' \, {:=} \, \bigopluskinK \: q_{k} * \nu'_k$.  Moreover,
    (\ref{cong})~yields $\mu' \mopcalR \nu'$.
\end{proof}

\noindent
The following technical lemma expresses that two rooted branching
probabilistically bisimilar processes can be represented in a similar
way.

\begin{lemma}
  \label{lemma-form-of-probabilistic-branching-processes}
  For all $Q, R \in \calP$, if $Q \rootedbrbisim R$ then there are an
  index set~$I$ as well as for all $i\in I$ suitable $p_i > 0$ and
  $F_i, G_i \in \calE$ such that $F_i \rootedbrbisim G_i$, $\AXp \vdash Q =
  \bigoplusiinI \: p_i * F_i$ and $\AXp \vdash R = \bigoplusiinI \:
  p_i * G_i$.
\end{lemma}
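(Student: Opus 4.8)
The plan is to turn the distribution-level equivalence $\den{Q} \rootedbrbisim \den{R}$ into a common weighted decomposition of $\den{Q}$ and $\den{R}$ into Dirac distributions whose matched components are rooted branching bisimilar, and then to read off the required syntactic identities using the already-noted fact that a probabilistic process is provably $\AXp$-equal to the process denoting its distribution.

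First I would write $\den{Q}$ in point-mass form, $\den{Q} = \bigoplusiinI q_i * \delta(E_i)$ with $q_i > 0$, $E_i \in \calE$ and $\sumiinI q_i = 1$. Since $Q \rootedbrbisim R$ amounts to $\den{Q} \rootedbrbisim \den{R}$, and $\rootedbrbisim$ is itself a rooted branching probabilistic bisimulation relation and hence (symmetric and) decomposable, decomposing $\den{Q}$ along $\rootedbrbisim$ yields $\nu_i \in \DistrE$ with $\den{R} = \bigoplusiinI q_i * \nu_i$ and $\delta(E_i) \rootedbrbisim \nu_i$ for each $i \in I$.

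The key step is to split each $\nu_i$ into Dirac components that are individually related to $E_i$. Writing $\nu_i = \bigoplus_{j \in J_i} s_{ij} * \delta(G_{ij})$ in point-mass form with $s_{ij} > 0$ and $\sum_{j \in J_i} s_{ij} = 1$, I would apply decomposability a second time, now to $\nu_i \rootedbrbisim \delta(E_i)$ and decomposing the left-hand distribution $\nu_i$. This produces $\mu_{ij} \in \DistrE$ with $\delta(E_i) = \bigoplus_{j \in J_i} s_{ij} * \mu_{ij}$ and $\delta(G_{ij}) \rootedbrbisim \mu_{ij}$. Here the Dirac nature of $\delta(E_i)$ is decisive: since all weights $s_{ij}$ are positive and sum to~$1$, a point mass can only be written as such a combination of copies of itself, so $\mu_{ij} = \delta(E_i)$ for every $j$, whence $\delta(E_i) \rootedbrbisim \delta(G_{ij})$, i.e.\ $E_i \rootedbrbisim G_{ij}$.

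Finally I would assemble the answer with index set $K = \lc (i,j) \mid i \in I,\, j \in J_i \rc$, weights $p_{(i,j)} = q_i s_{ij} > 0$, and processes $F_{(i,j)} := E_i$ and $G_{(i,j)} := G_{ij}$. Using $\sum_{j \in J_i} s_{ij} = 1$ on the $Q$-side, a direct computation shows that, as distributions, $\den{Q} = \bigopluskinK p_k * \delta(F_k)$ and $\den{R} = \bigopluskinK p_k * \delta(G_k)$, while $F_k \rootedbrbisim G_k$ for all $k \in K$. The correspondence between a probabilistic process and the process denoting its distribution then gives $\AXp \vdash Q = \bigopluskinK p_k * F_k$ and $\AXp \vdash R = \bigopluskinK p_k * G_k$, completing the proof. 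I expect the only genuinely delicate point to be orienting the two applications of decomposability correctly --- first splitting $\den{Q}$, then splitting each $\nu_i$ --- together with the collapse of a Dirac distribution to copies of itself; the translation from distribution identities to provable $\AXp$-equalities is routine given the correspondence fact.
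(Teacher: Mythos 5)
Your proposal is correct and follows essentially the same route as the paper's own proof: decompose $\den{R}$ along the decomposable relation $\rootedbrbisim$ against the point-mass form of $\den{Q}$, then split each matched component into Dirac summands (using that a Dirac distribution can only decompose into copies of itself) and reindex, finishing with P1--P3 and the correspondence between a probabilistic process and its distribution. You merely spell out more explicitly the second application of decomposability that the paper compresses into a single ``Therefore''.
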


\begin{proof}
  Suppose $\AXp \vdash Q = \bigoplusjinJ \: q_j * F'_j$. Since
  $Q \rootedbrbisim R$ and $\rootedbrbisim$ is decomposable, the
  process~$R$ can be written as $\bigoplusjinJ \: q_j * R_j$ with
  $R_j \in \calP$ for $j \in J$, such that
  $\partial(F_j) \rootedbrbisim R_j$. Therefore, each
  distribution~$R_j$ can be written as
  $\bigoplus_{k {\in} K_j} \: r_{jk} * G_{jk}$ where
  $\partial(F'_j) \rootedbrbisim \partial(G_{jk})$ for all $j \in J$
  and $k \in K_j$. We now define $F_{jk} = F'_j$ for $j \in J$ and
  $k \in K_j$. Then, using the axioms P1, P2 and P3 we can derive
  \begin{displaymath}
    \begin{array}{l}
    \AXp \vdash Q = \bigoplusjinJ \, \bigoplus_{k {\in} K_j} \: q_j r_{jk}  * F_{jk}
    \smallskip \\ 
    \AXp \vdash R = \bigoplusjinJ \, \bigoplus_{k {\in} K_j} \: q_j r_{jk}  *
    G_{jk}
    \end{array}
  \end{displaymath}
   with $F_{jk} \brbisim G_{jk}$ for $j \in J$ and $k \in K_j$. This
   proves the lemma. 
\end{proof}

\noindent
Similar to the non-deterministic case, a transition $E \arrow{\tau}
\mu$ is called \textit{inert} iff $\delta(E) \brbisim \mu$. Typical
cases of inert transitions include
\begin{displaymath}
  \begin{array}{c}
    \tau \pref P \  \arrow{\tau} \  \den{P} \;, \\[2pt]
    E + \tau \pref \partial(E) \  \arrow{\tau} \  \delta(E) \;.
  \end{array}
\end{displaymath}
Furthermore, a transition $E \arrow{\tau} \mu_1\oplusr \mu_2$ with
$r \in (0,1]$ and $\delta(E) \brbisim \mu_1$ is called
\emph{partially inert}. A typical case is
\begin{displaymath}
  \begin{array}{c}
    \tau \pref (\partial(b \pref P + \tau \pref Q)\oplusr Q)
    + b \pref P + \tau \pref Q
    \  \arrow{\tau} \
    \delta(b \pref P + \tau \pref Q)\oplusr \den{Q} \;.
  \end{array}
\end{displaymath}
Here $\delta \big( \tau \pref (\partial(b
\pref P + \tau \pref Q)\oplusr Q) + b \pref P + \tau \pref Q \big)
\brbisim \delta(b \pref P + 
\tau \pref Q)$ because\linebreak $\delta(b \pref P + \tau \pref Q)
\tauhidearrow \delta(b \pref P + \tau \pref Q)\oplusr \den{Q}$.

In Section~\ref{sec-nondet} a process is called concrete if it does
not exhibit an inert transition. In the setting with probabilistic
choice we need to be more careful. For example, we also want to
exclude processes of the form
\begin{displaymath}
  \partial( \tau \pref P ) \probc{1/2} \partial( a \pref Q ) )
  \quad \text{and} \quad
  \partial( a \pref P ) \probc{1/2} \partial( b \pref \bigl(
  \partial( \tau \pref Q ) \probc{1/3} Q
  \bigr))
\end{displaymath}
from being concrete, although they cannot
perform a transition by themselves at all. Therefore, we define
the \emph{derivatives} $\der(P)\subseteq\calE$ of a probabilistic
process $P\in\calP$ by
\begin{displaymath}
  \begin{array}{r@{~:=~}l}
    \der(P \oplusr Q) & \der(P) \cup \der(Q) 
    \smallskip \\
    \der(\partial(\sum_{i\in I}\alpha_i\cdot P_i)) &
    \{ \, \sum_{i\in I} \: \alpha_i\cdot P_i \, \} \cup 
    \bigcup_{i\in I} \: \der(P_i)
  \end{array}
\end{displaymath}
and define a process~$\Pbar \in \calP$ to be \emph{concrete} iff none
of its derivatives can perform a partially inert transition, i.e., if
there is no transition $E \arrow{\tau} \mu_1\oplusr \mu_2$ with $E \in
\der(\Pbar)$, $r \in (0,1]$ and $\delta(E) \brbisim \mu_1$. A
non-deterministic process~$\Ebar$ is called concrete if the
probabilistic process~$\partial(\Ebar)$ is. Moreover, we define two
sets of concrete processes:
\begin{displaymath}
  \calEconc = \lc \Ebar \in \calE \mid \text{$\Ebar$ is concrete} \rc
  \quad \text{and} \quad \calPconc = \lc \Pbar \in \calP \mid
  \text{$\Pbar$ is concrete} \rc.
\end{displaymath}
Furthermore, we call a process $E \in \DistrE$ \emph{rigid} iff there
is no inert transition $E \arrow\tau \mu$, and write $\calE_r = \lc
\Ebar \in \calE \mid \text{$\Ebar$ is rigid} \rc$. Naturally,
$\calEconc \subseteq \calE_r$.

We use concrete and rigid processes to build the proof of the
completeness result for rooted branching probabilistic {\bisimilarity}
on top of the completeness proof of strong probabilistic
{\bisimilarity}. The following lemma lists all properties of concrete
and rigid processes we need in our completeness proof.

\begin{lemma}
  \label{cc}
  \begin{enumerate}[(a)]
  \item If $E = \sumiinI \: \alpha_i \cdot P_i$ with $P_i \in
    \calPconc$ and, for all $i \in I$, $\alpha_i \neq \tau$ or $\den{P_i}$
    cannot be written as $\mu_1 \oplusr \mu_2$ with $r
    \mathbin \in (0,1]$ and $\delta(E) \brbisim \mu_1$, then $E \in
      \calEconc$.
  \item If $P_1,P_2\in\calPconc$ then $P_1 \oplusr P_2\in\calPconc$.
  \item If $\mu = \bigoplusiinI \: p_i * \mu_i \in \Distr(\calEconc)$
    with each $p_i>0$, then each $\mu_i\in\DistrEconc$.
  \item If $\mu\in\Distr(\calEconc)$ and $\mu \alphahidearrow \mu'$
    then $\mu'\in\DistrEconc$. 
  \item If $\mu \in \Distr(\calE_r)$ and $\mu \,\Arrow{}\, \mu'$ with
    $\mu \brbisim \mu'$ then $\mu = \mu'$.
  \item If $E \mathbin\in \calEconc$, $F \mathbin\in \calE$, and
    $\mu,\nu \mathbin\in \DistrE$ are such that \vspace{2pt} $E
    \brbisim F$, 
    $E \arrow{\alpha} \mu$, $\delta(F) \arrow{(\alpha)} \nu$ and
    $\mu \brbisim \nu$, then $\delta(F) \arrow{\alpha} \nu$.
  \item If $\mu \brbisim \bigoplusiinI \: p_i * \nu_i$ for $\mu \in
    \Distr(\calE_r)$ then $\mu = \bigoplusiinI \: p_i*\mu_i$ for
    certain $\mu_i \brbisim \nu_i$.
  \end{enumerate}
\end{lemma}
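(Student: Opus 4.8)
The plan is to establish the seven items in the order given, treating (a)--(d) as bookkeeping about supports and derivatives, isolating (e) as the one item with genuine content, and then obtaining (f) and~(g) as consequences of~(e). The organising observation is that~(e) merely restates that every $\mu\in\Distr(\calE_r)$ is $\brbisim$-stable in the sense of~(\ref{stable}); once this is available, both~(f) and~(g) follow by a single appeal to weak decomposability whose internal computation is collapsed back onto its source by stability. Notably, on this route neither Lemma~\ref{stabilizing} nor Lemma~\ref{decomp} is needed for Lemma~\ref{cc}; the work is carried by the stuttering property (Lemma~\ref{stuttering}) and cancellativity (Lemma~\ref{cancellative}).

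For~(a) I would unfold $\der(\partial(E)) = \{E\}\cup\bigcup_i\der(P_i)$: every derivative other than $E$ lies in some $\der(P_i)$ with $P_i\in\calPconc$, while the hypothesis says exactly that $E$ itself performs no partially inert transition, so no derivative of $\partial(E)$ does and $E$ is concrete. Item~(b) is immediate from $\der(P_1\oplusr P_2)=\der(P_1)\cup\der(P_2)$. For~(c), if every $p_i>0$ then $\mathit{spt}(\mu_i)\subseteq\mathit{spt}(\mu)\subseteq\calEconc$. Item~(d) rests on the routine transitivity fact that $G\in\der(\partial(E))$ implies $\der(\partial(G))\subseteq\der(\partial(E))$, which makes any derivative of a concrete process concrete; a full step $\mu\arrow\alpha\mu'$ then lands in $\Distr(\calEconc)$ because each target state is such a derivative, and the partial $\tau$-case reduces to this using~(c) to place the moving part in $\Distr(\calEconc)$.

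The heart is~(e). First I would invoke Lemma~\ref{stuttering} to reduce a whole computation $\mu\Arrow{}\mu'$ with $\mu\brbisim\mu'$ to a single step: every intermediate distribution along the path is then $\brbisim\mu$, so it suffices to rule out a non-trivial first step. The decisive manoeuvre is that any combined $\arrow{(\tau)}$-step refines into a succession of steps each moving a single state: if the first genuine internal move is written as $\mu = \delta(D)\probc{s}\sigma$ with the $D$-part travelling along $D\arrow\tau\theta$, then $\mu\Arrow{}\theta\probc{s}\sigma$ is a valid refined path with $s\in(0,1]$. Applying stuttering to it yields $\delta(D)\probc{s}\sigma\brbisim\theta\probc{s}\sigma$, and cancellativity (Lemma~\ref{cancellative}), peeling off the common factor $\sigma$, gives $\delta(D)\brbisim\theta$. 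Since $D$ is rigid this contradicts $\delta(D)\nbrbisim\theta$; hence the refined path, and so the original one, is trivial and $\mu=\mu'$.

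Given~(e), item~(g) is one application of weak decomposability to $\bigoplus_i p_i * \nu_i \brbisim \mu$, producing $\mu\Arrow{}\bar\mu = \bigoplus_i p_i * \mu_i$ with $\nu_i\brbisim\mu_i$, after which stability of the rigid $\mu$ forces $\bar\mu=\mu$. For~(f), I would first use~(d) to note that $\delta(E)\arrow\alpha\mu$ keeps $\mu$ in $\Distr(\calEconc)\subseteq\Distr(\calE_r)$; for $\alpha\neq\tau$ the match $\delta(F)\arrow{(\alpha)}\nu$ is automatically a real $\alpha$-transition, and for $\alpha=\tau$ the only alternative to a real move is a stutter $\nu=\nu_1'\oplusr\delta(F)$, where weak decomposability together with stability of $\mu$ (from~(e)) exhibits a genuine splitting $\mu=\mu_1\probc{1-r}\mu_2$ with $\delta(E)\brbisim\mu_1$, that is, a partially inert transition of the concrete process $E$ — which is impossible; hence $\delta(F)\arrow\alpha\nu$. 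The main obstacle is therefore~(e): a priori a combined internal transition may move several rigid states at once with bisimilarity holding only in the aggregate, so cancellativity cannot be applied state-by-state. The refinement into single-state moves is exactly what defuses this, reducing the aggregate phenomenon to one non-inert move that rigidity forbids.
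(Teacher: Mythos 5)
Your proposal is correct and follows essentially the same route as the paper's own proof: items (a)--(d) from the definitions, item (e) by refining the first non-trivial internal step to one acting on a single rigid state and then combining the stuttering property with cancellativity, and items (f) and (g) by weak decomposability collapsed via (e), with concreteness excluding the partially inert case in (f). No substantive differences to report.
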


\begin{proof}
  Properties (a), (b), (c) and~(d) follow immediately from the
  definitions, in the case of~(d) also using
  Definition~\ref{def-pr-transition-relation}(b).

  For (e), let $\mu\mathbin\in\Distr(\calE_r)$ and $\mu \Arrow{} \mu'$
  with $\mu\brbisim\mu'$. Towards a contradiction, suppose
  $\mu\neq\mu'$.  Then there must be a distribution $\bar\mu\neq\mu$
  such that $\mu \tauhidearrow \bar\mu$ and $\bar\mu \Arrow{} \mu'$.
  We may even choose $\bar\mu$ such that the transition
  $\mu \tauhidearrow \bar\mu$ acts on only one (rigid) state in
  the support of $\mu$, i.e.\ there are $E \in \calE$,
  {$r \in (0,1]$} and $\rho,\nu \in \DistrE$ such
  that $\mu = \delta(E) \oplusr \rho$, $E \arrow\tau \nu$ and
  $\bar\mu = \nu \oplusr \rho$. By Lemma~\ref{stuttering}
  $\delta(E) \oplusr \rho = \mu \brbisim \bar\mu = \nu \oplusr \rho$.
  Hence by Lemma~\ref{cancellative} $\delta(E) \brbisim \nu$. So the
  transition $E \arrow\tau \nu$ is inert, contradicting $E \in \calE_r$.

  To establish (f), suppose $E \in \calEconc$, $F \in \calE$, and
  $\mu,\nu \in \DistrE$ are such that $E \brbisim F$,
  $E \arrow{\alpha} \mu$, $\delta(F) \arrow{(\alpha)} \nu$ and
  $\mu \brbisim \nu$. Assume $\alpha = \tau$, for otherwise the
  statement is trivial. Then $\delta(F) \tauarrow \nu_1$ and
  $\nu = \nu_1 \oplusr \delta(F)$ for some $\nu_1 \in \DistrE$ and
  $r \in [0,1]$. Since $\brbisim$ is weakly
  decomposable, there are $\bar\mu, \mu_1, \mu_2 \in \DistrE$ such
  that $\mu \Arrow{} \bar\mu$, $\mu \brbisim \bar\mu$,
  $\bar\mu= \mu_1 \oplusr \mu_2$, $\mu_1 \brbisim \nu_1$ and
  $\mu_2 \brbisim \delta(F)$. Since $\mu$ is concrete, using case~(d)
  of the lemma, $\bar\mu = \mu$ by case~(e). Thus
  $E \arrow{\tau} \mu_1\oplusr \mu_2$ with
  $\delta(E) \brbisim \delta(F) \brbisim \mu_2$.  Since $E$~is
  concrete, this transition cannot be partially inert. Thus, we
    must have~$r = 1$. It follows that
  $\delta(F) \arrow{\alpha} \nu$.

  Regarding~(g), if $\mu \brbisim \bigoplusiinI \: p_i * \nu_i$ for
  $\mu \mathbin\in \Distr(\calE_r)$, then $\mu \Arrow{} \bar\mu :=
  \bigoplusiinI \: p_i * \mu_i$ with $\mu_i \brbisim \nu_i$
  by weak decomposability of~$\brbisim \mkern1mu$.
  By~(e) we have $\bar\mu = \mu$.
\end{proof}

\begin{lemma}
  \label{lemma-equal-for-concrete-processes}
  For all $\Pbar,\Qbar \in \calPconc$, if $\Pbar \brbisim \Qbar$ then
  $\Pbar \bisim \Qbar$ and $\AXp \vdash \Pbar = \Qbar$.
\end{lemma}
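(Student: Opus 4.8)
The plan is to reduce branching bisimilarity to strong bisimilarity on concrete processes and then invoke completeness of $\AXp$ for $\bisim$ (Lemma~\ref{lemma-completeness-for-strong-probabilistic-bisimulation}). Since $\Pbar \brbisim \Qbar$ unfolds to $\den{\Pbar} \brbisim \den{\Qbar}$, and the supports of $\den{\Pbar}$ and $\den{\Qbar}$ consist of derivatives of concrete processes and hence lie in $\calEconc$, both distributions belong to $\Distr(\calEconc)$. The main goal is therefore to show $\den{\Pbar} \bisim \den{\Qbar}$; by the definition of $\bisim$ on probabilistic processes this is exactly $\Pbar \bisim \Qbar$, and then $\AXp \vdash \Pbar = \Qbar$ follows at once from Lemma~\ref{lemma-completeness-for-strong-probabilistic-bisimulation}.

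To obtain $\den{\Pbar} \bisim \den{\Qbar}$ I would apply Lemma~\ref{state based} to the relation $\calR = {\brbisim} \cap (\Distr(\calEconc) \times \Distr(\calEconc))$, i.e.\ branching bisimilarity restricted to distributions over concrete processes; the restriction is essential, since on all of $\DistrE$ branching bisimilarity is only weakly decomposable. It then suffices to verify the three hypotheses of that lemma. Condition~(\ref{cong}) is immediate from Lemma~\ref{unrooted congruence} (closure of $\brbisim$ under $\oplusr$), noting that the support of $\mu_1 \oplusr \mu_2$ is the union of the supports of $\mu_1$ and $\mu_2$, so that membership in $\Distr(\calEconc)$ is preserved. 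Decomposability of $\calR$ follows from Lemma~\ref{cc}(g): given $\mu \mopcalR \nu$ with $\mu = \bigoplusiinI p_i * \mu_i$, read the hypothesis from the rigid side as $\nu \brbisim \bigoplusiinI p_i * \mu_i$ with $\nu \in \Distr(\calE_r)$ (using $\calEconc \subseteq \calE_r$), yielding $\nu = \bigoplusiinI p_i * \nu_i$ with $\nu_i \brbisim \mu_i$; Lemma~\ref{cc}(c) then places each $\mu_i$ and $\nu_i$ back in $\Distr(\calEconc)$, so $\mu_i \mopcalR \nu_i$.

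The crux is the diamond condition~(\ref{diamond}). Assume $\delta(E) \mopcalR \delta(F)$ with $E,F \in \calEconc$ and $E \arrow{\alpha} \mu'$, so $\delta(E) \arrow{\alpha} \mu'$. The branching transfer condition gives $\delta(F) \Arrow{} \nubar \alphahidearrow \nu'$ with $\delta(E) \brbisim \nubar$ and $\mu' \brbisim \nu'$. Since $\delta(F) \brbisim \delta(E) \brbisim \nubar$ and $\delta(F) \in \Distr(\calE_r)$, the rigidity Lemma~\ref{cc}(e) collapses the internal prefix, giving $\nubar = \delta(F)$ and hence $\delta(F) \arrow{(\alpha)} \nu'$. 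The remaining and decisive step, where concreteness does the real work, is to upgrade this hidden transition to a genuine strong one: this is precisely Lemma~\ref{cc}(f), applied with $E \in \calEconc$, $E \brbisim F$, $E \arrow{\alpha} \mu'$, $\delta(F) \arrow{(\alpha)} \nu'$ and $\mu' \brbisim \nu'$, which delivers $\delta(F) \arrow{\alpha} \nu'$; Lemma~\ref{cc}(d) places $\nu'$ in $\Distr(\calEconc)$, so $\mu' \mopcalR \nu'$ as required. With all three hypotheses checked, Lemma~\ref{state based} yields $\den{\Pbar} \bisim \den{\Qbar}$, completing the proof. The main obstacle is exactly this conversion of a hidden $\arrow{(\alpha)}$ into a visible $\arrow{\alpha}$; everything hinges on the non-inertness guaranteed by concreteness, packaged in Lemma~\ref{cc}(e)--(f).
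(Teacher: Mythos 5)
Your proposal is correct and follows essentially the same route as the paper: the same relation $\calR = {\brbisim} \cap (\Distr(\calEconc) \times \Distr(\calEconc))$, verification of the hypotheses of Lemma~\ref{state based} via Lemma~\ref{unrooted congruence} and the rigidity/concreteness facts of Lemma~\ref{cc}(e)--(f), and then Lemma~\ref{lemma-completeness-for-strong-probabilistic-bisimulation}. The only cosmetic difference is that you derive decomposability by invoking Lemma~\ref{cc}(g) where the paper argues directly from weak decomposability together with Lemma~\ref{cc}(e); these amount to the same argument.
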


\begin{proof}
  Let $\calR \, {:=} \, {\brbisim} \cap (\Distr(\calEconc) \times
  \Distr(\calEconc))$.  Then, by Lemma~\ref{cc}(c)--(d), $\calR$ is a
  branching probabilistic bisimulation relation relating $\Pbar$
  and~$\Qbar$.  We show that $\calR$ moreover satisfies the conditions
  of Lemma~\ref{state based}.  Condition~(\ref{cong}) is a direct
  consequence of Lemmas~\ref{unrooted congruence} and~\ref{cc}(b).
  That $\calR$ is decomposable follows since it is weakly
  decomposable, in combination with Lemma~\ref{cc}(e). Now, in
    order to verify condition~(\ref{diamond}), suppose
  $\delta(E) \mopcalR \delta(F)$ and $E\alphaarrow \mu$.  Then
  $\delta(F) \Arrow{} \nubar \alphahidearrow \nu$ for some
  $\nubar,\nu\in\calP$ with $\delta(E) \mopcalR \nubar$ and
  $\mu \mopcalR \nu$. By Lemma~\ref{cc}(e) we
  have~$\nubar = \delta(F)$.
  Thus $\delta(F) \arrow{(\alpha)} \nu$. Hence, using Lemma~\ref{cc}(f)
  it follows that $\delta(F) \arrow{\alpha} \nu$.
  With $\calR$ satisfying conditions (\ref{cong}) and~(\ref{diamond}),
  Lemma~\ref{state based} yields $\Pbar \bisim \Qbar$. By
  Lemma~\ref{lemma-completeness-for-strong-probabilistic-bisimulation}
  we obtain $\AXp \vdash \Pbar = \Qbar$.
\end{proof}

\noindent
Before we are in a position to prove our main result we need one more
technical lemma. Here we write $\AXpb \vdash P_1 \approx P_2$ as
  a shorthand for
  \begin{displaymath}
    \forall \alpha \in \calA \,
    \forall Q \in\calP \,
    \forall r \in (0,1) \colon
    \AXpb \vdash \alpha \pref (P_1 \oplusr Q) =
    \alpha \pref (P_2 \oplusr Q) \, .
  \end{displaymath}
For example, using axiom \hyperlink{BP}{BP}, if $E\sqsubseteq P$ then
$\AXpb \vdash \partial (E + \tau \pref P ) \approx P$.  Likewise,
using \hyperlink{G}{G}, if $E \sqsubseteq\partial(F)$ then
$\partial(E+F)\approx\partial(F)$.  As in the proof of
Lemma~\ref{simpleBP}(i), from $\AXpb \vdash P_1 \approx P_2$ it also
follows that $\AXpb \vdash \alpha \pref P_1 \approx \alpha \pref P_2$
for all $\alpha \in \calA$.
In the proof of the lemma we rely on the axioms
\hyperlink{BP}{BP} and~\hyperlink{G}{G}.

\begin{lemma}
  \label{lemma-prob-technical}
  \mbox{}
  \begin{itemize}
  \item [(a)] For each non-deterministic process $E \in \calE$ there
    is a concrete probabilistic process $\Pbar \in \calPconc$ such
    that $\AXpb \vdash \partial(E) \approx \Pbar$.

  \item [(b)] For each probabilistic process $P \in \calP$ there is a
    concrete probabilistic process $\Pbar \in \calPconc$ such that
    $\AXpb \vdash { \alpha \pref P =
      \alpha \pref \Pbar }$ for all $\alpha \in \calA$.

  \item [(c)] For all probabilistic processes $Q,R \in \calP$, if $Q
    \brbisim R$ then $\AXpb \vdash { \alpha \pref Q = \alpha \pref R
    }$ for all~$\alpha \in \calA$.
  \end{itemize}
\end{lemma}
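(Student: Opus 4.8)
The plan is to prove (a), (b) and~(c) simultaneously by induction on complexity, with (a) indexed by $c(E)$, (b) by $c(P)$ and (c) by $\max\{c(Q),c(R)\}$, ordered so that within one complexity level (a) is settled before (b) and (b) before~(c). This mirrors the structure of Lemma~\ref{lemma-branching-bisim-vs-equal-under-prefix} from the non-deterministic case; the relation $\approx$ plays the role that ``provable equality under a prefix'' played there. The point is that a concrete process branching bisimilar to a given one is in general only \emph{rooted} branching equal after a prefix, so $\AXpb\vdash\partial(E)=\Pbar$ cannot be expected, whereas $\approx$ is exactly the strength needed to substitute such a process inside a probabilistic choice that is itself guarded by a prefix. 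Throughout I use that $\AXp\subseteq\AXpb$, that provable equality is a congruence, and that by Lemma~\ref{lemma-soundness} $\AXpb\vdash\alpha\pref P=\alpha\pref Q$ yields $P\brbisim Q$.

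Part~(c) is the easiest. Given $Q\brbisim R$, I would apply the induction hypothesis~(b) to obtain concrete $\Qbar,\Rbar\in\calPconc$ with $\AXpb\vdash\alpha\pref Q=\alpha\pref\Qbar$ and $\AXpb\vdash\alpha\pref R=\alpha\pref\Rbar$ for all $\alpha$; soundness gives $Q\brbisim\Qbar$ and $R\brbisim\Rbar$, so $\Qbar\brbisim\Rbar$ with both concrete. Lemma~\ref{lemma-equal-for-concrete-processes} yields $\AXp\vdash\Qbar=\Rbar$, whence $\AXpb\vdash\alpha\pref Q=\alpha\pref\Qbar=\alpha\pref\Rbar=\alpha\pref R$. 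For part~(b), I would first normalise $P$ by P1 and~P2 to $\bigoplusiinI\: p_i*E_i$ (as recorded before the lemma, $\AXp\vdash P=\bigoplusiinI\: p_i*E_i$, and $c$ is preserved), apply the induction hypothesis~(a) to each $E_i$ (legitimate since $c(E_i)<c(P)$) to get concrete $\Pbar_i$ with $\AXpb\vdash\partial(E_i)\approx\Pbar_i$, and then replace the factors one at a time: using P1 and~P2 to bring $\partial(E_i)$ into head position $\partial(E_i)\oplusr Q$ under the prefix and invoking $\approx$ to rewrite it to $\Pbar_i$ (a lone remaining factor is handled by the P3-trick from the proof of Lemma~\ref{simpleBP}(i)). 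The result $\Pbar=\bigoplusiinI\: p_i*\Pbar_i$ is concrete by Lemma~\ref{cc}(b), and $\AXpb\vdash\alpha\pref P=\alpha\pref\Pbar$.

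Part~(a) is the crux. Writing $E=\sumiinI\:\alpha_i\pref P_i$, I would first apply the induction hypothesis~(b) to each $P_i$ ($c(P_i)<c(E)$) and substitute, obtaining $\AXpb\vdash E=E_0$ for $E_0=\sumiinI\:\alpha_i\pref\Pbar_i$ with every $\Pbar_i$ concrete and $E_0\brbisim E$; thus $\partial(E)\approx\partial(E_0)$ holds trivially. I would then ask whether $E_0$ has a (partially) inert $\tau$-transition, i.e.\ a $\tau$-summand $\tau\pref\Pbar_{i_0}$ with $\den{\Pbar_{i_0}}=\mu_1\oplusr\mu_2$, $\delta(E_0)\brbisim\mu_1$ and $r\in(0,1]$. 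If none exists, then $E_0\in\calEconc$ by Lemma~\ref{cc}(a) and $\Pbar:=\partial(E_0)$ finishes. If some transition is \emph{fully} inert ($r=1$, i.e.\ $\den{\Pbar_{i_0}}\brbisim\delta(E_0)$), I write $E_0=H+\tau\pref\Pbar_{i_0}$ and verify $H\sqsubseteq\Pbar_{i_0}$: each transition of $H$ is a transition of $E_0$, hence matched by $\den{\Pbar_{i_0}}$ up to internal steps, and since $\den{\Pbar_{i_0}}\in\Distr(\calEconc)\subseteq\Distr(\calE_r)$ those internal steps are trivial by Lemma~\ref{cc}(e), giving a direct match. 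The $\approx$-form of axiom~\hyperlink{BP}{BP} then gives $\partial(E_0)\approx\Pbar_{i_0}$, and $\Pbar:=\Pbar_{i_0}$ is already concrete, so $\partial(E)\approx\Pbar$ follows from $\AXpb\vdash\partial(E)=\partial(E_0)$.

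The genuinely new case, and the main obstacle, is a \emph{partially} inert $\tau$-summand ($r\in(0,1)$) with no fully inert transition available. Here axiom~\hyperlink{BP}{BP} no longer applies, since $H\sqsubseteq\Pbar_{i_0}$ would demand that the whole combined distribution $\mu_1\oplusr\mu_2$---not merely its inert part $\mu_1\brbisim\delta(E_0)$---match the transitions of $H$ via a combined step. My plan is to split $\Pbar_{i_0}$ by P1, P2, P3 into $P^{(1)}\oplusr P^{(2)}$ with $\den{P^{(1)}}=\mu_1$ and $\den{P^{(2)}}=\mu_2$ (both concrete), so that the $\tau$-branch loops back, up to $\brbisim$, to $E_0$ itself through its weight-$r$ part, and then to eliminate this looping branch using the $\approx$-form of axiom~\hyperlink{G}{G} ($\partial(E'+F')\approx\partial(F')$ whenever $E'\sqsubseteq\partial(F')$), strictly decreasing $c$ and invoking the induction hypothesis~(a) on the residual process. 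The delicate points I expect to fight with are establishing the side condition of~\hyperlink{G}{G}---for which I would turn the bisimilarity $\delta(E_0)\brbisim\mu_1$ into a usable transfer of transitions using Lemma~\ref{cc}(d)--(g) together with the Stuttering Property (Lemma~\ref{stuttering}) and Cancellativity (Lemma~\ref{cancellative})---and ensuring that the rewrite genuinely lowers the complexity so the induction stays well-founded. Once a concrete $\Pbar$ with $\partial(E_0)\approx\Pbar$ is obtained, $\partial(E)\approx\Pbar$ again follows since $\AXpb\vdash\partial(E)=\partial(E_0)$.
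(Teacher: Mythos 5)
Your overall architecture coincides with the paper's: the same simultaneous induction on $c(E)$, $c(P)$, $\max\{c(Q),c(R)\}$, the same handling of (b) and (c), the same use of axiom \hyperlink{BP}{BP} (via $H\sqsubseteq\Pbar_{i_0}$ and Lemma~\ref{cc}(e)) for a fully inert $\tau$-summand, and the same idea of using axiom \hyperlink{G}{G} to strip off a partially inert $\tau$-summand. But in the partially inert case you stop exactly where the real work begins. The side condition $\tau\pref P_{i_0}\sqsubseteq\partial(H)$ of axiom~\hyperlink{G}{G} is not a routine consequence of Lemmas~\ref{cc}, \ref{stuttering} and~\ref{cancellative}: the paper has to first establish $\Ebar\brbisim H\brbisim F$ (where $F$ is the state carrying the part of $\den{P_{i_0}}$ bisimilar to $\delta(\Ebar)$), which requires constructing an explicit relation closed under $\oplusr$ and checking the transfer condition for four pairs, and separately to prove the claim that $\delta(F)\arrow{\tau}\mu'$ for some $\mu'\brbisim\den{P}$ --- this last claim is what lets $H$ simulate the looping branch $\tau\pref P_{i_0}$ at all. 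Declaring these "delicate points I expect to fight with" leaves the crux of part~(a) unproven.

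There is also a concrete flaw in your termination argument. You propose to justify the elimination of the looping branch by "strictly decreasing $c$ and invoking the induction hypothesis~(a) on the residual process." But the residual $H=\sum_{i\neq i_0}\alpha_i\pref\Pbar_i$ is assembled from the concrete processes $\Pbar_i$ delivered by induction hypothesis~(b), and nothing bounds $c(\Pbar_i)$ by $c(P_i)$; hence $c(H)$ need not be below $c(E)$ and the outer induction hypothesis does not apply to it. The paper avoids this by running an \emph{inner} induction on the number of $\tau$-summands $\tau\pref P_k$ whose distribution can be written as $\mu_1\oplusr\mu_2$ with $\delta(\Ebar)\brbisim\mu_1$, which does decrease when a summand is removed (using $\Ebar\brbisim H$ to transfer the count). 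Alternatively you could repair your measure by applying the outer hypothesis~(a) to $\sum_{i\neq i_0}\alpha_i\pref P_i$ (built from the \emph{original} $P_i$, which is provably equal to $H$ and has complexity strictly below $c(E)$); but as written the well-foundedness of your recursion fails.
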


\begin{proof}
  By simultaneous induction on $c(E)$, $c(P)$, and
  $\max \lbrace c(Q), c(R) \rbrace$. The base case, which applies to
  case (a) only, is clear, since the process $\bfzero+\cdots+\bfzero$
  is concrete. \pagebreak[3]

  Case (a) for $c(E) > 0$. The process~$E$ can be written as $\sumiinI
  \: \alpha_i \pref P_i $ for some index set~$I$ and suitable~$\alpha_i
  \in \calA$ and $P_i \in \calP$. Pick by the induction
  hypothesis~(b), for each $i \in I$, a concrete probabilistic
  process~$\Pbar_i \mathbin\in \calPconc$ with
  $\AXpb \vdash { \alpha_i \pref \Pbar_i = \alpha_i \pref P_i }$.
  Now $\AXp \vdash E = \bar{E}$ for $\bar{E} := \sumiinI \:
    \alpha_i \pref \Pbar_i$. We distinguish two cases.

  \medskip

  (i)~First suppose that for some $i_0\in I$ we have
  $\alpha_{i_0}=\tau$ and $\Pbar_{i_0} \brbisim \partial(\bar
  E)$. Then $\AXpb \vdash E = H + \tau.\Pbar_{i_0}$, where \plat{$H :=
    \sum_{i \in I \setminus \{i_0\}} \: \alpha_i \pref \Pbar_i$}. It
  now suffices to show that \plat{$H \sqsubseteq \Pbar_{i_0}$},
  because then axiom~\hyperlink{BP}{BP} yields $\AXpb \vdash
  \partial(E) \approx \Pbar_{i_0}$.
  So, suppose $H \arrow{\alpha}\mu$.  Then $\Ebar
    \arrow{\alpha} \mu$. Since $\partial(\Ebar) \brbisim
  \Pbar_{i_0}$, we have $\den{\Pbar_{i_0}} \Arrow{} {\nubar}
  \alphahidearrow \nu$ where $\delta(\Ebar) \brbisim {\nubar}$ and
  $\mu \brbisim \nu$.  Because $\Pbar_{i_0}$ is concrete,
  $\den{\Pbar_{i_0}} = \nubar$ by Lemma~\ref{cc}(e).  Thus
  $\den{\Pbar_{i_0}}\alphahidearrow\nu$, which was to be shown.

  \medskip

  (ii)~Next suppose that $\alpha_i\neq\tau$ or
  $P_i \nbrbisim \partial(\Ebar)$, for
  all $i \in I$, i.e., $\Ebar$ is rigid. We will show that there
  is a concrete process $\Cbar\in\calEconc$ such that
  $\AXpb \vdash \partial(\Ebar) \approx
  \partial(\Cbar)$. We proceed with induction on the number of indices
  $k\in I$ such that $\alpha_{k}\mathbin=\tau$ and $\den{P_{k}}$ can
  be written as $\mu_1\oplusr\mu_2$ with $r\mathbin\in (0,1)$ and
  $\delta(\Ebar)\brbisim\mu_1$. As $\Ebar$ is rigid, there are no
  such indices with~$r=1$.

  \smallskip

  \noindent
  \emph{Base case:} If there are no such~$k$, then $\bar C := \Ebar
  \in \calEconc$ by Lemma~\ref{cc}(a).

  \smallskip

  \noindent
  \emph{Induction step:} Let $i_0 \in I$ be an index such that
  $\alpha_{i_0} = \tau$ and $\den{P_{i_0}}$ can be written as $\mu_1
  \oplusr \mu_2$ with $r \in (0,1)$ and $\delta(\Ebar) \brbisim
  \mu_1$. First we show that it is possible to write $\den{P_{i_0}}$
  in such way while ensuring that $\mu_2$ itself cannot be written as
  $\nu_1 \probc{s} \nu_2$ with $s\in (0,1)$ and $\delta(\bar E)
  \brbisim \nu_1$. Namely, when $\mu_1 = \bigoplusjinJ \: p_j*F_j$,
  then by Lemma~\ref{cc}(g), using that $\Ebar \in \calE_r$,
  $\delta(\Ebar) = \bigoplusjinJ \: p_j*\xi_j$ with
  $\delta(F_j) \brbisim \xi_j$. So $\xi_j = \delta(\Ebar)$ for all
  $j \in J$. Now we split 
  $\den{P_{i_0}}$ in such a way between $\mu_1$ and~$\mu_2$, that for
  all $F \in \mathit{spt}(\mu_1)$ we have $F \brbisim \Ebar$ and for
  all $G \in \mathit{spt}(\mu_2)$ we have $G \nbrbisim \Ebar$. This
  ensures that $\mu_1 \brbisim \delta(\bar E)$, while $\mu_2$ cannot
  be written as $\nu_1 \probc{s} \nu_2$ with $s \in (0,1)$ and
  $\delta(\Ebar) \brbisim \nu_1$, since on the one hand
    $\mathit{spt}(\nu_1) \subseteq \mathit{spt}(\mu_2)$ and on the
    other hand $G \brbisim \Ebar$ for all $G \in
    \mathit{spt}(\nu_1)$.

   Let \plat{$H := \sum_{i\in I\setminus\{i_0\}} \: \alpha_i \pref
     \Pbar_i$}. \vspace{1pt} Then $\bar E = \tau \pref P_{i_0} + H$.
   By induction, there is a $\Cbar \in \calEconc$ with
   $\AXpb \vdash \partial(H) \approx \partial(\Cbar)$. So
   it suffices to show that $\AXpb \vdash \partial(\bar E)
   \mathbin\approx \partial(H)$. This time, this follows
   by axiom~\hyperlink{G}{G}, as soon as we obtain $\tau \pref P_{i_0}
   \mathbin\sqsubseteq \partial(H)$.
  By axioms P1--3, there are $P,P' \in \calPconc$, such that
  (a)~$\AXp\vdash P_{i_0} = P' \oplusr P$, (b)~for all $F\in
  \mathit{spt}(\den{P'})$ we have $F \brbisim \bar E$ and
  (c)~$\den{P}$ cannot be written as $\nu_1 \probc{s} \nu_2$ with $s
  \in (0,1)$ and $\delta(\bar E) \brbisim \nu_1$. Furthermore,
    by Lemma~\ref{lemma-equal-for-concrete-processes}, all $F \in
    \mathit{spt}(\den{P'})$ can be proven equal using~$\AXp$. Thus
  $\AXp\vdash P_{i_0} =\partial(F) \oplusr P$.

  We claim that $\delta(F)\arrow\tau \mu'$ for some~$\mu'$ with
    $\mu'\brbisim \den{P}$. This can be shown as follows: Since
  $\Ebar \arrow\tau \delta(F) \oplusr \den{P}$ and $\Ebar \brbisim F$,
  \vspace{1pt} we have $\delta(F) \tauhidearrow \mu$ for some $\mu
  \brbisim {\delta(F)\oplusr \den P}$. Applying the definition of
  $\tauhidearrow$, we obtain $\delta(F) \arrow\tau \mu'$ for
  some~$\mu'$ with $\mu=\partial(F) \probc{s} \mu'$ and $s \in
    [0,1]$.  Moreover, by Lemma~\ref{cc}(g), using that $\mu$ is
  concrete and thus rigid, $\mu = \eta_1\oplusr \eta_2$ with $\eta_1
  \brbisim \delta(F)$ and $\eta_2 \brbisim \den{P}$.  Since $F \in
  \calEconc$, no fraction of the distribution $\mu'$ can be branching
  bisimilar to~$\delta(F)$. Likewise, no fraction of~$\eta_2$ can be
  bisimilar to~$\delta(F)$, for then by Lemma~\ref{cc}(g) a fraction
  of~$\den{P}$ would be bisimilar to~$\delta(E)$, contradicting~(c)
  above.  Thus $s = r$ and $\mu' = \eta_2 \brbisim
  \den{P}$. This proves $\delta(F) \arrow\tau \mu'$ for some $\mu'$
  with $\mu' \brbisim \den{P}$ as claimed.

  Next, we show that $\Ebar \brbisim H \brbisim F$.
  Let $\calR$ be the smallest symmetric relation that contains $\brbisim$
  as well as $\{(\delta(H),\delta(\Ebar)), (\delta(H),\delta(F))\}$,
  and moreover satisfies $\mu_1 \mopcalR \nu_1 \land \mu_2 \mopcalR \nu_2
    \, \Longrightarrow \, (\mu_1 \oplusr \mu_2) \mopcalR (\nu_1
    \oplusr \nu_2)$. We show that $\calR$ is a branching
  bisimulation. Weak decomposability is fairly
  straightforward---compare the proof of Lemma~\ref{unrooted
    congruence}. Also trivially, it suffices to check the transfer
  condition for the pairs $(\delta(H),\delta(\Ebar))$,
  $(\delta(H),\delta(F))$, $(\delta(F),\delta(\bar H))$ and
  $(\delta(\Ebar),\delta(H))$.

  For $\delta(H) \mopcalR \delta(\Ebar)$, assume $\delta(H)
  \arrow\alpha \varrho$.  Then also $\delta(\Ebar) \arrow\alpha
  \varrho$.
  For $\delta(H)\mopcalR \delta(F)$, assume $\delta(H)
  \arrow\alpha \varrho$.  Then also $\delta(\Ebar) \arrow\alpha
  \varrho$, and since $E \brbisim F$ this move can be matched
  by~$\delta(F)$.

  For $\delta(F)\mopcalR\delta(H)$, assume $\delta(F)
  \arrow\alpha \varrho$.  Since $\Ebar\brbisim F$, we have
  $\delta(\Ebar) \alphahidearrow \eta$ for some $\eta \brbisim
  \varrho$.  Here we immediately applied Lemma~\ref{cc}(e), using that
  $\Ebar$ is rigid.  Since $F$ is concrete, no fraction of the
  distribution $\varrho$ can be branching bisimilar to $\delta(F)$, or
  to $\delta(E)$. Using that $\varrho$ is concrete, applying
  Lemma~\ref{cc}(g), this implies that no fraction of $\eta$ can be
  branching bisimilar to $\delta(E)$. Consequently, we in fact have
  $\delta(\Ebar) \alphaarrow \eta$.
  Moreover, no fraction of the transition $\delta(\Ebar)
  \alphaarrow \eta$ can stem from the transition $\Ebar \arrow\tau
  \den{P_{i_0}} = \delta(F) \oplusr \den{P}$.  It follows that
  $\delta(H) \alphaarrow \eta$ with $\eta \brbisim \varrho$.

  For $\delta(\Ebar) \mopcalR \delta(H)$, let $\delta(\Ebar)
  \arrow\alpha \varrho$.  In case $\alpha \neq \tau$, trivially
  $\delta(H) \arrow\alpha \varrho$ and we are done.  So assume~$\alpha
  = \tau$, First consider the case that $\varrho = \delta(F) \oplusr
  \den{P}$. For this case, we have shown above that
  $\delta(F) \arrow\tau \mu'$ for some~$\mu'$ with $\mu' \brbisim
  \den{P}$. Hence, by the previous case, $\delta(H) \arrow\tau \eta$
  for some~$\eta$ with $\eta \brbisim \mu' \brbisim
  \den{P}$. Consequently, $\delta(H) \tauhidearrow \delta(H)\oplusr
  \eta$. Furthermore, $\varrho = (\delta(F) \oplusr
  \den{P}) \mopcalR (\delta(H) \oplusr \eta)$.
  The more general case is that $\varrho = (\delta(F) \oplusr \den{P})
  \probc{s} \varrho'$ and $H \arrow\tau \varrho'$ for some $s \in
  [0,1]$. Now $\delta(H) \tauhidearrow (\delta(H) \oplusr \eta)
  \probc{s} \varrho'$ and $\varrho \mopcalR (\delta(H) \oplusr
  \eta) \probc{s} \varrho'$.

  Finally, we can show that  $\tau \pref P_{i_0} \mathbin\sqsubseteq
  \partial(H)$, where $\den{P_{i_0}} = \delta(F) \oplusr \den{P}$.
  This follows because $\delta(H) \tauhidearrow \delta(H) \oplusr
  \eta$ for some $\eta \brbisim \den{P}$, as argued above.
  Using that $\delta(F) \brbisim \delta(H)$, one has
  $\den{P_{i_0}}\brbisim\delta(H)\oplusr \eta$. 

  Case~(b). Suppose $P = \bigoplusiinI \: p_i * E_i$.\vspace{1pt} By case~(a) we
  can choose concrete~$\Pbar_i$ for each index $i \in I$, such that
  \plat{$\AXpb \vdash \partial(E_i) \approx \Pbar_i$}. Put $\Pbar = \bigoplusiinI \:
  p_i * \Pbar_i$. Then $\Pbar \in \calPconc$ by
  Lemma~\ref{cc}(b). By the definition of~$\approx$ it follows that
  \plat{$\AXpb \vdash {\alpha \pref P} = {\alpha \pref \Pbar}$} for
  arbitrary $\alpha \in \calA$.

  Case (c). By the proof of
  part (b) we can find concrete $\Qbar,\bar{R}\in\calPconc$ such that
  \plat{$\AXpb\vdash \alpha \pref
  Q=\alpha\pref \Qbar$} and \plat{$\AXpb\vdash \alpha\pref R = \alpha \pref
  \bar{R}$} for all $\alpha\in\calA$. Using soundness
  (Lemma~\ref{lemma-soundness}) this implies
  $\alpha\pref Q\rootedbrbisim\alpha\pref \Qbar$ and 
  $\alpha\pref R\rootedbrbisim \alpha\pref \bar{R}$,
  and hence $Q\brbisim\Qbar$ and $R\brbisim \bar{R}$.
  By Lemma \ref{lemma-equal-for-concrete-processes} we obtain from $\Qbar
  \brbisim \bar{R}$ that $\AXp\vdash \Qbar = \bar{R}$ and hence $\AXpb
  \vdash \alpha \pref \Qbar =\alpha \pref \bar{R}$ for $\alpha \in
  \calA$. This implies \plat{$\AXpb \vdash \alpha \pref Q =\alpha \pref R $}
  for $\alpha \in \calA$ and finishes the proof.
\end{proof}

\noindent
By now we have gathered all ingredients for showing that the
theory~$\AXpb$ is an equational characterization of rooted branching
probabilistic {\bisimilarity}.
It is noted that in the proof of the theorem we exploit
axiom~\hyperlink{C}{C}.

\begin{theorem}[$\!\AXpb$ sound and complete for $\rootedbrbisim$]\!
  \label{theorem-prob-completeness}
  For all non-deterministic processes $E, F \in \calE$ and all
  probabilistic processes $P,Q \in \calP$ it holds that $E
  \rootedbrbisim F$ iff $\AXpb \vdash {E = F}$ and $P \rootedbrbisim
  Q$ iff $\AXpb \vdash {P = Q}$. 
\end{theorem}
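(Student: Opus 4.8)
The plan is to dispose of the soundness direction (``$\Leftarrow$'') by citing Lemma~\ref{lemma-soundness}, and to concentrate on completeness. I would establish the two completeness claims in order: first the non-deterministic statement, $E \rootedbrbisim F \Rightarrow \AXpb \vdash E = F$, and then reduce the probabilistic statement to it, so that no circularity arises.

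For the non-deterministic claim I would follow the shape of the proof of Theorem~\ref{theorem-completeness}, writing $E = \sumiinI \alpha_i \pref P_i$ and $F = \sumjinJ \beta_j \pref Q_j$ with $P_i, Q_j \in \calP$. The new ingredient relative to the purely non-deterministic setting is that the root transfer condition of Definition~\ref{rooted-bpb} matches the basic transition $\delta(E) \arrow{\alpha_i} \den{P_i}$ by a possibly \emph{combined} transition $\delta(F) \arrow{\alpha_i} \nu_i$ with $\den{P_i} \brbisim \nu_i$. Since $\delta(F)$ is Dirac, any such $\nu_i$ is forced to be a convex combination $\bigoplus_{j \in J_i} \lambda_j \den{Q_j}$ over $J_i = \{ j \mid \beta_j = \alpha_i \}$ (which is nonempty by the transfer condition). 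Setting $R_i := \bigoplus_{j \in J_i} \lambda_j Q_j \in \calP$ gives $\den{R_i} = \nu_i \brbisim \den{P_i}$, and symmetrically I obtain, for every $j \in J$, a process $S_j \in \calP$ with $\den{S_j} \brbisim \den{Q_j}$ coming from a combined transition of $\delta(E)$.

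The heart of the argument is to realise these combined transitions syntactically by axiom~\hyperlink{C}{C}. I would first record the auxiliary fact that from summands $\alpha \pref Q_{j_1}, \ldots, \alpha \pref Q_{j_n}$ of a term one derives, by iterating~\hyperlink{C}{C} together with A1--A4, the equation $E = E + \alpha \pref (\bigoplus_k \lambda_k Q_{j_k})$ for any convex combination, because every finite convex combination is a nesting of binary $\oplus_r$'s. Applying this yields $\AXpb \vdash F = F + \sumiinI \alpha_i \pref R_i =: F'$ and $\AXpb \vdash E = E + \sumjinJ \beta_j \pref S_j =: E'$. Now each original summand $\alpha_i \pref P_i$ of $E'$ is matched by the added summand $\alpha_i \pref R_i$ of $F'$, and each added summand $\beta_j \pref S_j$ of $E'$ by the original summand $\beta_j \pref Q_j$ of $F'$; by Lemma~\ref{lemma-prob-technical}(c) these pairs are provably equal under the common prefix, so rewriting summand by summand (using A1--A4 to reorganise the sums) gives $\AXpb \vdash E = E' = F' = F$.

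For the probabilistic claim I would reduce to the non-deterministic one. Given $P \rootedbrbisim Q$, Lemma~\ref{lemma-form-of-probabilistic-branching-processes} supplies $p_i > 0$ and $F_i, G_i \in \calE$ with $F_i \rootedbrbisim G_i$ and $\AXp \vdash P = \bigoplusiinI p_i * F_i$, $\AXp \vdash Q = \bigoplusiinI p_i * G_i$. By the already-established non-deterministic case $\AXpb \vdash F_i = G_i$, hence $\AXpb \vdash \partial(F_i) = \partial(G_i)$ by the congruence rule of equational derivability, and replacing these components throughout gives $\AXpb \vdash \bigoplusiinI p_i * F_i = \bigoplusiinI p_i * G_i$; chaining this with the two $\AXp$-derivations (and $\AXp \subseteq \AXpb$) closes the case. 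I expect the main obstacle to be the bookkeeping around axiom~\hyperlink{C}{C}: one must verify both that the matching distributions really are convex combinations of the basic targets (which is exactly where Diracness of the source is used) and that an arbitrary finite convex combination is reachable by iterated binary applications of~\hyperlink{C}{C}; the remainder is a faithful transcription of the non-deterministic completeness proof.
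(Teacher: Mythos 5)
Your proposal is correct and follows essentially the same route as the paper's own proof: soundness via Lemma~\ref{lemma-soundness}, matching of each summand's transition by a combined transition of the Dirac distribution of the other process, realisation of these convex combinations syntactically via iterated applications of axiom~\hyperlink{C}{C}, summand-wise rewriting via Lemma~\ref{lemma-prob-technical}(c), and reduction of the probabilistic case to the non-deterministic one through Lemma~\ref{lemma-form-of-probabilistic-branching-processes}. The two bookkeeping points you flag (Diracness of the source forcing the target to be a convex combination of the $\den{Q_j}$, and binary nesting of $\oplus_r$ realising arbitrary finite convex combinations) are exactly the implicit steps in the paper's ``$|J|+|I|$ series of applications of axiom~C''.
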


\begin{proof}
  As we have settled the soundness of~$\AXpb$ in
  Lemma~\ref{lemma-soundness}, it remains to show that \plat{$\AXpb$}
  is complete.  So, let $E, F \in \calE$ such that $E \rootedbrbisim
  F$.  Suppose $E = \sumiinI \: \alpha_i \pref P_i$ and $F = \sumjinJ
  \: \beta_j \pref Q_j$ for suitable index sets~$I, J$,
  actions~$\alpha_i, \beta_j$, and probabilistic processes~$P_i, Q_j$.

  Since, for each~$i \in I$, $E \arrow{\alpha_i}\den{P_i}$ we have
  $\delta(F) \arrow{\alpha_i} \bigoplus_{j \in J_i} \: q_{ij} * Q_j$
  and $P_i \brbisim \bigoplus_{j \in J_i} \: q_{ij} * Q_j$ for some
  subset $J_i \subseteq J$ and suitable $q_{ij} \geqslant
  0$. \vspace{2pt} Similarly, there exist for $j \in J$ a subset $I_j
  \subseteq I$ and $p_{ij} \geqslant 0$ such that $\delta(E)
  \arrow{\beta_j} \bigoplus_{i \in I_j} \: p_{ij} * P_i$ and $Q_j
  \brbisim \bigoplus_{i \in I_j} \: p_{ij} * P_i$.  By $|J| + |I|$
  series of applications of axiom~C we obtain
  \begin{align}
    & \AXp \vdash E = 
    \sumiinI \: \alpha_i \pref P_i + 
    \sumjinJ \: \beta_j \pref ( \bigoplus_{i \in I_j} \: p_{ij} * P_i )
    \text{, and} \smallskip \label{eq4} \\
    & \AXp \vdash F = \sumjinJ \: \beta_j \pref Q_j + 
    \sumiinI \alpha_i \pref ( \bigoplus_{j \in J_i} \: q_{ij} * Q_j )
    \, . 
    \label{eq5}
  \end{align}
  Since $P_i \brbisim \bigoplus_{j \in J_i} \: q_{ij} * Q_j$ and $Q_j
  \brbisim \bigoplus_{i \in I_j} \: p_{ij} * P_i$ we obtain by
  Lemma~\ref{lemma-prob-technical}
  \begin{displaymath}
    \AXpb \vdash \alpha_i \pref P_i = 
    \alpha_i \pref \textstyle\bigoplus_{j \in J_i} \: q_{ij} * Q_j 
    \  \text{and} \ 
    \AXpb \vdash \beta_j \pref Q_j = 
    \beta_j \pref\textstyle\bigoplus_{i \in I_j} \: p_{ij} * P_i 
  \end{displaymath}
  for $i \in I$, $j \in J$. Combining this with Equations (\ref{eq4})
  and~(\ref{eq5}) yields $\AXpb \vdash E = F$.

  Now, let $P, Q \in \calP$ such that $P \rootedbrbisim Q$. By
  Lemma~\ref{lemma-form-of-probabilistic-branching-processes} we have
  \begin{displaymath}
    \AXp \vdash { P = \bigoplusiinI \, p_i * E_i }
    \qquad
    \AXp \vdash { Q = \bigoplusiinI \, p_i * F_i }
    \qquad
    \forall i \in I \colon E_i \rootedbrbisim F_i
  \end{displaymath}
  for a suitable index set~$I$, $p_i > 0$, $E_i, F_i \in \calE$,
  for~$i \in I$. By the conclusion of the first paragraph of this
  proof we have $\AXpb \vdash { E_i = F_i }$ for~$i \in
  I$. Hence $\AXpb \vdash { P = Q }$.
\end{proof}



\section{Concluding remarks}

\label{sec-concl}

We presented an axiomatization of rooted branching probabilistic
{\bisimilarity} and proved its soundness and completeness. In doing
so, we aimed to stay close to a straightforward completeness proof for
the axiomatization of rooted branching {\bisimilarity} for
non-deterministic processes that employed concrete processes, which is
also presented in this paper. In particular, the route via concrete
processes guided us to find the right formulation of the axioms
\hyperlink{BP}{BP} and~\hyperlink{G}{G} for branching {\bisimilarity}
in the probabilistic case.

Future work will include the study of the extension
of the setting of the present paper with a parallel
operator~\cite{DPP05:klop}. In particular a congruence result for the
parallel operator should be obtained, which for the mixed
non-deterministic and probabilistic setting can be challenging.
\hspace{-.4pt}Also the inclusion of recursion\,\cite{DP07:tcs,FG19:jlamp}
is a clear direction for further research. 

The present conditional form of axioms \hyperlink{BP}{BP}
and~\hyperlink{G}{G} is only semantically motivated. However, the
axiom~\hyperlink{G}{G} has a purely syntactic counterpart of the form%
\footnote{The extended abstract of this paper \cite{GGV19} also proposed
 a purely syntactic counterpart of axiom B; this turned out to be incorrect,
 however.}
\begin{displaymath}
  \begin{array}{l}
    \alpha \pref \big( \,
    \partial\dg( \, \sumiinI \: \tau \pref 
    \mg(P_i \probc{r_i} \partial \dor(E+\sumiinI \tau \pref P_i\dor)
    \mg)
    + E + \sumiinI \: \tau \pref P_i\dg) \oplusr Q \, \big) 
    \smallskip \\
     \hspace{134pt}
    \qquad {} = \;
    \alpha \pref ( \, \partial \dor( E + \sumiinI \: \tau \pref P_i \dor)
    \oplusr Q \, )\;.
  \end{array}
\end{displaymath}
Admittedly, this form is a bit complicated to work with. An alternative
approach could be to axiomatize the relation~$\sqsubseteq$, or perhaps to
introduce and axiomatize an auxiliary process operator~$+'$ such that
$E \sqsubseteq P$ can be translated into the condition $E +' P = P$
or similar.

Also, we want to develop a minimization algorithm for probabilistic
processes modulo branching probabilistic {\bisimilarity}.  Eisentraut
et al.\ propose in~\cite{EHKTZ13:qest} an algorithm for deciding
equivalence with respect to weak distribution {\bisimilarity} relying
on a state-based characterization, a result presently not available in
our setting.  Other work and proposals for weak {\bisimilarity}
include~\cite{CS02:concur,FHHT16:fac,TH15:ic}, but these do not fit
well with the installed base of our toolset~\cite{Bun19:tacas}. For
the case of strong probabilistic {\bisimilarity} without combined
transitions we recently developed in~\cite{GRV18:algorithms} an
algorithm improving upon the early results of~\cite{BEM00:jcss}. In
\cite{TH15:ic} a polynomial algorithm for Segala's probabilistic
branching {\bisimilarity}, which differs from our notion of
probabilistic branching {\bisimilarity}, is defined. We hope to arrive
at an efficient algorithm by combining ideas
from~\cite{Val10:fi,VF10:tacas,TH15:ic} and
of~\cite{GV90:icalp,GJKW17:tcl}.

\bibliographystyle{plain}
\bibliography{catuscia}

\end{document}